\title{Distributed Sparsest Cut via Eigenvalue Estimation\footnote{This research was funded in whole or in part by the Austrian Science Fund (FWF) \url{https://doi.org/10.55776/P36280} and \url{https://doi.org/10.55776/I6915}. For open access purposes, the author has applied a CC BY public copyright license to any author-accepted manuscript version arising from this submission.}}
\author{Yannic Maus \textcircled{r}\footnote{The author ordering was randomized using \url{https://www.aeaweb.org/journals/policies/random-author-order/} generator. It is requested that citations of this work list the authors separated by \texttt{\textbackslash textcircled\{r\}} instead of commas.}\hspace{.5em} Tijn de Vos}
\date{}
\begin{document}

  \begin{titlepage}
    \maketitle
    \thispagestyle{empty}
   
    \begin{abstract}
        We give new, improved bounds for approximating the sparsest cut value or in other words the conductance $\phi$ of a graph in the \CONGEST model. As our main result, we present an algorithm running in $O(\log^2 n/\phi)$ rounds in which every vertex outputs a value $\tilde \phi$ satisfying $\phi \le \tilde \phi \le \sqrt{2.01\phi}$. In most regimes, our algorithm improves significantly over the previously fastest algorithm for the problem [Chen, Meierhans, Probst Gutenberg, Saranurak; SODA~25]. Additionally, our result generalizes to $k$-way conductance. 

We obtain these results, by approximating the eigenvalues of the normalized Laplacian matrix $L:=I-\Deg^{-1/2}A\Deg^ {-1/2}$, where, $A$ is the adjacency matrix and $\Deg$ is the diagonal matrix with the weighted degrees on the diagonal. 
We show our algorithms are near-optimal by proving a lower bound for computing the smallest non-trivial eigenvalue of $L$, even in the stronger LOCAL model

The previous state of the art sparsest cut algorithm is in the technical realm of expander decompositions.
Our algorithms, on the other hand, are relatively simple and easy to implement. At the core, they rely on the well-known \emph{power method}, which comes down to repeatedly multiplying the Laplacian with a vector. This operation can be performed in a single round in the \CONGEST model.

All our algorithms apply to weighted, undirected graphs. Our lower bounds apply even in unweighted graphs. 
    \end{abstract}
\vfill
    \lowcotwo\ This is a low-co2 research paper: \lowcotwourl[\lowcotwoversion]. This research was developed, written, submitted and presented without the use of air travel.

    \paragraph*{Acknowledgements}
    We would like to thank Leo Wennmann, for the many hours of discussion on this topic. We would like Joachim Orthaber and Malte Baumecker, for thinking along. We would like to thank Faith Ellen, Sebastian Brandt, Alexandre Nolin, and Eva Rotenberg, for the initial discussions that were the inspiration for this research. We would like to thank Schloss Dagstuhl, since the research was initiated during the Dagstuhl seminar \textit{Graph Algorithms: Distributed Meets Dynamic}.
    \newpage
     \tableofcontents
    \thispagestyle{empty}    
     \newpage
      \listoftodos
  \end{titlepage}
  
  \newpage

\clearpage
\setcounter{page}{1}
\section{Introduction}
A graph cut is a partition of the vertices into two disjoint subsets, where the size of the cut is the total number of edges crossing between them, and the sparsity of the cut measures how small this cut is relative to total edge weight of the smaller side, i.e., in comparison to the sum of its vertex degrees.  
Graph cuts are a central object in theoretical computer science and algorithm design and have been extensively studied in various settings and models of computation, both explicitly studied, e.g., in \cite{Karger00,AroraRV09,SGP09,SarmaMPU15,KuhnM15,DoryEMN21,CKGS23}, and as central subroutines, e.g., in \cite{SpielmanT04,SaranurakW19,ChenMGS25}.

Cuts are especially important in communication networks because they represent the capacity for information to flow between different parts of the system. A sparse cut—where only a few edges connect two large subsets—can become a bottleneck, severely limiting bandwidth or throughput. A central question for guaranteeing the performance of network algorithms is to determine the minimum sparsity of cuts in a given communication network. For example the runtime of the famous PUSH-PULL protocol depends on the sparsest cut value~\cite{G11}. 
More generally, graphs without sparse cuts support fast mixing of random walks~\cite{SarmaNPT13}, and enable efficient routing due to their strong connectivity~\cite{GhaffariKS17,GhaffariL18}. 
This motivates the use of expander decompositions, which partition a graph into \emph{high-conductance} clusters (i.e, no sparse internal cuts), and sparse inter-cluster connections.  
Such decompositions are powerful algorithmic tools: thanks to efficient routing, many algorithms run significantly faster on high-conductance subgraphs, allowing problems to be solved locally within clusters before handling the sparse inter-cluster parts. This strategy has been used in many applications, see, e.g., \cite{ChangPZ19,ChangS19,DagaHNS19,EdenFFKO22,Censor-HillelGL20}.
The distributed expander decompositions~\cite{ChangPZ19,ChangS19,ChangS20,ChenMGS25} are all constructed by repeatedly computing sparsest cuts. 


\medskip

In this work, we present simple, time and bandwidth efficient distributed algorithms to approximate the sparsest cut value. Our randomized algorithms run in the classic \CONGEST model of distributed computing where a communication network is abstracted as an $n$-vertex graph with vertices being computing entities and edges serving as bandwidth-limited communication links. Communication happens in synchronous rounds, in each of which each vertex can send a $O(\log n)$-bit message to each of its neighbors   and the complexity measure is the number of communication rounds until the vertices have computed their outputs, e.g., until they have output an approximate value of the sparsest cut.

Formally, the  \emph{sparsity} of a set (or cut) $S\subseteq V$ in a graph $G=(V,E)$ is defined as as 
\begin{equation*}
    \phi(S) := \frac{w(E(S,V\setminus S))}{\min\{\Vol(S),\Vol(V\setminus S)\}}~,  
\end{equation*}
where $w(T):= \sum_{e\in T}w(e)$ for any edge set $T$ and the volume $\Vol(S)$ of a set of vertices $S$ is the sum of the weighted degrees of the vertices in the set: $\sum_{v\in S}\sum_{uv\in E}w(uv)$. The value $\phi(S)$ is also known as the \emph{conductance} of the set $S$.  The \emph{sparsest cut $\phi(G)\in [0,1]$} of a graph $G$ is the minimum sparsity of a subset $S\subseteq V$. This is also known as the \emph{conductance} of the graph. 

The relative notion $\phi(G)$ 
can be viewed as a normalized quantification of cut size. 
In contrast to computing minimum cuts, which quantify the absolute size of a cut, determining the sparsest cut or the conductance is not only NP-complete~\cite{MatulaS90} but also APX-hard under the unique games conjecture~\cite{ChawlaKKRS06}. Consequently, although no polynomial-time algorithm can approximate the optimum arbitrarily well (unless $P = NP$), a variety of approximation algorithms with provable---albeit weaker---guarantees have been developed across different computational models.
Next, we present our results and compare them with the state of the art in the field. As the sparsest cut and the conductance have been extensively studied we provide further background and related work in \Cref{ssec:relatedWork}.

\subsection{Our Contribution on Sparsest Cut Approximation}
Our first result is an efficient \CONGEST algorithm to approximate the sparsest cut. 

\begin{restatable}{theorem}{CONGESTSparsestCut}
\label{thm:CONGESTSparsestCut}
     There is a randomized \CONGEST algorithm that, given an undirected weighted $n$-node graph with sparsest cut $\phi$, w.h.p.\ gives an approximation of the sparsest cut in $O(\log^2 n/\phi)$ rounds. In particular, every vertex outputs $\tilde \phi$ such that $\phi \le \tilde \phi \le \sqrt{2.01\phi }$. 
\end{restatable}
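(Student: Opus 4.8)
The plan is to go through the normalized Laplacian $L = I - \Deg^{-1/2} A \Deg^{-1/2}$ and its smallest nontrivial eigenvalue $\lambda_2$, using Cheeger's inequality $\lambda_2/2 \le \phi(G) \le \sqrt{2\lambda_2}$. If we can produce a value $\tilde\lambda$ with $\tilde\lambda \ge \lambda_2$ and $\tilde\lambda \le (1+o(1))\lambda_2$ at every vertex, then setting $\tilde\phi := \sqrt{2.01\,\tilde\lambda}$ (or a suitable variant) gives $\tilde\phi \ge \sqrt{2\lambda_2} \ge \phi$ by Cheeger's easy direction applied appropriately, and $\tilde\phi \le \sqrt{2.01(1+o(1))\lambda_2} \le \sqrt{2.01(1+o(1)) \cdot 2\phi}$ — wait, I need the chain to come out to $\sqrt{2.01\phi}$, so I must be more careful: since $\lambda_2 \le 2\phi$ (easy direction of Cheeger) and $\phi \le \sqrt{2\lambda_2}$ (hard direction), outputting $\tilde\phi := \sqrt{2\tilde\lambda}$ with $\tilde\lambda \le \lambda_2(1+\epsilon)$ yields $\tilde\phi \le \sqrt{2\lambda_2(1+\epsilon)} = \sqrt{2(1+\epsilon)}\sqrt{\lambda_2}$, and combined with $\phi \le \sqrt{2\lambda_2}$ this is not yet $\le \sqrt{2.01\phi}$. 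So the real work is to convert the spectral estimate into a genuine cut whose sparsity we can measure, i.e., to run the Cheeger rounding (sweep cut) on the approximate eigenvector and \emph{output the sparsity of the cut actually found}; then $\tilde\phi \le \sqrt{2\lambda_2(1+\epsilon)}$ via the constructive Cheeger bound, and $\tilde\phi \ge \phi$ trivially since it is the sparsity of an actual cut. Choosing $\epsilon$ a small constant (absorbing the $1.005$ slack into $2.01 = 2 \cdot 1.005$) closes it.

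The key steps, in order: (1) Estimate $\lambda_2$ via the \textbf{power method} on $M := I - L = \Deg^{-1/2} A \Deg^{-1/2}$, whose top eigenvector is $\Deg^{1/2}\mathbf{1}$ (known explicitly, so we can deflate it), and whose second eigenvalue is $1-\lambda_2$. Starting from a random vector, project out $\Deg^{1/2}\mathbf{1}$, and apply $M$ for $T = \Theta(\log n / \lambda_2)$ iterations; each application of $M$ is one \CONGEST round since it is a weighted-neighbor aggregation, and $\Deg$, degrees, norms, and inner products are computable by $O(\mathrm{diam})$-round aggregation — but since we do not know $\mathrm{diam}$ or $\lambda_2$ in advance we run with geometrically increasing guesses $t = 1, 2, 4, \dots$ for the iteration count, i.e., a standard doubling/guess-and-check schedule, stopping when the Rayleigh quotient stabilizes; the total round count is dominated by the last (correct) phase, $O(\log^2 n/\phi)$ after accounting for the $O(\log n)$ factor from amplifying the random-start success probability and the global-aggregation overhead which is $O(\log n/\phi)$ since a graph with conductance $\phi$ has diameter $O(\log n/\phi)$. (2) From the resulting approximate eigenvector $x$ (guaranteed, w.h.p., to have Rayleigh quotient $\frac{x^\top L x}{x^\top x} \le (1+\epsilon)\lambda_2$ by standard power-method analysis), perform the \textbf{sweep cut}: sort vertices by $(\Deg^{-1/2} x)_v$, consider the $n-1$ prefix cuts, and by the constructive Cheeger inequality one of them has sparsity $\le \sqrt{2 R(x)} \le \sqrt{2(1+\epsilon)\lambda_2} \le \sqrt{2.01\,\phi}$ using $\lambda_2 \le 2\phi$. (3) Compute the sparsity of each prefix cut and take the minimum — this is again a few rounds of aggregation along a BFS tree (prefix sums of volumes and of crossing edges), so $O(\mathrm{diam}) = O(\log n/\phi)$ rounds, and broadcast the winning value $\tilde\phi$ to all vertices.

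The main obstacle I expect is the \textbf{interaction between not knowing $\phi$ and the bandwidth constraint}: the power method needs $\Theta(\log n/\lambda_2)$ iterations but $\lambda_2$ (equivalently $\phi$) is unknown, so we must run a guess-and-double schedule and have a \emph{certifiable} stopping criterion — here the fact that once we have run enough iterations the sweep cut produces a cut of sparsity $\le \sqrt{2.01\,\phi}$ gives us exactly such a certificate, but verifying "have I converged?" requires comparing the current best cut against a lower bound on $\phi$, and the only lower bound we have is $\lambda_2/2$, itself only estimated from the same Rayleigh quotient; making this loop terminate at the right time without overshooting the round budget needs care. A secondary subtlety is numerical/bit-precision: the power method vector entries must be representable in $O(\log n)$ bits per round, which requires periodic renormalization and a bound showing the Rayleigh quotient degrades by at most a $(1+o(1))$ factor under rounding — standard but needs to be stated. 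Finally, the $k$-way generalization mentioned in the abstract would follow by the same template using higher-order Cheeger inequalities and estimating $\lambda_k$ via a block/subspace power iteration, but for this theorem the two-way case above suffices.
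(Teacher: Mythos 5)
Your overall route---deflated power iteration on a shift of the normalized Laplacian, then Cheeger---is the paper's route, but two of your concrete choices would fail. First, the matrix: you iterate with $M=I-L=\Deg^{-1/2}A\Deg^{-1/2}$, whose spectrum lies in $[-1,1]$. The power method converges to the eigenvalue of largest \emph{magnitude}, and after deflating $\Deg^{1/2}\mathds{1}$ the dominant remaining eigenvalue is $\max\{1-\lambda_2,\,\lambda_n-1\}$; on a (near-)bipartite graph $\lambda_n$ is (near) $2$, so the iteration locks onto the $\lambda_n$-eigenvector and your estimate of $\lambda_2$ is wrong. The paper uses $M=2I-L$, which is PSD with eigenvalues $2-\lambda_i$, precisely so that the second-largest eigenvalue is the one being sought. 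Second, the approximation target and iteration count: you aim for a multiplicative $(1+\eps)$-approximation of $\lambda_2$ with $T=\Theta(\log n/\lambda_2)$ iterations. A multiplicative guarantee on $\lambda_2$ forces the relative error on $\mu_{n-1}=2-\lambda_2$ to be $O(\eps\lambda_2)$, i.e.\ $\Theta(\log n/(\eps\lambda_2))$ iterations; since Cheeger only gives $\lambda_2\ge\phi^2/2$, this can be $\Theta(\log n/\phi^2)$, which (after the $O(\log n)$ amplification) exceeds the $O(\log^2 n/\phi)$ budget whenever $\phi=o(1/\log n)$. The paper's key move is that an \emph{additive} $\delta=\Theta(\phi)$ approximation of $\lambda_2$ suffices---the square root in Cheeger converts an additive $0.005\phi$ error into the $\sqrt{2.01\phi}$ bound---and this is just a constant-relative-error approximation of $2-\lambda_2=\Theta(1)$, costing only $O(\log n/\phi)$ iterations (Theorem~\ref{thm:CONGEST_smallest} with $\eps=\Theta(\phi)$ and $\lceil\lambda_2/\phi\rceil=O(1)$).

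On the sweep cut: the paper does not round the eigenvector into a cut at all; it outputs the number $\sqrt{2\tilde\lambda_2}$ and bounds it on both sides by Cheeger's Inequality, and indeed it explicitly contrasts itself with prior works that also return a witnessing cut. Your detour is motivated by a worry about the constant, but your own final chain ($\sqrt{2(1+\eps)\lambda_2}\le\sqrt{2.01\,\phi}$ ``using $\lambda_2\le 2\phi$'') lands on exactly the inequality you set out to avoid, so the sweep cut buys nothing for the stated bound---while adding a subroutine (minimizing sparsity over all $n-1$ prefix cuts of a globally sorted vector) that is far from ``a few rounds of aggregation'' in \CONGEST. Finally, a smaller omission: the bit-precision issue for $\lambda_2$ is not only renormalization; rounding reintroduces a component along the deflated eigenvector that grows like $(\mu_n/\mu_{n-1})^{\ell}$, forcing a $\Theta(D)$-round re-projection every $\Theta(\log n/\lambda_2)$ iterations---this is the source of the $\lceil\lambda_2/\eps\rceil$ term in Theorem~\ref{thm:CONGEST_smallest}, and a single projection at the start does not address it.
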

In general, there are two approaches for developing  approximation algorithms for the sparsest cut, 
the \emph{flow approach}, that we detail in \Cref{ssec:relatedWork}, and the \emph{spectral approach}, which we  use to prove \Cref{thm:CONGESTSparsestCut}. On a high-level, one computes the spectrum (or parts of it) of the graph, i.e., its eigenvalues, and then uses Cheeger's Inequality\footnote{Cheeger's Inequality~\cite{AlonM85,Alon86,SinclairJ89} relates the second-smallest eigenvalue $\lambda_2$ of the normalized Laplacian to the graph's conductance $\phi$, showing that $\tfrac{1}{2} \lambda_2 \le \phi \le \sqrt{2\lambda_2}$. See \Cref{thm:Cheeger} for the formal statement. }
to derive an approximation of $\phi(G)$ from the eigenvalues. Due to the limitations imposed by Cheeger's Inequality, the best possible approximation achievable through spectral methods is $\sqrt{2\phi}$. Our technical result is slightly stronger than \Cref{thm:CONGESTSparsestCut}. In fact, we can attain this theoretical bound -- up to an arbitrarily small additive error -- in $O(\log^2 n /\eps^2+\log n/\eps^4)$ rounds. See the proof of \Cref{thm:CONGESTSparsestCut} for the details. 
 


For unweighted graphs, we compare our result against the previously best algorithm by Chen, Meierhans, Probst Gutenberg, and Saranurak that provides a $O(\phi \log^2n)$-approximation of $\phi$ \cite{ChenMGS25}\footnote{They do not state this result explicitly, but they implement the cut-matching game from \cite{SaranurakW19} (see Appendix A) in the distributed setting. As stated in \cite{SaranurakW19}, this cut-matching game gives such an approximation.} Our approximation is more precise than \cite{ChenMGS25} when $\phi =\Omega(1/\log^4 n)$, and is faster in all cases: \cite{ChenMGS25} takes $O(\poly \log n/\phi^4)$ rounds. Upon inspection of their $\poly \log n$, they have a total round complexity of at least $O(\log^{20} n)$ in the regime where their approximation is more precise. 

Prior to \cite{ChenMGS25}, Das Sarma, Molla, and Pandurangan~\cite{SarmaMP15} gave a different algorithm to approximate the sparsest cut, which also works for weighted graphs. They computed an  $O(\sqrt{\phi}\poly\log n)$-approximation in $O(\tfrac{1}{b}(n+\tfrac{1}{\phi})\log^2 n)$ rounds if the graph has a sparse cut of \emph{balance}~$b$, i.e., both sides $S$ of the cut satisfy $\Vol(S)\ge 2b|E|$. 
Building on the methods of \cite{SarmaMP15}, Kuhn and Molla improved the approximation ratio to $O(\sqrt{\phi \log n})$ and the runtime to 
 $O(D+\tfrac{\log^2 n}{b\phi})$ \cite{KuhnM15}. For balanced cuts $b=\Omega(1)$, we improve their approximation ratio by a factor $\sqrt{\log n}$, while matching their running time. For unbalanced cuts, also their algorithm can take as much as $\Omega(n^2)$ time, 
 so we do not only improve the approximation ratio, but also the running time exponentially. 

 We note that \cite{ChenMGS25,SarmaMP15} and \cite{KuhnM15} do not only provide an approximate cut value, but also give a corresponding cut.

\paragraph{$k$-Way Sparsest Cut.}
There exists a natural generalization of the sparsest cut to \emph{multi-cuts}~\cite{LeeGT12}: the \emph{$k$-way sparsest cut} or \emph{$k$-way conductance} $\phi_k$ is defined follows:
    \begin{equation*}
        \phi_k := \min_{V_1, V_2, \dots, V_k\subset V} \max_{i=1, \dots, k} \frac{|E(V_i,V\setminus V_i|}{\Vol(V_i)},
    \end{equation*}
    where the minimum is taken over disjoint, non-empty $V_i$.
Note that $\phi_2=\phi$. The $k$-way conductance captures how well-connected a graph is across multiple parts; it can describe multi-partition bottlenecks, which standard conductance may miss. This is in particular relevant for real-world networks that usually consists of multiple clusters.

We prove the following theorem for approximating the $k$-way sparsest cut. 
\begin{restatable}{theorem}{CONGESTkWaySparsestCut}
\label{thm:CONGESTkWaySparsestCut}
     There is a randomized \CONGEST algorithm that, given a constant $k\geq 2$ and undirected weighted graph with $k$-way sparsest cut $\phi_k$, w.h.p.\ gives an approximation of the $k$-way sparsest cut in $O(\log^2 n\poly(\phi_k^{-1}))$ rounds. In particular, every vertex outputs $\tilde \phi_k$ such that $\phi_k \le \tilde \phi_k \le O(\sqrt{\phi_k})$. 
\end{restatable}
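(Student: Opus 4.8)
The plan is to follow the same spectral recipe used for \Cref{thm:CONGESTSparsestCut}, but now invoking the \emph{higher-order Cheeger inequality} of Lee, Gharan, and Trevisan~\cite{LeeGT12}, which relates the $k$-th smallest eigenvalue $\lambda_k$ of the normalized Laplacian $L$ to $\phi_k$ via $\tfrac{1}{2}\lambda_k \le \phi_k \le O(k^2)\sqrt{\lambda_k}$. For constant $k$ the $O(k^2)$ factor is absorbed into the $O(\cdot)$, so it suffices to produce, at every vertex, an estimate $\tilde\lambda_k$ with $\lambda_k \le \tilde\lambda_k \le (1+\eps)\lambda_k + (\text{small additive error})$, and then output $\tilde\phi_k := $ the value obtained by plugging $\tilde\lambda_k$ into the two sides of the higher-order Cheeger bound. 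The lower bound $\phi_k \ge \tfrac12\lambda_k$ is what gives $\phi_k \le \tilde\phi_k$ once we set things up so that $\tilde\phi_k \ge \tfrac12\tilde\lambda_k \ge \tfrac12\lambda_k$; the upper bound $\phi_k \le O(\sqrt{\lambda_k})$ combined with $\tilde\lambda_k = O(\lambda_k)$ (which holds provided the additive error is $O(\lambda_k)$, and we can always afford $\lambda_k \ge \phi_k^2/\mathrm{poly}$ by Cheeger applied to each level) gives $\tilde\phi_k \le O(\sqrt{\phi_k})$.

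The main new ingredient compared to the $k=2$ case is estimating not just the second eigenvalue but the first $k$ eigenvalues of $L$ (equivalently, the top $k$ eigenvalues of the walk-type matrix $\Deg^{-1/2}A\Deg^{-1/2}$) in \CONGEST. The standard tool is \emph{block} (or \emph{subspace/orthogonal iteration}) power method: maintain $k$ random vectors, repeatedly multiply each by $L$ (one \CONGEST round per multiplication, since $L$ has the communication graph as its support pattern, as already observed in the excerpt), and periodically re-orthonormalize them. For constant $k$, re-orthonormalization requires only computing $O(k^2)$ inner products $\langle x_i, x_j\rangle$ and norms; each inner product is a single sum over vertices, computable by aggregation on a BFS tree in $O(D)$ rounds, and $D = O(\log n/\phi)$ anyway in the relevant regime (a graph with conductance $\phi$ has diameter $O(\log n/\phi)$), so this does not dominate. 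After $O(\log n /(\eps\,\lambda_k))$-ish iterations with a gap-independent analysis (using that we only want an eigenvalue estimate, not eigenvector convergence — the Rayleigh quotients of the orthonormalized block converge to the top-$k$ eigenvalues even without a spectral gap, at a rate governed by $\eps$ and the target eigenvalue), every vertex can locally compute the $k\times k$ Gram / Rayleigh matrix $M$ with $M_{ij} = \langle x_i, L x_j\rangle$, whose smallest eigenvalue is the desired $\tilde\lambda_k$ estimate; diagonalizing a constant-size matrix is free (done identically at every vertex from the same aggregated entries). I would also need to handle the $O(1/\eps^4)$-type lower-order term exactly as in the proof of \Cref{thm:CONGESTSparsestCut}, giving a total of $O(\log^2 n \cdot \mathrm{poly}(\phi_k^{-1}))$ rounds after folding in $\lambda_k \ge \Omega(\phi_k^2)$.

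Concretely the steps are: (1) state the higher-order Cheeger inequality and fix $\eps$ as a small constant; (2) show a \CONGEST block power method computes, w.h.p., a $(1+\eps)$-multiplicative-plus-tiny-additive estimate of each of $\lambda_2,\dots,\lambda_k$ in $O(\log^2 n/(\eps^2\lambda_k) + \log n/(\eps^4 \lambda_k))$ rounds — this is the lemma doing the real work and should be stated separately; (3) note $D = O(\log n/\phi_k)$ so that the $O(D)$ cost of each aggregation round is subsumed; (4) each vertex plugs $\tilde\lambda_k$ into $\tfrac12 \tilde\lambda_k \le \phi_k$ (for correctness of the lower side) and $\phi_k \le O(k^2)\sqrt{\tilde\lambda_k/(1-\eps)}=O(\sqrt{\phi_k})$ (upper side, using $\tilde\lambda_k = O(\lambda_k)$ and $\lambda_k \le 2\phi_k$), and outputs the resulting interval's representative $\tilde\phi_k$; (5) bound $\lambda_k$ from below by $\phi_k^2/8$ via Cheeger on the level that realizes $\phi_k$, turning $1/\lambda_k$ into $\mathrm{poly}(\phi_k^{-1})$ and yielding the claimed round complexity.

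The part I expect to be the real obstacle is item (2): proving that the block power method in \CONGEST genuinely converges to the top $k$ eigenvalues \emph{without} assuming any spectral gap between $\lambda_k$ and $\lambda_{k+1}$, and doing the numerical-stability bookkeeping for the repeated orthonormalizations of vectors whose entries are only communicated with $O(\log n)$ bits. The gaplessness issue is the subtler of the two: with a gap, subspace iteration converges geometrically; without one, one must argue (as in the classical analysis behind spectral sparsification / Cheeger rounding) that the orthonormalized block's Rayleigh quotients still approximate $\lambda_2,\dots,\lambda_k$ from above to within $(1+\eps)$ after enough steps, by comparing against a hypothetical block that has been rotated into the true invariant subspace and controlling the leakage through a careful choice of iteration count depending on $\eps$ and $\lambda_k$ rather than on any gap. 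The bit-precision issue is handled the same way as for $k=2$: round all communicated vector entries to $O(\log n)$ bits, track the accumulated additive error, and absorb it into the small additive slack already present in the Cheeger bound — but it must be redone for the $k$-vector block and for the $O(k^2)$ inner products.
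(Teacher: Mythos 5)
Your top-level reduction---estimate $\lambda_k$ and convert via the higher-order Cheeger inequality of \cite{LeeGT12}---is exactly the paper's route, and your step (4)/(5) bookkeeping (outputting $\Theta(k^2)\sqrt{\tilde\lambda_k}$, using $\lambda_k\le 2\phi_k$ for the upper side and $\lambda_k=\Omega(\phi_k^2)$ to turn $1/\lambda_k$ into $\mathrm{poly}(\phi_k^{-1})$) matches what the paper does, except that the paper works with an additive guarantee $\lambda_k\le\tilde\lambda_k\le\lambda_k+\eps$ and sets $\eps=\phi_k$. One direction in your first paragraph is backwards: since $\tfrac12\lambda_k\le\phi_k$, arranging $\tilde\phi_k\ge\tfrac12\tilde\lambda_k\ge\tfrac12\lambda_k$ does \emph{not} give $\tilde\phi_k\ge\phi_k$; the guarantee $\tilde\phi_k\ge\phi_k$ must come from the \emph{upper} side of Cheeger, $\phi_k\le O(k^2)\sqrt{\lambda_k}\le O(k^2)\sqrt{\tilde\lambda_k}=\tilde\phi_k$, using that $\tilde\lambda_k$ overestimates $\lambda_k$. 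Your concrete step (4) does land on the right output formula, so this is a local fix, but as written the justification of the lower side is wrong.

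The genuine gap is the one you flag yourself: item (2), the \CONGEST estimation of $\lambda_1,\dots,\lambda_k$, which is the entire technical content of the theorem and is asserted rather than proven. The paper does not use block/subspace iteration. It computes the eigenvalues \emph{sequentially by deflation} (\Cref{lm:power_method_for_k}, \Cref{thm:congest_lambda_k}): having approximate eigenpairs $(\tilde\mu_i,\tilde v_i)$ of $M=2I-L$ for $i>n-k$, it runs the single-vector power method on $M_{\mathrm{approx}}=M-\sum_{i>n-k}\tilde\mu_i\tilde v_i\tilde v_i^T$ and shows, via a perturbation bound on rank-one projections (\Cref{lm:projections}) combined with Weyl's inequality (\Cref{thm:weyl}), that the spectrum of $M_{\mathrm{approx}}$ is within $\eps\mu_{n-k}$ of the ideal deflated matrix. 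Crucially, the ``no spectral gap'' issue you identify as the subtle obstacle is dispatched in the paper by \Cref{lm:spectrum_gaps}: one may assume without loss of generality that all eigenvalues lie on the geometric grid $\{(1-\eps)^j\mu_n\}$, which manufactures a relative gap of $\eps$ and lets the standard power-method and inner-product arguments (\Cref{lm:high_inner_prod}) go through at the cost of an extra factor of two in $\eps$. Your proposed gapless analysis of orthogonal iteration, together with the bit-precision bookkeeping for repeated re-orthonormalizations of a $k$-vector block, would have to be carried out from scratch; it is plausible that such an analysis exists, but nothing in your proposal supplies it, and the deflation route also sidesteps the re-orthonormalization problem entirely by only ever projecting against \emph{fixed}, previously computed vectors (including the exactly known $v_n=\sqrt{\deg}$). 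Until step (2) is proved in one form or the other, the argument is incomplete.
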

To the best of our knowledge, we are the first to study the $k$-way conductance in the distributed setting. 

\paragraph{Lower bounds on approximating sparse cuts.}
Generally, it is known that approximating the sparsest cut takes $\Omega(D)$ rounds. In particular, Das Sarma, Molla, and Pandurangan~\cite{SarmaMP15} show a $\tilde \Omega(D+\sqrt n)$ \CONGEST lower bound for any multiplicative approximation in weighted graphs. They  reduce the spanning connected subgraph problem to a sparsest cut instance with edge weights that are either $0$ or $1$. This means that the lower bound shows that it is hard to decide whether $\phi=0$ holds or not. For $\phi$ bounded away from zero, there is no such lower bound.  Our upper bound of $O(\log^2n/\phi)$ rounds, \Cref{thm:CONGESTSparsestCut}, indeed takes $D$ rounds, since $D=O(\log n/\phi)$. However, for $\phi =\omega(1/\sqrt{n})$, the upper bound does not have a $\sqrt{n}$-term. 

We prove a $\Omega(D)$ lower bound that holds even for unweighted graphs and in the \LOCAL model. \LOCAL is identical to \CONGEST, except that there is no bound on the message size. 
Even for a class of fixed diameter, any algorithm needs $\Omega(D)$ time for a $O(n)$-approximation.

 \begin{restatable}{theorem}{SparsestCutMultLB}\label{thm:Sparsest_cut_mult_LB}
    Let $D\le n/2$ be a parameter and consider the class of graphs with diameter $D$. 
    An algorithm that with probability $>\tfrac{1}{2}$ satisfies that every vertex~$v$ of a graph $G$ outputs a value~$\tilde \phi_v\ge \phi(G)$, for sparsest cut $\phi(G)$, and at least one vertex has $\phi_v\le \Theta(n\cdot \phi)$ needs $\Omega(D)$ rounds in \LOCAL. 
\end{restatable}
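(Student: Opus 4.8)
The plan is to use a standard indistinguishability / information-propagation argument. I want to construct two families of unweighted graphs that look identical in a radius-$o(D)$ neighborhood of some vertex, but have wildly different sparsest cut values — one with $\phi(G)$ "large" (on the order of $1/D$ or a constant over $D$) and one with $\phi(G')$ "small" (on the order of $1/(nD)$ or $1/n^2$), so that the promised output interval $[\phi(G), \Theta(n\phi(G))]$ for the two graphs is disjoint. Since a $k$-round \LOCAL algorithm's output at a vertex $v$ is a (possibly randomized) function only of the topology of the ball of radius $k$ around $v$, if some vertex $v$ has the same radius-$k$ view in $G$ and $G'$ and $k = o(D)$, it must produce distributions on outputs that are $>1/2$-correct for both instances, which is impossible once the two target intervals are disjoint. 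A clean way to get the radius-$k$ views to agree is a path/cycle gadget: take a long path (or two long paths) of length $\Theta(D)$ whose midpoint has, in both constructions, exactly the same local picture, and attach the "distinguishing" structure only at the far ends, at distance $\Theta(D)$ from the midpoint.

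\medskip

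Concretely, first I would fix $D \le n/2$ and build the hard instances on $n$ vertices as follows. For the high-conductance instance $G$, I would take something like a path of $\Theta(D)$ vertices with the remaining $\Theta(n)$ vertices packed into a clique (or a constant-degree expander) glued to one endpoint — this graph has diameter $\Theta(D)$ and sparsest cut $\phi(G) = \Theta(1/D)$ (the cheapest cut is a single edge on the path separating a constant fraction of the volume, since the clique side has volume $\Theta(n^2)$; one has to be a bit careful and actually the sparsest cut of a path is $\Theta(1/n)$ over the path vertices but $\Theta(1/(\text{vol}))$-normalized — I need to engineer volumes so this instance's $\phi$ is $\Theta(1/D)$, e.g. by making the "blob" at the end heavy enough that cutting one path edge near the blob gives sparsity $\approx 1/D$). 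For the low-conductance instance $G'$, I would instead attach a second heavy blob at the \emph{other} end of the path, so that now the path is a genuine bottleneck between two balanced heavy sides and $\phi(G') = \Theta(1/\text{vol}(V)) = \Theta(1/n^2)$, or at any rate small enough that $\Theta(n \cdot \phi(G')) < \phi(G)$. Both graphs are unweighted, have diameter $\Theta(D)$, and — crucially — the vertex $v$ sitting at the middle of the path sees an identical ball of radius $\Theta(D) - O(1)$ in both. I would then scale $D$ by a constant so that "diameter exactly $D$" is met, or phrase the theorem's class as "diameter $\Theta(D)$" consistently with the statement.

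\medskip

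With the construction in hand, the argument is short: suppose an algorithm runs in $r = o(D)$ rounds and is correct with probability $> 1/2$ on every graph in the class. Run it on $G$ and on $G'$. The midpoint vertex $v$ has the same radius-$r$ view in both (for $r$ below the threshold $\Theta(D)$), so the distribution of $v$'s output is the same random variable in both executions. On $G$, correctness forces $\tilde\phi_v \ge \phi(G) = \Theta(1/D)$ with probability $>1/2$; but on $G'$, correctness forces \emph{every} vertex, in particular $v$, to output $\tilde\phi_v \ge \phi(G')$ and additionally forces \emph{some} vertex to output $\le \Theta(n\phi(G'))$ — hmm, the "some vertex" clause is about an arbitrary vertex, not $v$, so I need to be slightly more careful: the clean contradiction comes from the upper-bound side only if I can pin it to $v$. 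The fix is to make $v$ itself the vertex for which the bound must be witnessed, or — simpler — to use the lower-bound clause: on $G'$ every vertex outputs $\ge \phi(G')$ always is not restrictive; instead I use that on $G'$ the algorithm must, with probability $>1/2$, have \emph{no} vertex output more than... no. Let me instead contradict via the lower clause on $G$: on $G$, some vertex must output $\le \Theta(n \phi(G)) = \Theta(n/D)$, which is $\ge 1$ and hence vacuous since $\tilde\phi \le 1$ always. So the real contradiction must be: on $G'$, by correctness every vertex outputs $\ge \phi(G') = \Theta(1/n^2)$ (no info) but the \emph{intended} reading is that a valid algorithm certifies a value close to $\phi$; the actual useful constraint is the \emph{upper} clause applied to $G$ where $\phi(G)$ is large. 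I therefore reverse roles: let $G$ be the \emph{low}-conductance graph and $G'$ the \emph{high}-conductance graph, and use the clause "some vertex outputs $\le \Theta(n\phi)$" on the low-conductance $G$ while "every vertex outputs $\ge \phi$" on the high-conductance $G'$ — but to make these talk to each other I need a vertex that is simultaneously the witness in $G$ and has matching view in $G'$, which is exactly the midpoint $v$ if I build the construction symmetrically so that the cheap cut's witness region contains $v$. I expect this bookkeeping — aligning \emph{which} vertex witnesses \emph{which} clause so that the indistinguishable vertex carries the contradiction — to be the main obstacle; everything else (the view-locality lemma for \LOCAL, the volume computations showing the two $\phi$ values are separated by more than a factor $n$, and the diameter bookkeeping) is routine. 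Once the witness vertex is aligned, the disjointness of the two forced output intervals at $v$ contradicts the success probability exceeding $1/2$, giving $r = \Omega(D)$.
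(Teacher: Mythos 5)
Your high-level strategy (two graphs joined by a long path, indistinguishable locally, with conductances separated by more than a factor $n$) is the right one, and you eventually land on the correct clause assignment: the ``every vertex outputs $\ge\phi$'' clause is applied to the higher-conductance graph and the ``some vertex outputs $\le\Theta(n\phi)$'' clause to the lower-conductance one. But the point where you stall --- ``I need a vertex that is simultaneously the witness in $G$ and has matching view in $G'$'' --- is resolved the wrong way. You do not get to choose the witness; the algorithm does. The fix is not to pin the witness to the path midpoint but to arrange that \emph{every} vertex of the low-conductance graph has a radius-$o(D)$ view that also occurs in the high-conductance graph. Then whichever vertex the algorithm elects as its witness is already forced (by correctness on the other graph) to output $\ge \phi_{\mathrm{high}}$, which exceeds $\Theta(n\cdot\phi_{\mathrm{low}})$, and the contradiction holds regardless of the witness's identity. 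This is exactly what the paper does, and it is the missing idea in your write-up.

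The second, unnoticed gap is that your concrete pair (path plus one clique versus path plus two cliques) cannot satisfy this all-vertices indistinguishability under the theorem's constraints. With $n$ vertices and diameter $D\le n/2$, the single clique in $G$ must have $\Theta(n)$ vertices while each of the two cliques in $G'$ has only about half as many; a clique vertex reads off its clique's size from its own degree in zero rounds, so clique vertices of $G'$ immediately distinguish the two graphs. You cannot pad with a longer path without blowing up the diameter. The paper's construction sidesteps this with a star gadget: $G_1$ is a path with a clique of size $N=(n-D+2)/2$ at \emph{each} end ($\phi=\Theta(1/n^2)$), while $G_2$ replaces one clique by a star on the same number $N$ of vertices ($\phi=\Theta(1/n)$, since the star side now has only linear volume). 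Both end-structures have $N$ vertices, the surviving cliques have identical size in both graphs, and within $o(D)$ rounds every vertex of $G_1$ sees at most one end and therefore cannot exclude that the unseen end is the star. Then every vertex of $G_1$ must output $\ge\phi(G_2)=\Theta(1/n)$, contradicting the requirement that some vertex output $\le\Theta(n\cdot\phi(G_1))=\Theta(1/n)$ with a smaller constant. So: right skeleton, but the two load-bearing details --- quantifying the indistinguishability over all vertices rather than one, and a gadget that equalizes local degrees while separating volumes --- are both absent from your proposal.
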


Note that the $\Omega(D)$ lower bound in \Cref{thm:Sparsest_cut_mult_LB} holds even in the case where only a single vertex needs to witness the cut.

The above proof is about discerning graphs with sparsity $1/n$ and $1/n^2$. If \emph{every vertex} needs to output a correct estimation, a diameter lower bound exists for larger values of $\phi$. 
 \begin{restatable}{theorem}{SparsestCutAddLB}
 \label{thm:SparsestCutAddLB}
    Let $0<\eps<1$ be a constant and let  $\phi>1/n^{1-\eps}$.
    An algorithm that with probability $>\tfrac{1}{2}$ satisfies that every vertex~$v$ outputs a value~$\tilde \phi_v$ such that $\phi \le \tilde \phi \le O(n^{1+\eps}\cdot \phi)$, where $\phi$ is the sparsest cut value of the graph needs at least $\Omega(D)$ rounds in the \LOCAL model. 
 \end{restatable}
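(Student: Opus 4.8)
The plan is a locality (indistinguishability) argument in the \LOCAL model. For every $t=o(D)$ I will exhibit two $n$-vertex graphs $G,G'$ that look identical in the radius-$t$ ball of a designated vertex $v$ but whose sparsest-cut values are so far apart that no single output $\tilde\phi_v$ can lie both in $[\phi(G),O(n^{1+\eps}\phi(G))]$ and in $[\phi(G'),O(n^{1+\eps}\phi(G'))]$. Since the output of a $t$-round algorithm at $v$ is a randomized function of $v$'s radius-$t$ ball only, this forces $t=\Omega(D)$.

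The instances: given $\phi>1/n^{1-\eps}$ (I allow replacing $1/n^{1-\eps}$ by $c_0/n^{1-\eps}$ for a constant $c_0$, which only shrinks $\eps$), set $D:=\Theta(1/\phi)\le n/2$. Let $G$ be a clique $K_{n-D}$ with a pendant path $c-p_1-\cdots-p_D$ attached at a clique vertex $c$, and let $v:=p_D$ be the degree-one end of the path. Any cut splitting the clique has sparsity $\Omega(1)$, so the sparsest cut of $G$ detaches the whole path and $\phi(G)=1/(2D-1)=\Theta(\phi)>1/n^{1-\eps}$, while $G$ has diameter $\Theta(D)$. The twin $G'$ replaces the single clique by \emph{two} cliques on $\lceil (n-D)/2\rceil$ and $\lfloor (n-D)/2\rfloor$ vertices joined by a single bridge edge, keeping the same pendant path of length $D$ off the first clique; thus $G'$ still has exactly $n$ vertices, and its sparsest cut is the bridge, which cuts one edge with both sides of volume $\Theta(n^2)$, so $\phi(G')=\Theta(1/n^2)$.

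For any $t\le D-2$ the radius-$t$ ball of $v$ is the same length-$t$ path segment in $G$ and in $G'$ — the walk from $v$ never reaches the clique part in either graph — so after renaming identifiers one couples any $t$-round randomized algorithm's executions on $G$ and on $G'$ so that $v$ outputs $\tilde\phi_v$ with the same distribution. If the algorithm is correct with probability $>\tfrac{1}{2}$, then on $G$ we get $\Pr[\tilde\phi_v\ge\phi(G)]>\tfrac{1}{2}$, hence also when running on $G'$; but correctness on $G'$ forces $\tilde\phi_v\le O(n^{1+\eps})\cdot\phi(G')=\Theta(n^{1+\eps}/n^2)=\Theta(1/n^{1-\eps})$, which is strictly smaller than $\phi(G)=\Theta(1/D)$ precisely because $D\le n^{1-\eps}$ — equivalently, the true gap $\phi(G)/\phi(G')=\Theta(n^2/D)$ still beats the allowed slack $O(n^{1+\eps})$ in this regime. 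So the algorithm is wrong on $G'$ with probability $>\tfrac{1}{2}$, a contradiction, and $t=\Omega(D)$.

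The heart of the proof is the choice of $G'$: it must (i) preserve $v$'s local view, (ii) keep the vertex count exactly $n$ so that even an algorithm given $n$ cannot tell the instances apart, and (iii) push the sparsest cut all the way down to $\Theta(1/n^2)$ — which is why the ``bulk'' of the graph lives at distance $\Theta(D)$ from $v$ and is made bridge-connected, and why the statement is about $\phi>1/n^{1-\eps}$: this is exactly the range of sparsities for which the associated diameter $D\approx 1/\phi$ is small enough (at most $O(n^{1-\eps})$) to hide the bulk from $v$ while the $n^{1+\eps}$ slack is still dominated by $n^2/D$. One subtlety to spell out is that $G'$ itself has $\phi(G')=\Theta(1/n^2)<1/n^{1-\eps}$ and thus lies outside the highlighted regime; the argument therefore reads the approximation as a guarantee relative to the true sparsest cut of \emph{every} graph, and the theorem says that even an algorithm meant to be correct only on inputs with $\phi>1/n^{1-\eps}$ pays $\Omega(D)$ because it cannot locally certify that it is not being run on a much sparser instance. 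Verifying the claimed values of $\phi(G)$, $\phi(G')$, the diameter bounds, and the identifier-coupling step for randomized algorithms is then routine.
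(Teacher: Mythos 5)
Your proof is correct and rests on the same mechanism as the paper's: a \LOCAL indistinguishability argument in which a far-away ``two cliques joined by a bridge'' gadget drives the sparsest cut down to $\Theta(1/n^2)$, which beats the allowed $O(n^{1+\eps})$ slack precisely in the regime $\phi>1/n^{1-\eps}$. The difference lies in how the visible part of the hard instance is chosen. The paper invokes the existence (Chierichetti et al.) of graphs with sparsest cut $\phi$ and diameter $D=\Theta(\log n/\phi)$, keeps the $r$-hop ball of a suitable vertex of such a graph, and replaces everything outside it by the bridged cliques; you instead build both instances explicitly from a pendant path of length $D=\Theta(1/\phi)$ hanging off one or two cliques. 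Your version is more elementary and self-contained (no black-box existence result, and the sparsest cuts of both instances are verifiable by direct computation), and it does prove the theorem as literally stated. What it gives up is quantitative strength as a function of $\phi$: your hard instances have diameter $\Theta(1/\phi)$ rather than the extremal $\Theta(\log n/\phi)$, so the bound you obtain is $\Omega(1/\phi)$ rather than $\Omega(\log n/\phi)$ --- and the latter is exactly what the paper uses right after the theorem to conclude that its $O(\log^2 n/\phi)$ upper bound is within a $\log n$ factor of optimal. The two subtleties you flag --- the constant-factor slack in the hypothesis $\phi>1/n^{1-\eps}$, and the fact that $G'$ itself lies outside that regime so the guarantee must be read as relative to the true sparsest cut of every input --- are both present, and handled the same way, in the paper's own proof.
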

  Our \CONGEST upper bound for approximating the sparsest cut w.h.p.\ is $O(\log^2 n/\phi)$ -- only a factor $\log n $ away from optimal, since the graphs from the above observation have sparsest cut $\phi$ and diameter $\Theta(\log n/\phi)$~\cite{ChierichettiGLP18}. We note that with constant probability, we match the lower bound, see \Cref{ssec:techOverview}. 

\subsection{Our Contributions on Spectrum Approximation}
Our core technical contributions are efficient \CONGEST algorithms to compute the approximate (partial) spectrum of a graph. In essence, \Cref{thm:CONGESTSparsestCut,thm:CONGESTkWaySparsestCut} then follow as direct consequences via Cheeger's Inequality. Nevertheless, we chose to present these theorems first, as they are more accessible to a broader audience and can be compared with prior work.

\paragraph{Eigenvalues of the Normalized Laplacian.}
The normalized Laplacian of a graph is the matrix defined as $L:=I-\Deg^{-1/2}A\Deg^{-1/2}$, where $A$ is the adjacency matrix and $\Deg$ is the (weighted) degree matrix.  See \Cref{sec:Laplacian_preliminaries} for more details. Its spectrum captures many central properties of the graph, e.g., various types of its connectedness like $k$-way conductance~\cite{LeeGT12} and small set expansion~\cite{AroraBS15}, or how close it is to being bipartite~\cite{BauerJ09,liu2015multi}.  We provide the following theorem to compute an approximation of that spectrum.

\begin{theorem}[Eigenvalue estimation]
\label{thm:EigenvaluesMain}
Let $\varepsilon > 0$ be a parameter and $k\in \mathbb{N}$ be a constant. There are randomized \CONGEST algorithms that, given an undirected, weighted graph ${G=(V,E)}$, with high probability approximate the eigenvalues $\lambda_1\leq \lambda_2\leq \ldots \leq \lambda_n$ of the graph's normalized Laplacian as follows:

\begin{enumerate}
 \item $\lambda_1, \ldots,  \lambda_k$:  an additive $\varepsilon$-approximations in $O(\log^2 n \cdot \mathrm{poly}(\varepsilon^{-1}))$ rounds, 
    \item  $\lambda_2$: an additive $\varepsilon$-approximation in $O\left(\frac{\log^2 n}{\eps} + \frac{\log n}{\eps}\cdot \left\lceil \frac{\lambda_2}{ \eps}\right\rceil\right) = O\left(\frac{\log^2 n}{\eps} + \frac{\log n}{\eps^2}\right)$ rounds, 
  \item $\lambda_n$: a multiplicative  $(1 \pm \varepsilon)$-approximation  in $O\left(D + \tfrac{\log^2 n}{\varepsilon}\right)$ rounds.
   \end{enumerate}
\end{theorem}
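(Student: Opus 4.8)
The plan is to prove \Cref{thm:EigenvaluesMain} via the \emph{power method}, exploiting the key observation mentioned in the abstract: a single multiplication of a vector $x \in \mathbb{R}^V$ by $L = I - \Deg^{-1/2} A \Deg^{-1/2}$ can be performed in one \CONGEST round, since the $v$-th entry of $\Deg^{-1/2} A \Deg^{-1/2} x$ depends only on the values $x_u / \sqrt{\deg(u)\deg(v)}$ for neighbors $u$ of $v$, each of which fits (up to rounding to $O(\log n)$ bits) in one message. Because $L \succeq 0$ with all eigenvalues in $[0,2]$, I would actually work with the shifted/scaled operator $M := I - \tfrac12 L$ (or $M := \tfrac12(I + \Deg^{-1/2}A\Deg^{-1/2})$) whose eigenvalues are $1 - \lambda_i/2 \in [0,1]$, so that the \emph{largest} eigenvalue of $M$ corresponds to $\lambda_1$ of $L$, and more generally the top-$k$ eigenvalues of $M$ give $\lambda_1,\dots,\lambda_k$. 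Running $t = \Theta(\log n / \gamma)$ power iterations on a random Gaussian vector amplifies the top eigendirection by a factor $(1/(1-\gamma))^t$ relative to everything below the spectral gap $\gamma$, which is the standard mechanism; each iteration is one round, so the round count is $t$ plus lower-order terms for the $\Theta(\log n)$ bits of precision bookkeeping.

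For part~(1), estimating $\lambda_1, \dots, \lambda_k$ to additive error $\varepsilon$: I would run the \emph{block} (subspace/orthogonal-iteration) version of the power method with a batch of $\Theta(k + \log n)$ independent random start vectors, performing $t = \Theta(\log n / \varepsilon)$ iterations of $M$ on each. After iteration, compute the Rayleigh quotients and the small $O((k+\log n)\times(k+\log n))$ Gram/projected matrices; these can be aggregated globally in $O(D) = O(\log n)$ rounds (or via $O(\log n)$-round all-to-all of the relevant inner products, each of which is a sum over vertices). Diagonalizing the projected matrix locally recovers the top eigenvalues of $M$, hence $\lambda_1,\dots,\lambda_k$, up to the inherent error of the power method. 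The reason this costs $O(\log^2 n \cdot \mathrm{poly}(\varepsilon^{-1}))$ rather than scaling with $1/\lambda_i$ is that we only need \emph{additive} $\varepsilon$ accuracy: we don't need to resolve eigenvalues finer than $\varepsilon$, so $\Theta(\log n/\varepsilon)$ iterations suffice regardless of how small the true gaps are, and the extra $\log n$ factor comes from the probability-amplifying batch size together with the precision of arithmetic.

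For part~(2), the sharper bound for $\lambda_2$ alone: here one can be smarter by first (approximately) projecting out the known trivial eigenvector of $L$, namely $\Deg^{1/2}\mathbf{1}$ (eigenvalue $\lambda_1 = 0$, i.e. eigenvalue $1$ of $M$), and then the relevant quantity is the \emph{second} eigenvalue of $M$, equivalently $1 - \lambda_2/2$. Running the power method on $M$ restricted to the orthogonal complement of $\Deg^{1/2}\mathbf 1$ for $\Theta(\log n / \varepsilon)$ iterations with a single random vector (plus $O(\log n)$-round orthogonalization against the trivial direction, which is globally known up to summing degrees) already yields the $O(\log^2 n/\varepsilon)$ term; the extra $O(\tfrac{\log n}{\varepsilon}\lceil \lambda_2/\varepsilon\rceil)$ term reflects a refinement step where, once $\lambda_2$ is localized within an $O(\varepsilon)$-window, one does an additional $O(\lceil\lambda_2/\varepsilon\rceil)$ blocks of $\Theta(\log n/\varepsilon)$ iterations to pin down the value — essentially because the multiplicative gap between $1-\lambda_2/2$ and the rest can be as small as $\Theta(\varepsilon/\lambda_2)$, so convergence needs correspondingly more iterations. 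For part~(3), the multiplicative $(1\pm\varepsilon)$ estimate of $\lambda_n$: note $\lambda_n$ is the \emph{largest} eigenvalue of $L$, i.e. the largest eigenvalue of $-M$ shifted, equivalently $2 - \lambda_n$ is the smallest eigenvalue of... cleanest is to run the power method on $\tfrac12 L$ directly (eigenvalues in $[0,1]$, top one is $\lambda_n/2$); but $\lambda_n \ge 1$ always (actually $\lambda_n \in [1,2]$ for non-bipartite and $=2$ for bipartite) is too crude — instead one runs power iteration on $2I - L$ has smallest eigenvalue $2-\lambda_n$, so to get a \emph{multiplicative} handle on $\lambda_n$ one should apply the power method to $L$ scaled appropriately and use that $\lambda_n = \Theta(1)$, so $\Theta(\log n / \varepsilon)$ iterations on $\tfrac12 L$ give a $(1\pm\varepsilon)$ estimate; the extra $D$ term is the unavoidable cost of globally aggregating the Rayleigh quotient and, more importantly here, of a preprocessing step (e.g. estimating $\Vol$, running a BFS) that the $\lambda_1$ cases also need but which is dominated there.

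The main obstacle, and where the proof needs real care rather than invoking textbook power-method bounds, is the interaction between \emph{finite precision} ($O(\log n)$-bit messages) and the \emph{number of iterations}: after $t$ multiplications the entries of the iterate can have dynamic range $2^{\Theta(t)}$, so one must renormalize the vector each round (dividing by its global $\ell_2$ norm, computed by an $O(D)$ or pipelined $O(\log n)$-round aggregation — or, better, by a locally-tracked running estimate to avoid paying $D$ every round) and argue that the accumulated rounding error stays $o(\varepsilon)$ relative to the signal. A second delicate point is lower-bounding the projection of the random start vector onto the target eigenspace: with a single Gaussian vector this is $\Omega(1/\mathrm{poly}(n))$ only with constant probability, which is why the batch of $\Theta(\log n)$ vectors (or a single vector with a union bound over $O(\log n)$ independent restarts) is needed to get the high-probability guarantee, and this batch is exactly what injects the second $\log n$ factor into the round complexity. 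Finally, for part~(1) with general $k$ one must handle near-degenerate eigenvalues: the power method cannot separate eigenvalues closer than $\varepsilon$, but since we only claim additive $\varepsilon$-approximation this is fine — the argument is that the Rayleigh–Ritz values of the subspace iterate interlace and converge to within $\varepsilon$ of the true $\lambda_1,\dots,\lambda_k$ after $\Theta(\log n/\varepsilon)$ steps, which I would establish by a clean perturbation/Davis–Kahan-style bound rather than by tracking individual eigenvectors.
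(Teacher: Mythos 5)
Your overall strategy (power method on a shifted Laplacian, deflation against the known eigenvector $\Deg^{1/2}\mathds 1$, $O(\log n)$ restarts for the high-probability guarantee, and worrying about $O(\log n)$-bit precision) matches the paper's, and your block/Rayleigh--Ritz variant for part~(1) is a legitimately different route from the paper's sequential deflation with approximate eigenvectors. However, there are three concrete gaps. First, and most importantly, your claimed round complexities for parts~(1) and~(2) contain no $D$ term, yet your algorithm repeatedly performs global aggregations (inner products, Gram matrices, orthogonalization against $\Deg^{1/2}\mathds 1$, renormalization) that each cost $\Omega(D)$, and you simply assert $O(D)=O(\log n)$. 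The diameter can be $\Theta(n)$. The missing idea is a reduction: if $D=\omega(\log n/\eps)$ then a ball-growing argument (\Cref{lm:diam_k_sparsest_cut}) forces $\phi_k\le\eps$, hence $\lambda_k\le 2\eps$ by Cheeger, and the algorithm may output $\Theta(\eps)$ immediately; only in the complementary case $D=O(k\log n/\eps)$ do you run the power method, and only then do the $O(D)$ aggregations fit the budget. Without this case split the stated bounds are not achieved.

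Second, your explanation of the $O\bigl(\tfrac{\log n}{\eps}\lceil\lambda_2/\eps\rceil\bigr)$ term in part~(2) as a ``refinement step'' driven by a small multiplicative gap is not the right mechanism: $\Theta(\log n/\eps)$ iterations already suffice for additive accuracy $\eps$. The term actually arises from finite precision: rounding reintroduces a component along $v_n=\Deg^{1/2}\mathds 1$, and that component is amplified by $(\mu_n/\mu_{n-1})^{\ell}=(2/(2-\lambda_2))^{\ell}$ per $\ell$ iterations, so one must re-project onto $v_n^{\perp}$ every $\Theta(\log n/\lambda_2)$ iterations --- i.e.\ $\lceil\lambda_2/\eps\rceil$ times in total, each costing $O(D+\log n)$. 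Third, your precision fix of renormalizing by the global $\ell_2$ norm every round costs $O(D)$ per iteration and would blow up the runtime; the ``locally tracked running estimate'' you gesture at is exactly the nontrivial part and is left unproved. What is needed (and what the paper proves as \Cref{lm:num_stab}) is that truncating each entry to its $O(\log n)$ most significant bits while storing the exponent locally perturbs the final Rayleigh quotient by only $1\pm n^{-\Theta(1)}$; this requires an argument that rounding errors, which could a priori be amplified exponentially over $k$ multiplications, in fact stay controlled relative to the dominant eigencomponent.
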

Eigenvalue $\lambda_2$ is the eigenvalue approximating the sparsest cut via Cheeger's Inequality. It could  either be approximated via the second algorithm, or slightly slower via the first.

For the smaller eigenvalues, \Cref{thm:EigenvaluesMain} states additive approximations avoiding the diameter dependence, e.g., when $\eps$ is an arbitrarily small constant \Cref{thm:EigenvaluesMain} approximates $\lambda_2$ in $O(\log^2 n)$ rounds even though the diameter may be up to linear in $n$. 

As $\eps$ can be arbitrary in \Cref{thm:EigenvaluesMain}, we can obtain multiplicative approximations by setting  $\eps = \eps'\lambda_2$ for any $\eps'>0$. However, this does not circumvent the diameter lower bound of \Cref{thm:eigenvalue_LB} presented below. By \Cref{thm:Cheeger} and \Cref{lm:diam_k_sparsest_cut}, $D=O(\log n/\lambda_2)$ holds for any graph, so the diameter term is then always asymptotically smaller than the overall runtime of the algorithm.

The main prior work for distributed eigenvalue approximation is by Kempe and McSherry~\cite{KempeM08}. They provide a distributed algorithm that, for constant $k$, approximates the $k$-largest/smallest eigenvalues in $O(\tau_{\rm{mix}}\log^2n)$ rounds. 
We note that the mixing time can be as large as $n^{\Omega(1)}$. This happens when $\lambda_2$ is really small (in which case our algorithms are also slow), but also when the graph is (close to) bipartite. In the latter case, our algorithms can be exponentially faster. 
Furthermore, an important caveat here is that Kempe and McSherry do not obtain their result in the \CONGEST model: they do not discuss the bandwidth constraint. They do specify that they only need $\poly(k)$ messaged per edge per round. However, they do not take the \emph{size} of these messages into account. In our work, this is a non-trivial contribution, as intermediate values may grow fast. We think similar techniques might be able to work for their approach as well, but this requires non-trivial adjustments and additional analysis.

Later, Becchetti et al.~\cite{BecchettiCNPT20} gave an algorithm for finding \emph{balanced} cuts, where our algorithm works for \emph{all} cuts. We provide more details on this algorithm in \Cref{ssec:techOverview}. 

An alternative approach to distributed eigenvalue computation is via computing a spectral sparsifier, i.e., a sparse subgraph $H$ with $\tilde O(n/\eps^2)$ edges that approximately preserve the graph's spectrum. Next, the sparsifier $H$ can be gathered at one vertex of the graph that can locally compute the eigenvalues of $H$ and hence obtain approximations of all eigenvalues of $G$.  Computing $H$ is actually relatively fast and only requires $\poly(\log n,\eps^{-1})$ rounds~\cite{KX16}. Collecting $H$ at a single vertex is the expensive part, requiring $\tilde O(n/\eps^2)$ rounds. This is exponentially slower than \Cref{thm:EigenvaluesMain}, though this more general method immediately computes the full spectrum while we focus on single eigenvalues.  


\medskip

We note that all our upper bounds even hold in the more restricted Broadcast \CONGEST model in which a vertex needs to send the same message to all of its neighbors in each round. All our algorithms become a $\log n$-factor faster if we only ask for constant probability as typical in property testing, see the discussion in \Cref{ssec:techOverview}.

\paragraph{Lower Bound.}
All of our multiplicative upper bounds take diameter time. This means that at the end of the algorithm, every vertex can easily output the same eigenvalue or sparsest cut estimate. One may ask whether one can obtain faster algorithms if the  output is more local. Perhaps every vertex outputs the sparsest cut it sees in some neighborhood, and there is a guarantee that at least one vertex will see an approximately sparsest cut. In \Cref{thm:Sparsest_cut_mult_LB}, we showed that this is not true for sparsest cut approximation. We also show that approximating $\lambda_2$ at even one node requires $\Omega(D)$ rounds. 

\begin{restatable}{theorem}{EigenvalueLB}\label{thm:eigenvalue_LB}
    Any algorithm that with probability $>\tfrac{1}{2}$ ensures that all vertices output a value $\tilde \lambda_2\ge \lambda_2$, and at least one vertex additionally satisfies $\tilde \lambda_2 \leq 2\lambda_2$ requires $\Omega( D)$ rounds. 
\end{restatable}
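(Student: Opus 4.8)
The plan is to prove a lower bound of $\Omega(D)$ for approximating $\lambda_2$ at even a single vertex, mirroring the strategy used for \Cref{thm:Sparsest_cut_mult_LB}. The natural approach is an indistinguishability argument: I would construct two families of graphs, $\mathcal{G}_{\text{yes}}$ and $\mathcal{G}_{\text{no}}$, that are identical in a radius-$r$ neighborhood of \emph{every} vertex for some $r = \Omega(D)$, yet have $\lambda_2$ values that differ by more than a factor of $2$ (and more precisely, are separated enough that no single value $\tilde\lambda_2$ can simultaneously satisfy $\tilde\lambda_2 \ge \lambda_2$ for both graphs and $\tilde\lambda_2 \le 2\lambda_2$ for either). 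If such families exist, then in $o(D) < r$ rounds no vertex can distinguish which graph it is in, so whatever value a vertex outputs must be valid for both; but the required sandwich $\lambda_2^{(1)} \le \tilde\lambda_2 \le 2\lambda_2^{(1)}$ and $\lambda_2^{(2)} \le \tilde\lambda_2 \le 2\lambda_2^{(2)}$ cannot hold when the two eigenvalues differ by more than a factor of $2$, giving the contradiction.

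The concrete construction I would try is based on paths/cycles, since for such graphs $\lambda_2$ is well understood and scales like $\Theta(1/D^2)$ (a path or cycle on $m$ vertices has second Laplacian eigenvalue $\Theta(1/m^2)$, and diameter $\Theta(m)$). Consider a long path (or two long paths joined appropriately) where a small local modification far from a designated vertex—say, deleting or adding an edge somewhere in the middle, or lengthening one segment—changes the global eigenvalue by a constant factor while leaving every vertex's local view unchanged up to distance $\Omega(D)$. For instance, one graph could be a single cycle $C_{2m}$ and the other two disjoint-ish structures glued with a thin bridge, or a path of length $m$ versus a path of length $2m$ with the same local structure near the relevant vertices; the eigenvalue ratio is then $4$, comfortably exceeding $2$. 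The subtlety is that \emph{every} vertex—not just one—must have an identical local view between the two instances, so I would likely use a symmetric gadget, e.g., attach identical pendant structures or use a "dumbbell" where the difference is concentrated in a region that is far from all vertices in one instance, which forces the construction to exploit that $D$ can be large relative to $n$ is false here (we have $D \le n$), so instead the indistinguishability should come from the fact that a local change to a graph of diameter $D$ is invisible within $r < D/2$ rounds to vertices sufficiently far from the change, combined with the output requirement being only at one vertex that we place far from the modification.

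Actually, re-reading the statement: \emph{all} vertices must output $\tilde\lambda_2 \ge \lambda_2$, but only \emph{one} vertex needs the upper bound $\tilde\lambda_2 \le 2\lambda_2$. So the cleanest reduction is: take graph $G_1$ with eigenvalue $\lambda$ and graph $G_2$ obtained from $G_1$ by a modification localized near a vertex set that is at distance $\Omega(D)$ from some fixed vertex $v_0$, such that $\lambda_2(G_2) \ge 5\lambda$ (say). A correct algorithm run on $G_2$ must have \emph{every} vertex output at least $\lambda_2(G_2) \ge 5\lambda$; in particular $v_0$ outputs $\ge 5\lambda$. But $v_0$'s view in $G_2$ within $o(D)$ rounds is identical to its view in $G_1$, so on $G_1$ the vertex $v_0$ also outputs $\ge 5\lambda > 2\lambda = 2\lambda_2(G_1)$, violating the upper-bound guarantee (which, on $G_1$, some vertex—possibly $v_0$ itself after relabeling, or by making the construction vertex-transitive enough that the "one vertex" must be $v_0$—must satisfy). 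I would make this airtight by choosing $G_1$ to be a graph where, by symmetry or by padding, \emph{the} vertex achieving the $\le 2\lambda$ guarantee can be forced to be $v_0$, or alternatively by a standard two-graph argument where I run the algorithm on a disjoint-ish combined instance. The main obstacle I anticipate is the eigenvalue computation: verifying that the local modification actually moves $\lambda_2$ by the required constant factor, since $\lambda_2$ is a global quantity and small surgeries on graphs can sometimes barely change it (e.g. if $G_1$ already has a sparse cut elsewhere). The fix is to choose $G_1$ to be a clean, well-understood graph—like a path or a cycle, whose spectrum is explicit—so that the effect of, say, contracting a segment or attaching a pendant path can be computed or estimated directly via Cheeger (\Cref{thm:Cheeger}) and the known relation $D = \Theta(\log n / \lambda_2)$ or the explicit path spectrum, and so that the sparsest cut is genuinely global and tied to the length that we are modifying.
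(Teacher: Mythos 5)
Your high-level strategy matches the paper's: an indistinguishability argument built from cycles/paths (where $\lambda_2=\Theta(1/m^2)$ and the diameter is $\Theta(m)$), with the eigenvalue bounds certified by an explicit test vector and/or Cheeger's Inequality. The paper's concrete construction is a cycle $C_k$ with a clique glued to it ($G_{k,l}$) versus a cycle with a clique on each side ($G_{k,l,l'}$); it proves $\lambda_2(G_{k,l})=\Theta(1/k^2)$ via the sine test vector (upper bound) and Cheeger applied to $\phi=\Theta(1/k)$ (lower bound), exactly the two tools you name.

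However, your final reduction has a genuine gap that you notice but do not close. You argue: on $G_2$ (large $\lambda_2$) every vertex outputs $\ge 5\lambda$; the designated vertex $v_0$, being far from the modification, cannot distinguish $G_2$ from $G_1$, so on $G_1$ it also outputs $\ge 5\lambda > 2\lambda_2(G_1)$. This contradicts nothing, because the theorem only requires the upper bound $\tilde\lambda_2\le 2\lambda_2$ at \emph{some} vertex, and on $G_1$ that witness could be a vertex near the modification, which is \emph{not} fooled. Your proposed fixes do not work as stated: vertex-transitivity is incompatible with a modification localized far from $v_0$ (the modification breaks the symmetry), and "relabeling" does not change which vertices can see the modification. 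What is actually needed — and what the paper does — is to fix the low-eigenvalue instance and show that \emph{every} vertex of it is locally consistent with \emph{some} high-eigenvalue instance: in the long-cycle graph, every vertex's $r$-hop view contains at most one clique and a bounded arc of cycle, hence is consistent with a cycle of length $\approx 2r$ whose $\lambda_2$ is $\Omega(1/r^2)$. Then every vertex must output $\Omega(1/r^2)$, so no vertex at all can satisfy $\tilde\lambda_2\le 2\lambda_2 = O(1/D^2)$ unless $r=\Omega(D)$. Note also that this step requires a \emph{lower} bound on $\lambda_2$ of the short-cycle instance (not just the upper bound you'd get from a test vector), which the paper obtains from Cheeger's Inequality $\lambda_2\ge\phi^2/2$ after computing $\phi=\Theta(1/k)$ for the cycle-plus-clique graph.
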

An $\Omega(D)$ lower bound could also follow from our sparsest cut lower bounds via Cheeger's Inequality, but such an approach incurs an inherent quadratic loss due to the approximation gap in Cheeger's.  

\paragraph{Decision problem and connections to property testing.} For the lower bounds in \Cref{thm:Sparsest_cut_mult_LB} (and similarly for \Cref{thm:eigenvalue_LB}), the vertices that do not output a correct value, still need to output a value $\tilde \phi\ge \phi$. If we do not do this, then a trivial algorithm where every vertex outputs a random number between $0$ and $1$ will be correct most of the time. In other words, we aim to solve the decision problem: ``is there a cut sparser than $\phi$?'' If all vertices say ``no'', the answer is no. If at least one vertex answers ``yes'', the answer is yes. As we are only solving the approximate version of the question this is closely related to the concept of \emph{property testing} in \CONGEST, introduced in \cite{Censor-HillelFS19} and extensively studied since then, e.g, \cite{EFFGLMMOORT17,LMD18}. ``Conductance property testing'' has been explicitly studied by \cite{FY18} and \cite{BTT24} and \Cref{thm:CONGESTSparsestCut} improves significantly on their results. We detail the relation to property testing and their results in \Cref{ssec:relatedWork}.

Also the eigenvalues of the adjacency matrix and the non-normalized Laplacian describe many graph properties. Among other properties, they relate to the size of independent sets (Hoffman's bound)~\cite{haemers1995interlacing}, the chromatic number~\cite{wilf1967eigenvalues}, the average density of cuts~\cite{haemers1995interlacing}, the toughness of the graph~\cite{liu2010algebraic}, Hamiltonicity~\cite{fiedler2010spectral}, the matching number~\cite{GuL22}, and the existence of a perfect matching~\cite{brouwer2005eigenvalues}.

\paragraph{Approximating the Entire Spectrum.}  
Via similar methods, one can compute a multiplicative approximation of the entire spectrum. However, the output will not reveal how many eigenvalues there are of a certain approximation, i.e., it gives the spectrum \emph{without multiplicity}. The running time of such an algorithm is $n^{o(1)}(\sqrt n+D)/\eps^2$ rounds, see \Cref{sec:approx_spectrum} for details. Note that in particular this gives a multiplicative approximation of the smallest non-zero eigenvalue~$\lambda_2$. In case $\lambda_2$ is small -- roughly $\lambda_2= o( 1/\sqrt n)$ -- this is faster than \Cref{thm:EigenvaluesMain}.

\subsection{Related Work} \label{ssec:relatedWork}
\paragraph{Sparsest Cut.}
As mentioned there are two approaches to solving the sparsest cut problem. The \emph{spectral approach} used in this paper and the \emph{flow approach}. The spectral approach encompasses techniques like Cheeger's Inequality, spectral sparsifiers (discussed above), and also random walks. For instance, the results of \cite{das2015extremal,KuhnM15,FY18,BTT24} rely on random walks, which are considered spectral in nature since their convergence analysis depends on spectral properties.  This approach is inherently limited by Cheeger's Inequality, capping the approximation factor at $\sqrt{2\phi}$, the guarantee we match in this work. 

The flow approach, based on the \emph{cut-matching game}, computes a number of flows that will either guarantee expansion or find a sparse cut. Theoretically, this can give a $(\phi \poly \log n)$-approximation. In the sequential setting, the state-of-the-art is a $O(\phi\sqrt{ \log n})$-approximation by Arora, Rao, and Vazirani~\cite{AroraRV09}. In the distributed setting, the state-of-the-art is the aforementioned $O(\phi\log^2 n)$-approximation~\cite{ChenMGS25}. Before this recent result, the flow based gave worse approximations: \cite{ChangPZ19,ChangS19} gave a $\tilde \phi \le O(\phi^{1/3}\log^{5/3}n)$-approximation in $O(D\cdot \poly \log n,\phi^{-1})$ rounds, for $\phi = O(1/\log^5 n)$. Additionally, Chang and Saranurak~\cite{ChangS20} gave a deterministic version of that result with additional sub-polynomial factor in the round complexity.

\paragraph{Property Testing.}
In our paper, we are essentially solving the decision problem: ``does $G$ have sparsest cut at least $\phi$?''
In property testing, the goal is to solve a relaxed variant of the corresponding decision problem: if the input network satisfies a property, then, with constant probability, all the vertices accept
but if the input network is $\eps$-far\footnote{This refers to $\eps m$ changes to the graph. For a precise definition, see, e.g., \cite{BTT24}.} from satisfying the property, then at least one
vertex rejects. Testing for sparsest cuts has explicitly been targeted in this setting by \cite{FY18} and \cite{BTT24}. For a fixed $\phi$, 
\cite{BTT24}, improving on \cite{FY18}, solves this question as follows. All vertices output ``yes'', with probability at least $2/3$ if the sparsest cut is at least $\phi$, and at least one vertex outputs ``no'', with probability at least $2/3$ if $G$ is $\eps$-far from having sparsest cut above $\phi^2/2880$. Similar to our result, the $\phi^2$ term stems from Cheeger's Inequality. Our results improve upon this in several ways. For achieving a constant probability, we only need $O(\log n/\phi)$ rounds, as compared to $O(\tfrac{\log n}{\eps\cdot \phi^2})$ rounds in their algorithm. Most notably, we do not solve the property testing version, and do not require that the instance is $\eps$-far from having a sparse cut above $O(\phi^2)$. Thus, in some sense we are solving their version with $\eps=1/n^2$. Additionally, our results extend to weighted graphs and the constant in our approximation factor is significantly better.  Lastly, we also approximately compute the sparsest cut value instead of only solving the decision problem. 


\paragraph{Cut Sparsifiers.}
We have already discussed how to use spectral sparsifiers to approximate the spectrum in $\tilde O(n/\eps^2)$ rounds~\cite{KX16}. They can also directly be used for approximating sparsest cuts without going through Cheeger's Inequality: $(1\pm\eps)$-approximate spectral sparsifiers are $(1\pm\eps)$-cut sparsifiers. This means that in $\tilde O(n/\eps^2)$ rounds, we can collect a subgraph $H$ in one vertex, in which \emph{all} cuts are at most a factor $(1\pm\eps)$ away from the cut in $G$. In particular, we can find a $(1\pm \eps)$-approximation of the sparsest cut, and more generally a $(1\pm \eps)$-approximation of the $k$-way sparsest cut in $\tilde O(n/\eps^2)$ rounds.


\paragraph{Laplacian Paradigm.}
The \emph{Laplacian paradigm} is an umbrella term for algorithmic techniques relying on the Laplacian and its properties.  This line of research was initiated by Spielman and Teng~\cite{SpielmanT04}, who showed that linear equations in the Laplacian matrix of a graph can be solved in near-linear time. Algorithms within the Laplacian paradigm combine numerical techniques -- from the matrix point of view -- with combinatorial techniques -- from the graph point of view. Nowadays, it encompasses techniques like solving Laplacian systems, spectral sparsifiers, electrical flow, effective resistance, expander decompositions, continuous optimization, interior-point methods, gradient descent, and preconditioning. This has had many applications, including but not limited to flow problems~\cite{Madry13,Sherman13,KelnerLOS14,Madry16,Peng16,CohenMSV17,LiuS2020FasterDivergence,LiuS20,AxiotisMV20}, bipartite matching~\cite{BrandLNPSS0W20}, and (parallel) shortest paths~\cite{Li20,AndoniSZ20}. 
 
Also in the distributed world, this has booked many successes, e.g., ~\cite{GhaffariKKLP18,BeckerFKL21,ForsterGLPSY21,Anagnostides0HZ22,ForsterV22,ForsterV23,Vos23}. With this paper, we hope to contribute to a better understanding of the Laplacian matrix and its applications in a distributed setting.

\paragraph{Comparison to the Min-Cut Problem.}
In the minimum cut (min-cut) problem, the goal is to find a cut $S$ of minimum value $|E(S,V\setminus S)|$. Although the problem statement seems similar to the sparsest cut problem, it has a very different computational complexity. As said, in the sequential setting, it is NP-hard to approximate the sparsest cut (under the unique games conjecture)~\cite{MatulaS90,ChawlaKKRS06}. On the other hand, there are linear time algorithms for computing the exact min-cut~\cite{Karger00}. 

In the distributed setting, we show that the $\Omega(\sqrt{n}+D)$ lower bound does not apply for approximating the sparsest cut problem. The min-cut problem however has a complexity of $\tilde \Theta(\sqrt n+D)$~\cite{SarmaHKKNPPW12,DoryEMN21}, 
where the lower bound holds even for a $\poly(n)$-approximation. 

\subsection{Technical Overview}
\label{ssec:techOverview}

As mentioned, \Cref{thm:CONGESTSparsestCut,thm:CONGESTkWaySparsestCut} follow as direct consequences from our eigenvalue estimation results via Cheeger's Inequality. Here, we focus on our core technical contribution: efficient \CONGEST algorithms to compute the approximate (partial) spectrum of a graph. 
The main ingredient of our algorithms is the well-known \emph{power method} for approximating the largest eigenvalue of a matrix. 
\paragraph{The Power Method Algorithm.}
Let $M$ be any matrix. The power method, also known as \emph{power iteration} or \emph{Von Mises iteration}, starts with a random vector and converges to the eigenvector corresponding to the largest eigenvalue of $M$. It is a standard technique in numerical linear algebra, see, e.g., \cite{golub2013matrix}. The procedure is as follows. 
\begin{tcolorbox}[title=\textbf{Power Method}, colback=gray!10!white, colframe=gray!70!white, coltitle=black, fonttitle=\bfseries]
\begin{algorithmic}[1]
    \State \textbf{Input:} Matrix $M \in \mathbb{R}^{n \times n}$, number of iterations $k$.
    \State \textbf{Initialize:} A random vector $x_0 \in \mathbb{R}^n$.
    \For{$i = 1$ to $k$}
        \State $x_i \gets M x_{i-1}$.
    \EndFor
    \State \textbf{Output:}
    \begin{itemize}
        \item Approximate dominant eigenvalue: $\lambda \approx \frac{x_k^T M x_k}{x_k^T x_k}$;
        \item Approximate dominant eigenvector: $x_k$.
    \end{itemize}
\end{algorithmic}
\end{tcolorbox}

With constant probability over the randomness of the start vector $x_0$, the power method outputs a vector $x_k$ whose \emph{Rayleigh coefficient} $\frac{x_k^T M x_k}{x_k^T x_k}$ is close to the \emph{largest} eigenvalue of the matrix $M$. The main intuition why this happens is as follows: Any vector can be written as a combination of the matrix' eigenvectors and when you apply the matrix multiple times, the contribution from the eigenvector with the largest eigenvalue (in magnitude) dominates. So, over time, the method ``filters out'' all directions except the one corresponding to the largest eigenvalue. This works only if the random start vector $x_0$ has a nonzero component in the dominant eigenvector’s direction; beyond the choice of this initial vector, the process is fully deterministic.


In this paper, we only need the power method for \emph{positive semi-definite (PSD)} matrices, which are symmetric matrices with non-negative eigenvalues. Then all eigenvalues are nonnegative and we can assume the eigenvectors to be orthonormal. The power method also converges slightly faster on PSD matrices -- by a factor $2$ to be precise. 

\begin{restatable}[Power Method]{lemma}{PowerMethod}\label{lm:powerMethod}
Let $M$ be a positive semi-definite matrix with eigenvalues $0\le \mu_1 \le \dots \le \mu_n$.
    Let $x_0\sim \{-1,1\}^n$ be a uniformly random vector and $x_k = M^k x_0$.
    For $\eps \ge 16/n$ and $k\ge \Theta(\log n/\eps)$, it holds with constant probability that 
    \begin{equation*}
        (1-\eps)\mu_n\le\frac{x_k^TMx_k }{x_k^T x_k} \le \mu_n.
    \end{equation*}
\end{restatable}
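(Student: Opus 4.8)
The plan is to follow the standard spectral analysis of the power method, tracking the coordinates of $x_0$ in the eigenbasis and showing both the numerator-to-denominator ratio is large (the lower bound on the Rayleigh quotient) and that the ratio never exceeds $\mu_n$ (the upper bound, which is immediate since $M$ is PSD so $x^T M x \le \mu_n x^T x$ for all $x$). First I would write $x_0 = \sum_{i=1}^n c_i v_i$ in an orthonormal eigenbasis $v_1,\dots,v_n$ with $M v_i = \mu_i v_i$; then $x_k = M^k x_0 = \sum_i c_i \mu_i^k v_i$, so by orthonormality
\begin{equation*}
\frac{x_k^T M x_k}{x_k^T x_k} = \frac{\sum_i c_i^2 \mu_i^{2k+1}}{\sum_i c_i^2 \mu_i^{2k}}.
\end{equation*}
Grouping the eigenvalues into those with $\mu_i \ge (1-\eps)\mu_n$ (call this index set $\Lambda$) and those with $\mu_i < (1-\eps)\mu_n$, one sees the ratio is at least $(1-\eps)\mu_n$ provided the ``small'' eigenvalues contribute a negligible fraction of the denominator. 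So I would reduce the claim to showing that $\sum_{i \notin \Lambda} c_i^2 \mu_i^{2k}$ is, say, at most (a constant times) $\sum_{i \in \Lambda} c_i^2 \mu_i^{2k}$, which after bounding $\mu_i^{2k} \le ((1-\eps)\mu_n)^{2k}$ for $i \notin \Lambda$ and $\mu_i^{2k} \ge ((1-\eps)\mu_n)^{2k}$ is... actually one wants to compare against $c_n^2 \mu_n^{2k}$ directly: it suffices that $\sum_{i \notin \Lambda} c_i^2 (1-\eps)^{2k} \le c_n^2$, i.e. $(1-\eps)^{2k} \|x_0\|^2 \le c_n^2$. Since $(1-\eps)^{2k} \le e^{-2\eps k} \le n^{-\Omega(1)}$ for $k \ge \Theta(\log n/\eps)$ with a suitable constant, and $\|x_0\|^2 = n$, this needs $c_n^2 \ge n \cdot n^{-\Omega(1)}$, i.e. $c_n^2$ not too small.

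The main obstacle, and the only place randomness enters, is the lower bound on $c_n^2 = \langle x_0, v_n\rangle^2$: with $x_0$ uniform in $\{-1,1\}^n$, $c_n$ is a Rademacher sum $\sum_j (v_n)_j \sigma_j$ with $\E[c_n^2] = \|v_n\|^2 = 1$. I would invoke a small-ball / anti-concentration estimate (e.g. Paley–Zygmund, using $\E[c_n^4] \le 3$ for Rademacher sums since $\E c_n^4 = 3\|v_n\|^4 - 2\sum_j (v_n)_j^4 \le 3$) to conclude $\Pr[c_n^2 \ge 1/2] \ge$ some absolute constant, say $\ge 1/24$. Plugging $c_n^2 \ge 1/2$ into the deterministic bound above requires $(1-\eps)^{2k} \cdot n \le 1/2$, which holds for $k \ge \Theta(\log n / \eps)$ with an appropriately chosen hidden constant; the hypothesis $\eps \ge 16/n$ guarantees $\log n/\eps$ is a meaningful (not larger than trivial) iteration count and keeps the constants clean. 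Combining: with probability $\ge 1/24$ we get $c_n^2 \ge 1/2$, and then
\begin{equation*}
\frac{x_k^T M x_k}{x_k^T x_k} \ge \frac{c_n^2 \mu_n^{2k+1}}{c_n^2 \mu_n^{2k} + \sum_{i\notin\Lambda} c_i^2 \mu_i^{2k} + \sum_{i\in\Lambda,i\ne n} c_i^2\mu_i^{2k+1}/\mu_n \cdot(\dots)} \ge (1-\eps)\mu_n,
\end{equation*}
after the bookkeeping above (the eigenvalues in $\Lambda$ only help the lower bound since they already exceed $(1-\eps)\mu_n$). The upper bound $x_k^T M x_k \le \mu_n x_k^T x_k$ is the Rayleigh characterization of the top eigenvalue and needs no randomness. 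I expect the bulk of the written proof to be the routine but slightly fiddly splitting of the sum and choosing the constant in $\Theta(\log n/\eps)$; the genuinely substantive step is the anti-concentration of $c_n$, and a degenerate edge case to handle is $\mu_n = 0$ (then $M = 0$ and the statement is trivial, or one assumes $\mu_n > 0$ WLOG).
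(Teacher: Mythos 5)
Your proposal is correct and follows essentially the same route as the paper's proof: expand $x_0$ in the eigenbasis, split the eigenvalues at the threshold $(1-\eps)\mu_n$, bound the small-eigenvalue contribution by $(1-\eps)^{2k}\|x_0\|^2$ against $\langle x_0,v_n\rangle^2$, use Paley--Zygmund with the second and fourth Rademacher moments for the anti-concentration of $\langle x_0,v_n\rangle$, and get the upper bound from the Rayleigh characterization (the paper's Lemma~\ref{lm:no_overshoot}). The only differences are in the exact constants, which do not affect the statement.
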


The power method is suitable for distributed applications, since matrix vector multiplication is easy in the \CONGEST model: this can be done in $O(1)$ rounds where each vertex of the communication network holds an entry of the vector, and each vertex knows the entries of the matrix in its row and column. Important is that the only non-zero entries in a row or column correspond to a neighboring vertex. This is exactly the case when the matrix is closely related to the adjacency matrix -- such as the Laplacian.

Only the final step, of computing the Rayleigh coefficient of $x_k$ takes $O(D)$ time. 
We note that the power method is a widely used routine. Also decentralized implementations have been known for a long time, e.g, \cite{bertsekas2015parallel,penna2014decentralized} or for the special case of PageRank~\cite{SarmaMPU15}. However, to the best of our knowledge, the procedure and its extensions have not been studied in the context of the \CONGEST model. 


One difficulty with implementing the power method in the \CONGEST model as is, is that values grow exponentially with $k$ and hence after a few iterations and they cannot be send to neighbors using $O(\log n)$ bits per message. Fortunately, the method is extremely robust. It still converges quickly after truncating each value after the $O(\log n)$ most significant bits; if one additionally remembers the order of magnitude of the values one can approximately compute the Rayleigh coefficient at the end of the algorithm -- see \Cref{lm:num_stab} for a formal statement.
More involved methods, like the Lanczos method (see, e.g., \cite{golub2013matrix}) might be able to return similar results. However, the Lanczos method is inherently unstable. It is unclear whether this can be adapted to work (efficiently) in the \CONGEST model. 

\paragraph{Averaging Dynamics.}
In community detection, average dynamics is a well-known concept. The idea is that each node updates its value to the average of its neighbors. If $G$ is connected and not bipartite, this always converges. 
The update step is comparable is the same in spirit as a step in the power method, but not entirely equal.
In particular, Becchetti et al.~\cite{BecchettiCNPT20} use average dynamics in the distributed setting.  In case of a sufficiently small balanced cut, and some additional technical assumptions, they can provide an approximate sparsest cut. Since the approximation factor depends on the technical assumptions, we will not state their exact guarantees here. 
With a similar approach, Beccheretti et al.~\cite{BecchettiCNPRT17} can find a sparse cut efficiently if both of the parts do not contain sparse cuts. Moreover, the average dynamics has been studied as a tool for community dynamics in related models that do not directly compare to the \CONGEST model, e.g., the asynchronous, opportunistic communication model~\cite{BecchettiCMNPRT18} and the asynchronous gossip model~\cite{Mallmann-TrennM18}.



\paragraph{The Graph Laplacian Matrix.} We denote $\mathds 1$ for the vector with 1 in each entry and $\sqrt{\deg}=\Deg^{1/2}\mathds{1}$ for the vector with $(\sqrt{\deg})_v=\sqrt{\deg(v)}$ on index (equals vertex) $v$. 

Recall that the normalized Laplacian $L:=I-\Deg^{-1/2}A\Deg^{-1/2}$ has smallest eigenvalue $\lambda_1=0$. 
 This can be seen since $\sqrt{\deg}$ is an eigenvector of $L$ eigenvalue $0$:
\begin{align*}
    L\sqrt{\deg }&=(I-\Deg^{-1/2}A\Deg^{-1/2})\sqrt{\deg } = \sqrt{\deg } -\Deg^{-1/2}A \mathds{1}, 
\end{align*}
Note that $A\mathds{1}$ counts the number of 1s in each row\footnote{In weighted graphs, this is the sum of the weights in each row. Since the degrees are also weighted degrees, the proof remains the same.}, so this exactly sums up to the degree:
   $ L\sqrt{\deg}= \sqrt{\deg } -\Deg^{-1/2}\deg  =0$.
Hence, the smallest interesting eigenvalue to study is $\lambda_2$.

\paragraph{Approximating $\lambda_2$.}
To approximate the smallest non-zero eigenvalue of the Laplacian $L$, we consider the matrix $M:=2I-L$, which has eigenvalues $0\le \mu_1 \le \dots \le \mu_n$, satisfying $\mu_i=2-\lambda_{n-i+1}$. In particular, this means that $\mu_n=2$ and $\mu_{n-1}=\lambda_2$. 
In order to compute  $\lambda_2$, one would like to subtract the components in the direction $v_n=\sqrt{\deg}$, the eigenvector corresponding to the largest eigenvalue $\mu_n$. As a result, the prior largest eigenvalue is shifted to $0$, and the power method converges to the second largest eigenvalue instead.  

As per the discussion above, we know the largest eigenvector $v_n=\sqrt{\deg}$ exactly. We can also subtract the components in this direction exactly. However, a significant facet here is the bit complexity. As we discussed before, small errors incurred from rounding are not an issue when we are approximating the \emph{largest} eigenvalue. However, due to rounding when we are approximation the second-largest eigenvalue, our vector will not remain perpendicular to $v_n$. Hence, given enough iterations, the process will always converge to the largest eigenvalue. 
Naively, we can combat this by projecting the current vector $x$ on the subspace orthogonal to $v_n$ in every step, or in other words, by removing components in direction $v_n$. This  requires  computing the inner product between $v_n$ and $x$. The latter is no problem in the centralized setting, but computing this inner product requires $\Omega(D)$ rounds in \CONGEST. Hence we want to limit how often we do this. We show that the error blows up at a rate that means that we only need to project periodically, in total at most  $\lceil \tfrac{\lambda_2}{\eps}\rceil$ times. In order to perform this efficiently, we show that all interesting cases actually have a diameter small enough to repeatedly perform these projections without increasing the runtime too much.

\paragraph{Approximating $\lambda_1,\ldots,\lambda_k$.} 
We compute the eigenvalue approximations sequentially. For approximating $\lambda_i$, we use the same general approach as for $\lambda_2$. Instead of projecting on the subspace orthogonal the largest eigenvalue, we project on the subspace orthogonal to the largest $i-1$ eigenvalues, and remove the resulting parts from consideration. The key challenge is that we do not know the corresponding eigenvectors exactly but they are only approximately known from our prior steps. We  show that the power method still converges despite these inaccuracies in the eigenvectors that we project on. This is one of the most technical parts of this paper and appears in \Cref{sec:kth_ev}.

\paragraph{Probabilistic Guarantees.}
The power method as stated in \Cref{lm:powerMethod} is correct with constant probability. This means that we can for example can also obtain our sparsest cut result, \Cref{thm:CONGESTSparsestCut}, in $O(\log n/\phi)$ rounds but with \emph{constant} probability. To boost the probability, we run $O(\log n)$ independent instances of the power method and return the maximum. Since we are looking for the largest eigenvalue, and we know that we cannot overshoot (see \Cref{lm:no_overshoot}), this gives the required approximation with high probability.

\paragraph{Lower Bound Constructions.}
We construct graphs by combining well-understood components---such as cliques, paths, and star graphs—where the sparsest cut is easy to determine. Using these building blocks, we define two graph families with significantly different sparsest cut values, yet with locally identical neighborhoods around certain vertices. As a result, these vertices cannot distinguish between the two families and thus cannot accurately approximate the sparsest cut without global information. This implies that approximating the sparsest cut requires $\Omega(D)$ rounds in our constructions.

It is challenging to use a similar approach to obtain a lower bound for eigenvalue approximation, as determining the eigenvalues of any graph is highly non-trivial. The eigenvalues of few specific graphs have been computed, like paths, cycles, and cliques, but determining them remains hard in general. This stems partly from the fact that gluing together graphs can have big effects on the eigenvalues. Cauchy's interlacing eigenvalue theorem~ \cite{hwang2004cauchy} gives some bounds, but this is unfortunately of no help. Another approach is to use the fact that we know the sparsest cut, together with Cheeger's Inequality. Rephrased Cheeger's Inequality gives $\phi^2/2\le\lambda_2 \le 2\phi$. So it gives a linear upper bound on the eigenvalue given the sparsest cut, but it has a square in the lower bound. At the end of the day, this means that only using Cheeger's would result in a lower bound of the form $\Omega(\sqrt D)$. For our $\Omega(D)$ lower bound, we need a better upper bound on the eigenvalues of the graphs used in the lower bound construction. We exploit that eigenvalues can also be bounded by specific vectors. The simplest such statement is that the largest eigenvalue of a matrix $M$ equals $\max_{x\in \R^n} \tfrac{x^T Mx}{x^Tx}$. Hence any $x$ can provide a lower bound for the largest eigenvalue. We use a similar statement for the second smallest eigenvalue and construct an appropriate vector $x$ to prove a tighter upper bound on the eigenvalues of our constructed graphs. In total, we are able to provide upper and lower bounds that characterize the eigenvalues of the graphs used in our lower bound construction up to a constant factor. 

\paragraph{Sparsest Cuts and Cheeger's Inequality.}
Finally, let us explain how the eigenvalue results enable us to derive an approximation of the sparsest cut.

Let us obtain some intuition why eigenvalue $\lambda_2$ relates to the connectedness of the graph. In the extreme case that $\lambda_2=0$ holds the graph is disconnected:  On each connected component, the degree vector is still an eigenvector with eigenvalue $0$. Hence a vector that has $\sqrt{\deg}$ on one connected component, and all $0$s on the other connected component is also an eigenvector -- and linearly independent of $\sqrt{\deg}$. The reverse direction, i.e., that the graph is connected if $\lambda_2\neq 0$ holds, is also true but its proof is slightly more involved, so we omit it in this overview, see e.g.~\cite{chung1997spectral}. 

Cheeger's Inequality~\cite{AlonM85,Alon86,SinclairJ89} gives a more precise statement about the connectedness of the graph and the second smallest eigenvalue: it relates $\lambda_2$ to the sparsest cut~$\phi$. 

\begin{theorem}[Cheeger's Inequality]\label{thm:Cheeger}
    Let $G = (V, E)$ be an undirected, weighted graph, then
    \begin{equation*}
        \frac{\lambda_2}{2} \le \phi \le \sqrt{2 \lambda_2}~.
    \end{equation*}
\end{theorem}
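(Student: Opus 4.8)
The plan is to prove the two inequalities separately, both via the variational (Courant--Fischer) characterization of $\lambda_2$. Substituting $x=\Deg^{1/2}y$ turns the Rayleigh quotient of $L$ into
\[
\frac{x^T L x}{x^T x}=\frac{\sum_{\{u,v\}\in E} w_{uv}(y_u-y_v)^2}{\sum_{v}\deg(v)\,y_v^2}=:R(y),
\]
where $w_{uv}$ is the edge weight, and turns the orthogonality constraint $x\perp\sqrt{\deg}$ into $\sum_v\deg(v)y_v=0$. Since $\sqrt{\deg}$ is the eigenvector for $\lambda_1=0$, this gives $\lambda_2=\min\{R(y): y\neq 0,\ \sum_v\deg(v)y_v=0\}$. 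Everything reduces to choosing good test vectors $y$ and to a rounding argument in the other direction.

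For the easy direction $\lambda_2\le 2\phi$: I would take a cut $S$ attaining $\phi=\phi(S)$ and plug in the test vector $y_v=1/\Vol(S)$ for $v\in S$ and $y_v=-1/\Vol(V\setminus S)$ for $v\notin S$. This $y$ satisfies the constraint by construction; its numerator equals $|E(S,V\setminus S)|\cdot\bigl(1/\Vol(S)+1/\Vol(V\setminus S)\bigr)^2$ and its denominator equals $1/\Vol(S)+1/\Vol(V\setminus S)$, so $\lambda_2\le R(y)=|E(S,V\setminus S)|\cdot\bigl(1/\Vol(S)+1/\Vol(V\setminus S)\bigr)\le 2\phi(S)=2\phi$, using $1/\Vol(S)+1/\Vol(V\setminus S)\le 2/\min\{\Vol(S),\Vol(V\setminus S)\}$.

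For the hard direction $\phi\le\sqrt{2\lambda_2}$ (the discrete Cheeger rounding): let $y$ be an optimal vector, so $R(y)=\lambda_2$ and $\sum_v\deg(v)y_v=0$. I would first shift $y$ by a weighted median $c$, i.e. replace $y$ by $y-c$ where $\Vol\{y>c\}\le\Vol(V)/2$ and $\Vol\{y<c\}\le\Vol(V)/2$; shifting leaves the numerator unchanged and only increases the denominator (since $\sum_v\deg(v)(y_v-c)^2=\sum_v\deg(v)y_v^2+c^2\Vol(V)$), so $R(y-c)\le\lambda_2$, and now both $\{y>0\}$ and $\{y<0\}$ have volume at most $\Vol(V)/2$. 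Next I split $y=y^+-y^-$ into positive and negative parts; an edge-by-edge check shows $\mathrm{num}(y^+)+\mathrm{num}(y^-)\le\mathrm{num}(y)$ while the denominators add exactly, so by the mediant inequality one of them, call it $g$, has $R(g)\le\lambda_2$ and is supported on a set of volume at most $\Vol(V)/2$. Finally, scale so $\max g=1$, pick $t$ uniform in $[0,1]$, and form the sweep cut $S_t=\{v:g_v^2>t\}$. Then $\mathbb{E}[\Vol(S_t)]=\mathrm{denom}(g)$ and, using $|g_u^2-g_v^2|=|g_u-g_v|(g_u+g_v)$, Cauchy--Schwarz, and $(g_u+g_v)^2\le 2(g_u^2+g_v^2)$, we get $\mathbb{E}[|E(S_t,V\setminus S_t)|]\le\sqrt{2\,\mathrm{num}(g)\,\mathrm{denom}(g)}\le\sqrt{2\lambda_2}\cdot\mathbb{E}[\Vol(S_t)]$. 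Hence some threshold gives $|E(S_t,V\setminus S_t)|\le\sqrt{2\lambda_2}\,\Vol(S_t)$, and since $\Vol(S_t)\le\Vol(V)/2$ this set witnesses $\phi\le\phi(S_t)\le\sqrt{2\lambda_2}$.

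I expect the main obstacle to be the hard direction, specifically ensuring the rounded cut has its smaller side equal to $\Vol(S_t)$ — this is exactly what the median shift plus the positive/negative split buy us — and tracking constants through the Cauchy--Schwarz step, where a careless bound loses the factor $2$ inside the square root. A few degenerate cases (one of $y^+,y^-$ identically zero, or an empty sweep cut) need a one-line check but cause no real trouble; the easy direction and the variational reformulation are routine.
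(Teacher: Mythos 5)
The paper does not actually prove this statement: Cheeger's Inequality is imported as a known result, cited to Alon--Milman, Alon, and Sinclair--Jerrum, so there is no in-paper proof to compare against. Your argument is the standard proof of the inequality and it is correct. The easy direction with the test vector $y_v=1/\Vol(S)$ on $S$ and $y_v=-1/\Vol(V\setminus S)$ off $S$ gives exactly $\lambda_2\le |E(S,V\setminus S)|\bigl(1/\Vol(S)+1/\Vol(V\setminus S)\bigr)\le 2\phi$, as you compute. For the hard direction, the three steps you isolate are precisely where the content lies and each is handled soundly: the weighted-median shift leaves the numerator fixed and can only enlarge the denominator while guaranteeing both sign classes have volume at most $\Vol(V)/2$; the split $y=y^+-y^-$ with the edge-by-edge inequality $(y^+_u-y^+_v)^2+(y^-_u-y^-_v)^2\le(y_u-y_v)^2$ plus the mediant inequality produces a one-signed $g$ with $R(g)\le\lambda_2$ supported on a set of volume at most $\Vol(V)/2$ (which is what makes the sweep cut's smaller side be $\Vol(S_t)$); and the threshold rounding via $|g_u^2-g_v^2|=|g_u-g_v|(g_u+g_v)$, Cauchy--Schwarz, and $(g_u+g_v)^2\le 2(g_u^2+g_v^2)$ delivers the constant $\sqrt{2}$ exactly. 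The degenerate cases you flag (one of $y^\pm$ vanishing, $t$ hitting an empty level set) are indeed one-liners.
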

The fact that $\lambda_2=0$ holds if and only if the graph is disconnected is a corollary of \Cref{thm:Cheeger}. 

Since Cheeger's Inequality states that the value of $\lambda_2$ approximates $\phi$, we can find an approximation $\tilde \lambda_2$ for $\lambda_2$ with \Cref{thm:EigenvaluesMain} and use it to output an approximation $\tilde \phi = \sqrt{2\tilde \lambda_2}$ of the sparsest cut, establishing \Cref{thm:CONGESTSparsestCut}. 

For larger values of $k$, similar statements hold. First, we remark that $\lambda_k=0$ if and only if the graph consists of at least $k$ connected components. One direction of this statement is again easy to see: as before, on each connected component the $\sqrt{\deg}$ vector is an eigenvector with eigenvalue $0$. In total, this gives $k$ independent eigenvalues with eigenvector $0$ -- in other words, $\lambda_k=0$. 

Similar as for $k=2$, this generalizes by relating $\lambda_k$ to $\phi_k$:
Lee, Gharan, and Trevisan showed the following theorem which we then use to establish \Cref{thm:CONGESTkWaySparsestCut} via \Cref{thm:EigenvaluesMain}.

\begin{theorem}[Higher-Order Cheeger's Inequality, \cite{LeeGT12}]\label{thm:HO_Cheeger}
Let $G = (V, E)$ be an undirected, weighted graph, then
 $   \frac{\lambda_k}{2}\le \phi_k \le O(k^2) \sqrt{\lambda_k}~.$
\end{theorem}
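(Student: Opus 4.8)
I would prove the two inequalities separately. The lower bound $\lambda_k/2\le\phi_k$ is a direct generalization of the easy half of Cheeger's Inequality (\Cref{thm:Cheeger}): let $V_1,\dots,V_k$ be disjoint non-empty sets attaining $\phi_k$, and for each $i$ set $f_i:=\Deg^{1/2}\mathds 1_{V_i}$, the vector with entry $\sqrt{\deg(v)}$ on $v\in V_i$ and $0$ elsewhere. These $k$ vectors have pairwise disjoint supports, hence are orthogonal and span a $k$-dimensional subspace $W$, so by the Courant--Fischer min-max theorem $\lambda_k\le\max_{0\ne f\in W}\frac{f^TLf}{f^Tf}$. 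Writing a generic $f\in W$ as $f=\sum_i c_i f_i$ and $g:=\Deg^{-1/2}f=\sum_i c_i\mathds 1_{V_i}$, one has $f^TLf=\sum_{\{u,v\}\in E}w_{uv}(g_u-g_v)^2$ and $f^Tf=\sum_i c_i^2\Vol(V_i)$; since $g$ is constant on each $V_i$, only edges leaving some $V_i$ contribute, and the bound $(c_i-c_j)^2\le 2(c_i^2+c_j^2)$ gives $f^TLf\le 2\sum_i c_i^2\,|E(V_i,V\setminus V_i)|$. Hence $\frac{f^TLf}{f^Tf}$ is a $c_i^2\Vol(V_i)$-weighted average of the quantities $2\,|E(V_i,V\setminus V_i)|/\Vol(V_i)$, so it is at most $2\phi_k$, which yields $\lambda_k\le 2\phi_k$.

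For the upper bound $\phi_k\le O(k^2)\sqrt{\lambda_k}$, the genuinely hard direction due to Lee, Gharan, and Trevisan~\cite{LeeGT12}, I would use the spectral-embedding-plus-localization strategy. Pick orthonormal eigenvectors $f_1,\dots,f_k$ of $L$ with $f_i^TLf_i=\lambda_i\le\lambda_k$, and form the spectral embedding $\tilde F:V\to\R^k$, $\tilde F(v):=\big((\Deg^{-1/2}f_1)_v,\dots,(\Deg^{-1/2}f_k)_v\big)$. Summing the identity $f_i^TLf_i=\sum_{\{u,v\}\in E}w_{uv}\big((\Deg^{-1/2}f_i)_u-(\Deg^{-1/2}f_i)_v\big)^2$ over $i$ gives the energy bound $\sum_{\{u,v\}\in E}w_{uv}\|\tilde F(u)-\tilde F(v)\|_2^2=\sum_{i\le k}\lambda_i\le k\lambda_k$, while orthonormality gives $\sum_v\deg(v)\|\tilde F(v)\|_2^2=k$, so $\tilde F$ has ``Rayleigh quotient'' at most $\lambda_k$ with respect to the volume measure.

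The remaining step --- converting $\tilde F$ into $k$ disjoint vertex sets of small sparsity --- is the one I expect to be the main obstacle. The plan is: (i) prove a \emph{localization (spreading) lemma} stating that whenever the mass of $\tilde F$ in a region is not overly volume-concentrated, one can extract a ``bump'' function supported on a proper sub-region whose Rayleigh quotient exceeds $\lambda_k$ by only a $\poly(k)$ factor, via random radial cut-offs and padded-decomposition-type arguments in the $\ell_2^2$-geometry of the embedding. (ii) Iterate (i), each round carving off one bump function and discarding a thin separating collar, until $k$ functions $h_1,\dots,h_k$ with pairwise disjoint supports and $h_i^TLh_i/h_i^Th_i=O(k^4\lambda_k)$ are produced. (iii) Apply the Cheeger sweep-cut rounding underlying the upper bound of \Cref{thm:Cheeger} to each $h_i$ separately, obtaining $S_i\subseteq\mathrm{supp}(h_i)$ with $\phi(S_i)\le\sqrt{2\cdot O(k^4\lambda_k)}=O(k^2)\sqrt{\lambda_k}$; since the supports are disjoint, the $S_i$ are disjoint and non-empty and witness $\phi_k\le O(k^2)\sqrt{\lambda_k}$. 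The delicate points are choosing the volume-concentration threshold in the localization lemma and tracking the $\poly(k)$ loss carefully enough that exactly $k$ --- rather than only $\Omega(k)$ --- sets survive the iteration; this is where I would rely most heavily on the machinery of \cite{LeeGT12}.
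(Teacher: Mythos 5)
The paper does not prove this theorem at all; it is imported as a black box from Lee, Gharan, and Trevisan \cite{LeeGT12}, so there is no in-paper argument to compare against. Your proof of the easy direction $\lambda_k/2\le\phi_k$ is complete and correct: the disjoint supports give orthogonality, Courant--Fischer applies to the span of the $f_i$, the energy computation $f^TLf=\sum_{\{u,v\}\in E}w_{uv}(g_u-g_v)^2$ together with $(c_i-c_j)^2\le 2(c_i^2+c_j^2)$ correctly charges each crossing edge to its endpoints' sets, and the mediant inequality finishes it. (Note the paper's $\phi_k$ uses $\Vol(V_i)$ rather than $\min\{\Vol(V_i),\Vol(V\setminus V_i)\}$ in the denominator, which is exactly what your weighted average produces, so the definitions line up.) For the hard direction you correctly identify the LGT strategy --- spectral embedding with total Rayleigh quotient $\le\lambda_k$, a spreading/localization step producing $k$ disjointly supported functions with Rayleigh quotient $\poly(k)\lambda_k$, and a per-function Cheeger sweep --- but step (i), the localization lemma, is only described, not established; it is the entire technical content of \cite{LeeGT12} and cannot be treated as routine. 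As written, your upper-bound argument is therefore a faithful proof sketch that defers to the cited work, which is no less than what the paper itself does, but it should not be presented as a self-contained proof.
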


\subsection{Outline}
We first give our main results on eigenvalue estimation in \Cref{sec:EV}, i.e., we prove \Cref{thm:EigenvaluesMain}. This sections starts with the formal description of the Laplacian matrix and the power method, and then proves \Cref{thm:EigenvaluesMain} in reverse order: (3) in \Cref{sec:largest_ev}, (2) in \Cref{sec:smallest_ev}, and (1) in \Cref{sec:kth_ev}. We conclude \Cref{sec:EV}, with our lower bound on eigenvalue estimation, \Cref{thm:eigenvalue_LB}, and a description of how to use similar methods to approximate the entire spectrum.
Then, in \Cref{sec:sparsest_cut}, we provide our upper and lower bounds on ($k$-way) sparest cuts. 
To be self-contained, we give a proof for the power method in \cref{app:powerMethod}.

  \newpage
  \section{Approximating Eigenvalues} \label{sec:EV}
In \Cref{sec:Laplacian_preliminaries}, we begin with the necessary background and introducing notation. In \Cref{sec:power_method_description}, we give a formal description of the power method and state its most important properties. 
Then, we prove \Cref{thm:EigenvaluesMain} part (2) and part (3), \Cref{thm:CONGEST_largest,thm:CONGEST_smallest}, proceeding in the reverse order: First, we give the algorithm for the largest eigenvalue $\lambda_n$ in \Cref{sec:largest_ev}. Then, we give the algorithm for $\lambda_2$ in \Cref{sec:smallest_ev}.

\subsection{The Graph Laplacian Matrix.}\label{sec:Laplacian_preliminaries}
First, we will give some definitions and notation. For a more thorough introduction of spectral graph theory, we refer to, e.g., \cite{chung1997spectral}. 

The normalized Laplacian is defined as $L:=I- \Deg^{-1/2}A\Deg^{-1/2}$, where $\Deg$ is a diagonal matrix with the (weighted) vertex degrees on the diagonal, and $A$ is the adjacency matrix. Let us inspect some of its properties. First of all, we note it is \emph{symmetric}, i.e., $L^T=L$. We know that all symmetric matrices allow for a set of \emph{orthonormal} eigenvector $v_1, \dots, v_n$. That means the eigenvectors are \emph{orthogonal}: $\langle v_i,v_j\rangle=0$ for $i\neq j$, and have norm 1: $||v_i||=1$ for all $i$. Moreover, any Laplacian matrix is \emph{positive semi-definite}. This means that all its eigenvalues are nonnegative. Equivalently, we have for any $x\in \R^n$ that $x^TLx\ge 0$. More precisely, it has eigenvalues $0=\lambda_1\leq \lambda_2 \leq \dots \le \lambda_n\le 2$. The eigenvalue $0$ can easily be verified since $Lv_1=0$ for $v_1:=\sqrt{\deg}$, the vector with in each entry the square root of the degree of the corresponding vertex. 

As explained in \Cref{ssec:techOverview}, $\lambda_2=0$ if and only if the graph is disconnected. More generally, $\lambda_k=0$ if and only if the graph consists of $k$ connected components. In the \CONGEST model, we may assume a connected graph, or alternatively apply the results on each component of the communication network separately.  Hence we have $\lambda_2 >0$. 


\paragraph{Notation.}
Throughout, $x_1,x_2, \dots$ denote distinct vectors. We write $(x)_v$ for the $v$-th entry of a vector $x$, i.e., the entry of the vector $x$ that is hold by vertex~$v$. 

We reserve $0=\lambda_1\leq \lambda_2 \leq \dots \le \lambda_n\le 2$ for the eigenvalues of the normalized Laplacian. For a general $n$ by $n$ matrix $M$, we denote its eigenvalues by $\mu_1 \le \mu_2 \le \dots \le \mu_n$. We note that in the following $M$ and $L$ might differ, however, they will always have the same eigenvectors. Denote these as $v_1, \dots, v_n$, corresponding to $\mu_1, \dots, \mu_n$ and $\lambda_1, \dots, \lambda_n$ respectively. 

\subsection{Power Method}\label{sec:power_method_description}

The core of our eigenvalue approximation is an efficient CONGEST implementation of the power method. We recall its main measure of progress. 
\PowerMethod*

 Most results in this section have appeared in the literature before and we do not claim any novel contribution. To be self-contained and as our statements slightly differ from statements in the literature we present proofs of all presented lemmas in \Cref{app:powerMethod}. To the best of our knowledge, the only exception is the bit-precision lemma that we present at the end of the section. It is central to implementing the power method in CONGEST and essentially says that rounding intermediate values does not significantly affect the performance of the power method; also see the paragraph on `exploding values' in \Cref{ssec:techOverview}. It may be slightly stronger than what is needed in the centralized setting and potentially also of independent interest.

\Cref{lm:powerMethod} states that there is a constant probability that the power method fails to find the largest eigenvalue. Hence, we need to run $O(\log n)$ instances of it and return the largest eigenvalue found during any of the independent runs. The following lemma shows that we never return any value larger than the largest eigenvalue. It is also needed to control the growth of numbers which is essential to implement the method in CONGEST.
 
\begin{restatable}{lemma}{lemNoOvershoot}\label{lm:no_overshoot}
    For any vector $x\in \R^k$ we have that  $\frac{x^TMx}{x^Tx}\le \mu_n$.
\end{restatable}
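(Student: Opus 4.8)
The claim is that the Rayleigh quotient $\frac{x^T M x}{x^T x}$ is bounded above by the largest eigenvalue $\mu_n$ for every nonzero $x \in \mathbb{R}^n$. The natural approach is to expand $x$ in the orthonormal eigenbasis. Since $M$ is symmetric (it is of the form $2I - L$, or more generally any matrix sharing eigenvectors with $L$), it admits an orthonormal eigenbasis $v_1, \dots, v_n$ with $M v_i = \mu_i v_i$. Write $x = \sum_{i=1}^n c_i v_i$ where $c_i = \langle x, v_i \rangle$. The plan is then to compute both the numerator and denominator of the Rayleigh quotient in terms of the coefficients $c_i$.

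First I would compute the denominator: by orthonormality, $x^T x = \sum_{i=1}^n c_i^2$. Next the numerator: $M x = \sum_i c_i \mu_i v_i$, so again by orthonormality $x^T M x = \sum_{i=1}^n \mu_i c_i^2$. Therefore
\begin{equation*}
    \frac{x^T M x}{x^T x} = \frac{\sum_{i=1}^n \mu_i c_i^2}{\sum_{i=1}^n c_i^2}.
\end{equation*}
This is a weighted average of the eigenvalues $\mu_1, \dots, \mu_n$ with nonnegative weights $c_i^2$, hence it is at most $\max_i \mu_i = \mu_n$ (and, incidentally, at least $\mu_1$, which is why no overshoot — and no undershoot below the smallest eigenvalue — can occur). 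This establishes the bound.

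There is really no serious obstacle here; the only point requiring a word of care is making sure we are entitled to an orthonormal eigenbasis, which holds because $M$ is symmetric — this is already noted in \Cref{sec:Laplacian_preliminaries} for $L$, and the matrices $M$ used in the paper (such as $2I - L$) are symmetric with the same eigenvectors. One should also note the statement is vacuous or trivially handled when $x = 0$ (the quotient is undefined), so we implicitly assume $x \neq 0$; alternatively the paper's convention that this expression only appears for the output vector $x_k$ of the power method, which is nonzero with probability $1$, sidesteps the issue entirely.
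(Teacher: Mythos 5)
Your proof is correct and follows essentially the same route as the paper: expand $x$ in the orthonormal eigenbasis, write the Rayleigh quotient as a weighted average $\sum_i \mu_i c_i^2 / \sum_i c_i^2$, and bound it by $\max_i \mu_i = \mu_n$. The paper's proof is identical in substance (with coefficients denoted $\alpha_j$), so nothing further is needed.
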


Regardless of whether in the centralized setting or in CONGEST, the intermediate vectors may consist of quickly growing values that cannot be efficiently represented. The following lemma states that truncating all values after $O(\log n)$ bits does not affect the outcome of the power method by too much. Note that the statement may sound trivial as a precision of $1/\poly n$ for each single number seems to simply result in an imprecision of $k/\poly n$ after $k$ steps. A closer look at the power method reveals that errors may amplify and exponentially grow over time. Hence, the next lemma requires an actual proof that appears in \Cref{app:powerMethod}.
\begin{restatable}{lemma}{lemNumStab}\label{lm:num_stab}
    Let $k\ge i \ge 1$. Throughout the power method, rounding each element of $x_i:=M^ix_0$  to the $\Theta(\log n)$-th bit, does not impact the outcome of $\frac{(x_k)^TMx_k }{(x_k)^Tx_k}$ by more than a factor $(1\pm n^{-\Theta(1)})$.
\end{restatable}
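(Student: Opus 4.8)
My plan is to track how a single rounding step propagates through the iteration and then sum the contributions, being careful that the relevant quantity is always normalized by the current norm. Write $\hat x_i$ for the actual (rounded) vector maintained by the algorithm and $x_i = M^i x_0$ for the idealized one. The key observation is that rounding $\hat x_i$ to the $\Theta(\log n)$-th significant bit introduces an \emph{additive} error vector $e_i$ with $\|e_i\|_\infty \le 2^{-\Theta(\log n)}\cdot \|\hat x_i\|_\infty$, hence $\|e_i\| \le n^{-\Theta(1)}\cdot \|\hat x_i\|$ by choosing the constant in $\Theta(\log n)$ large enough. So each step replaces $\hat x_i$ by $\hat x_i + e_i$ with $\|e_i\|\le \delta \|\hat x_i\|$ for $\delta = n^{-c}$, and then multiplies by $M$. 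First I would argue that $\|M\| = \mu_n \le 2$ (for $M = 2I - L$, or more generally this is where we use the explicit bound on the spectrum), so multiplication by $M$ does not amplify norms by more than a constant factor per step; the danger is instead that the error accumulated over $k = \Theta(\log n/\eps)$ steps is measured against a \emph{much smaller} surviving signal, since $\|x_k\|$ can be exponentially smaller than $\|M\|^k\|x_0\|$ when the dominant eigencomponent of $x_0$ is small.

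The heart of the argument is a relative error bound. Let $\alpha = \langle x_0, v_n\rangle$ be the component of the start vector along the dominant eigenvector. Conditioned on the constant-probability event of Lemma \ref{lm:powerMethod} (which we may do, since the lemma's guarantee is already only "with constant probability"), we have $|\alpha| \ge \Omega(1/\sqrt n)\cdot\|x_0\|$, and therefore $\|x_k\| \ge |\alpha|\,\mu_n^k \ge \Omega(\mu_n^k/\sqrt n)$. Now I would prove by induction on $i$ that $\|\hat x_i - x_i\| \le \eta_i \|x_i\|$ — or more robustly, that $\hat x_i = x_i + g_i$ where $\|g_i\|$ is bounded. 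Each error $e_j$ introduced at step $j$ gets multiplied by $M^{i-j}$ afterwards, contributing at most $\|M^{i-j}\|\cdot\|e_j\| \le \mu_n^{i-j}\cdot\delta\|\hat x_j\|$ to $g_i$. Since $\|\hat x_j\| \le (1+o(1))\mu_n^j\|x_0\|$ (the rounded iterates cannot be much larger than the ideal upper envelope — this needs a short separate induction using $\|M\|\le\mu_n$ and $\|e_j\|\le\delta\|\hat x_j\|$, giving $\|\hat x_{j+1}\| \le \mu_n(1+\delta)\|\hat x_j\|$ and hence $\|\hat x_j\|\le \mu_n^j(1+\delta)^j\|x_0\| \le 2\mu_n^j\|x_0\|$ for $\delta k = o(1)$), we get
\begin{equation*}
\|g_k\| \;\le\; \sum_{j=1}^{k} \mu_n^{k-j}\cdot\delta\cdot 2\mu_n^j\|x_0\| \;=\; 2\delta k\,\mu_n^k\|x_0\| \;=\; n^{-\Theta(1)}\cdot\mu_n^k\|x_0\|,
\end{equation*}
and comparing with the lower bound $\|x_k\| \ge \Omega(\mu_n^k/\sqrt n)\|x_0\|$ yields $\|g_k\| \le n^{-\Theta(1)}\|x_k\|$ after absorbing the $\sqrt n$ into the polynomial (again by enlarging the constant $c$ in $\delta = n^{-c}$). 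So the computed vector $\hat x_k$ is within relative error $n^{-\Theta(1)}$ of the ideal $x_k$.

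Finally I would push this relative error through the Rayleigh quotient. Writing $\hat x_k = x_k + g_k$ with $\|g_k\| \le \beta\|x_k\|$, $\beta = n^{-\Theta(1)}$, we expand
\begin{equation*}
\frac{\hat x_k^T M \hat x_k}{\hat x_k^T \hat x_k} = \frac{x_k^T M x_k + 2 g_k^T M x_k + g_k^T M g_k}{x_k^T x_k + 2 g_k^T x_k + g_k^T g_k},
\end{equation*}
and bound each cross term by Cauchy–Schwarz together with $\|M\|\le 2$: the numerator is $x_k^T M x_k (1 \pm O(\beta))$ and the denominator is $x_k^T x_k(1\pm O(\beta))$, because $|g_k^T M x_k| \le 2\beta\|x_k\|^2$ while $x_k^T M x_k \ge 0$ may be small — here I would note that the cleanest route is to bound the perturbation of the quotient directly using $\mu_n \ge$ the quotient $\ge 0$ (Lemma \ref{lm:no_overshoot}), so the additive change in the quotient is $O(\beta\mu_n) = O(\beta)$, which is multiplicatively $(1\pm n^{-\Theta(1)})$ relative to the quoted value in Lemma \ref{lm:powerMethod} since that value is $\Theta(\mu_n)$ on the good event. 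The main obstacle is exactly this last subtlety: the Rayleigh quotient itself, and the surviving norm $\|x_k\|$, can both be tiny, so one must be scrupulous that every error is measured \emph{relative to the right scale} — relative to $\mu_n^k\|x_0\|$ for the vector and relative to $\mu_n$ for the quotient — rather than naively relative to $\|x_0\|$; getting the bookkeeping of these scales right, and noting it only needs to hold on the constant-probability success event, is where the real work lies.
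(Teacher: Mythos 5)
Your proposal is correct and follows essentially the same route as the paper's proof: condition on the constant-probability event $|\langle x_0,v_n\rangle|=\Omega(1)$, note that each rounding error is at most $\delta\|x_i\|\le \delta\mu_n^i\sqrt n$ while the surviving signal along $v_n$ is at least $\Omega(\mu_n^k)$, and push the resulting $n^{-\Theta(1)}$ relative perturbation through the Rayleigh quotient via Cauchy--Schwarz. The only (cosmetic) difference is bookkeeping: you aggregate all $k$ rounding errors into one perturbation $g_k$ of the final iterate, whereas the paper bounds the effect of a single rounding on the quadratic forms $x_i^TM^{2l+1}x_i$ and $x_i^TM^{2l}x_i$ and lets the $(1\pm n^{-\Theta(1)})$ factors compound over the $\mathrm{poly}(n)$ roundings.
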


\subsection{Warm-up: The Largest Eigenvalue}\label{sec:largest_ev}


First, we show how the power method applied directly to the Laplacian gives an approximation of the largest eigenvalue. 
\begin{restatable}[Approximating $\lambda_n$]{theorem}{CONGESTlargest}\label{thm:CONGEST_largest}
    Let $\eps>0$. There is a randomized \CONGEST algorithm that, given an undirected, weighted graph $G=(V,E)$, with high probability computes a $(1\pm \eps)$-approximation of the largest eigenvalue $\lambda_n$ of the normalized Laplacian. The algorithm takes $O( D+\log^2 n/\eps )$ rounds. 
\end{restatable}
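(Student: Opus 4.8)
\textbf{Proof proposal for \Cref{thm:CONGEST_largest}.}

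The plan is to apply the power method (\Cref{lm:powerMethod}) to a PSD matrix whose largest eigenvalue encodes $\lambda_n$, implement the iteration in \CONGEST, boost the success probability, and finally compute the Rayleigh quotient. Since $L$ itself is PSD with eigenvalues in $[0,2]$, but we want its \emph{largest} eigenvalue and the power method already returns the largest eigenvalue of the matrix it is applied to, the most direct choice is to run the power method on $M := L$ directly. Then $\mu_n = \lambda_n$, and \Cref{lm:powerMethod} with $k = \Theta(\log n/\eps)$ gives, with constant probability, a vector $x_k$ with $(1-\eps)\lambda_n \le \tfrac{x_k^T M x_k}{x_k^T x_k} \le \lambda_n$, i.e.\ a one-sided $(1\pm\eps)$-approximation. (One subtlety: \Cref{lm:powerMethod} needs $\eps \ge 16/n$; for smaller $\eps$ one can fall back to collecting a spectral sparsifier or simply note the runtime bound absorbs this, but I would just state the mild assumption $\eps = \Omega(1/n)$ or handle tiny $\eps$ by gathering the graph in $O(n)= O(D + \log^2 n/\eps)$ rounds — I expect the paper handles this via the sparsest-cut-value-is-bounded-away argument, but for $\lambda_n$ that doesn't apply, so I'd lean on $\eps\ge 16/n$.)

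Next, the \CONGEST implementation. Place entry $(x_i)_v$ at vertex $v$. Each iteration $x_i \gets M x_{i-1}$ is a single round: $v$ needs $(Lx_{i-1})_v = (x_{i-1})_v - \sum_{u \sim v} \tfrac{w_{uv}}{\sqrt{\deg(v)\deg(u)}}(x_{i-1})_u$, so each neighbor $u$ sends $(x_{i-1})_u$ (and $\deg(u)$, once, at the start) to $v$. The issue of exploding (or here, since $\|L\|\le 2$, merely not-shrinking-to-underflow, but still potentially badly-scaled) values is handled by \Cref{lm:num_stab}: truncate each $(x_i)_v$ to $\Theta(\log n)$ bits (keeping a shared exponent/order of magnitude), which changes the final Rayleigh quotient by at most a $(1\pm n^{-\Theta(1)})$ factor, absorbable into $\eps$ (adjust the constant in $k$). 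This costs $O(\log n/\eps)$ rounds for the $k$ iterations. For probability boosting, run $\Theta(\log n)$ independent copies in parallel (this multiplies message size by $\Theta(\log n)$, still $O(\log n)$ total per the convention, or costs a $\log n$ factor in rounds if one is careful — the paper's $O(\log^2 n/\eps)$ term suggests they run them sequentially or accept the factor), and by \Cref{lm:no_overshoot} no copy ever overshoots $\lambda_n$, so returning the \emph{maximum} Rayleigh quotient over all copies is still a valid lower bound, and w.h.p.\ at least one copy achieves $(1-\eps)\lambda_n$. Finally, computing $x_k^T M x_k = x_k^T L x_k$ and $x_k^T x_k$ requires summing $n$ local contributions (note $x_k^T L x_k = \sum_v (x_k)_v (Lx_k)_v$, and $Lx_k$ is one more matrix-vector product), which is a global aggregation over a BFS tree: $O(D)$ rounds. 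This also lets every vertex learn the answer.

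The main obstacle I anticipate is \emph{bookkeeping the scaling across the $\Theta(\log n)$ parallel runs together with the truncation}, so that (a) the messages genuinely fit in $O(\log n)$ bits, (b) \Cref{lm:num_stab} applies to each run, and (c) the final aggregation correctly recombines the per-run mantissas and exponents to compare Rayleigh quotients — because the quantities $x_k^T L x_k$ and $x_k^T x_k$ are themselves sums of $n$ terms that may span many orders of magnitude, so the aggregation needs to be done in a numerically careful way (e.g.\ normalizing $x_k$ to have, say, $\|x_k\|_\infty$ a known power of two before the final reduction). Everything else — correctness of a single power-method run, the $O(1)$-round matrix-vector step, the $O(D)$ diameter term — is routine given the lemmas already stated. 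A minor point to address cleanly: $L$ is PSD and its eigenvalue $\lambda_1 = 0$ with eigenvector $\sqrt{\deg}$ plays no special role here since we seek the \emph{top} eigenvalue, so unlike the $\lambda_2$ case no deflation/projection is needed — this is exactly why this theorem is the ``warm-up''.
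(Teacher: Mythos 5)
Your proposal is correct and follows essentially the same route as the paper: run the power method directly on $M=L$ for $\Theta(\log n/\eps)$ iterations, truncate values via \Cref{lm:num_stab}, boost with $O(\log n)$ independent runs using \Cref{lm:no_overshoot} to justify taking the maximum, and pipeline the final $O(D)$-round Rayleigh-quotient aggregations. The paper likewise dismisses the tiny-$\eps$ issue by assuming $\eps=\Omega(1/n^2)$ without loss of generality.
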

\begin{proof}
    We use the power method, \Cref{lm:powerMethod}, with $M=L$ and $k=\Theta(\log n/\eps)$ iterations. Each iteration consists of a matrix-vector multiplication, in which each vertex $v$ needs to send its current value of $(x)_v$ to its neighbors.    
    After this, we perform one additional matrix-vector multiplication and compute $\tfrac{x^TMx}{x^Tx}$ in $O(D)$ rounds.


    \textbf{Bit precision.}
    \Cref{lm:num_stab} states that we can truncate each value after the $O(\log n)$ most significant bits. We  additionally remember the order of magnitude of the values. Since we have $k=\Theta(\log n/\eps)$ iterations, and our numbers are of size $2^k$, we need at most $\log\log (2^k)=\log(\log n/\eps) =O(\log n)$ bits to store the order of magnitude. Here we use that without loss of generality $\eps =\Omega(1/n^2)$. 
    We conclude that we can communicate each value up to the required precision in $O(\log n)$ bits. 

    \textbf{Success Probability.} 
    Note that \Cref{lm:powerMethod} only gives constant probability where the probability is over the choice of the random start vector. Repeating the process $O(\log n)$ times and take the maximum to obtain the result with high probability, since we cannot overshoot by \Cref{lm:no_overshoot}. 
    The matrix-vector multiplications take $O(\log n \cdot \frac{\log n}{\eps})$. However, for every instance, we also need to compute a final inner product that takes $O(D)$ time each. 
    We can pipeline these inner product for the $O(\log n)$ instances to obtain congestion plus dilation time for this part: $O(\log n+D)$. Combining the running times, we obtain 
    $O(\log^2 n/\eps+ D+\log n)=(\log^2 n/\eps+ D )$ rounds. 
\end{proof}

  \subsection{The Smallest Non-Zero Eigenvalue}\label{sec:smallest_ev}
Next, we consider the smallest non-zero eigenvalue. The main idea is to consider $M=2I-L$, such that the smallest eigenvalue becomes the largest, and then apply the power method. However, we are looking at the \emph{smallest non-zero} eigenvalue. So we take additional measures to avoid finding the smallest ($=0$) eigenvalue. 



\begin{restatable}[Approximating $\lambda_2$]{theorem}{CONGESTsmallest}\label{thm:CONGEST_smallest}
Let $\eps>0$. There is a randomized \CONGEST algorithm that, given an undirected, weighted graph $G=(V,E)$, with high probability computes an approximation of the smallest non-zero eigenvalue $\lambda_2$ of the normalized Laplacian. The algorithm takes 
\begin{equation*}
    O\left(\frac{\log^2 n}{\eps} + \frac{\log n}{\eps}\cdot \left\lceil \frac{\lambda_2}{ \eps}\right\rceil\right) = O\left(\frac{\log^2 n}{\eps} + \frac{\log n}{\eps^2}\right)
\end{equation*} rounds. At the end of the algorithm, every vertex outputs an estimation $ \lambda_2 \le \tilde \lambda_2 \le \lambda_2 +\eps$.
\end{restatable}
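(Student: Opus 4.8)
The plan is to reduce to the plain power method via spectral reflection and deflation. Set $M := 2I - L$; it has the same eigenvectors as $L$ and eigenvalues $\mu_i = 2 - \lambda_{n-i+1}$, so its largest eigenvalue is $\mu_n = 2 - \lambda_1 = 2$ with the \emph{known} eigenvector $v_n = \sqrt{\deg}$, and its second largest is $\mu_{n-1} = 2 - \lambda_2$ -- exactly what we want. I would run the power method of \Cref{lm:powerMethod} on $M$, but keep the iterate in the hyperplane $v_n^\perp = \{x : \langle x, \sqrt{\deg}\rangle = 0\}$ (deflation against the known top eigenvector), so that it converges to $\mu_{n-1}$ rather than to $\mu_n$; the output is then $\tilde\lambda_2 := 2 - R$ with $R := x_k^T M x_k / x_k^T x_k$. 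Choosing $\delta := \eps/2$ and $k = \Theta(\log n/\delta) = \Theta(\log n/\eps)$ iterations makes $R$ a $(1-\delta)$-factor underestimate of $\mu_{n-1}$; since $\mu_{n-1} \le 2$ this is an additive-$\eps$ estimate, yielding the claimed window $\lambda_2 \le \tilde\lambda_2 \le \lambda_2 + \eps$. Each power iteration is one multiplication of a Laplacian-like matrix by a vector held one entry per vertex, i.e.\ $O(1)$ \CONGEST rounds. Keeping the start vector in $v_n^\perp$ is even free: starting from $x_0 = \Deg^{-1/2} B y$ for a random edge-signing $y \sim \{-1,1\}^{|E|}$, where $B$ is the signed vertex--edge incidence matrix, we have $\langle x_0, \sqrt{\deg}\rangle = (\Deg^{1/2}\mathds 1)^T\Deg^{-1/2}By = \mathds 1^T B y = 0$, and $x_0$ is computable in $O(1)$ rounds; w.h.p.\ over $y$ its component along $v_{n-1}$ is $\ge 1/\poly(n)$, which suffices.

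\textbf{Re-projections.}
The catch -- and the origin of the $\lceil\lambda_2/\eps\rceil$ factor -- is bit precision. By \Cref{lm:num_stab} we must truncate each entry to $\Theta(\log n)$ significant bits (plus an $O(\log n)$-bit shared exponent, since values never exceed $2^k = 2^{O(\log n/\eps)}$, w.l.o.g.\ $\eps \ge 1/\poly(n)$, and never overshoot by \Cref{lm:no_overshoot}), so each step reintroduces a $v_n$-component of relative size $n^{-\Theta(1)}$, and applying $M$ multiplies the $v_n$-to-$v_{n-1}$ mass ratio by $\mu_n/\mu_{n-1} = 2/(2-\lambda_2)$. Hence after $\Theta\!\big(\log n/\log\tfrac{2}{2-\lambda_2}\big) = \Theta(\log n/\lambda_2)$ steps this ratio can climb back to constant order. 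I would therefore re-project the iterate onto $v_n^\perp$ (subtract $\langle x,\sqrt{\deg}\rangle\,\sqrt{\deg}/\|\sqrt{\deg}\|^2$) every $\Theta(\log n/\lambda_2)$ iterations, and once more just before evaluating $R$; over the $k = \Theta(\log n/\eps)$ iterations this is $O(\lceil\lambda_2/\eps\rceil)$ re-projections. The final re-projection makes $x_k$ essentially orthogonal to $v_n$, which forces $R \le \mu_{n-1}$ (for $x \perp v_n$, $x^T M x = \sum_{i<n}\mu_i\langle x,v_i\rangle^2 \le \mu_{n-1}x^T x$) and hence the one-sided guarantee $\tilde\lambda_2 \ge \lambda_2$; the matching bound $\tilde\lambda_2 \le \lambda_2 + \eps$ follows from \Cref{lm:powerMethod} applied to $M$ restricted to $\mathrm{span}(v_1,\dots,v_{n-1})$, once one checks that the bounded $v_n$-contamination between projections cannot spoil convergence.

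\textbf{Cost and probability boosting.}
Each re-projection is a single global inner product $\langle x, \sqrt{\deg}\rangle$ aggregated up a BFS tree and broadcast back, i.e.\ $O(D)$ rounds. Here I would invoke $D = O(\log n/\lambda_2)$, which follows from \Cref{thm:Cheeger} ($\phi \ge \lambda_2/2$) together with the general bound $D = O(\log n/\phi)$ (\Cref{lm:diam_k_sparsest_cut}); the $O(\lceil\lambda_2/\eps\rceil)$ re-projections then cost $O(\lceil\lambda_2/\eps\rceil\cdot\log n/\lambda_2)$ rounds, absorbed into $O\big(\tfrac{\log n}{\eps}\lceil\tfrac{\lambda_2}{\eps}\rceil + \tfrac{\log^2 n}{\eps}\big)$. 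Since \Cref{lm:powerMethod} only succeeds with constant probability, I would run $\Theta(\log n)$ independent copies -- pipelining the matrix--vector steps across copies, and pipelining the at most $O(\lceil\lambda_2/\eps\rceil)$ inner products up and down the tree -- and output $\tilde\lambda_2 = 2 - \max_j R_j$; as no copy overshoots $\mu_{n-1}$ on $v_n^\perp$, the maximum is w.h.p.\ a valid $(1-\delta)$-estimate of $\mu_{n-1}$. A final $O(D)$ broadcast makes every vertex output $\tilde\lambda_2$. Summing, $\Theta(\log n)\cdot\Theta(\log n/\eps)$ rounds of iteration plus $O(\tfrac{\log n}{\eps}\lceil\tfrac{\lambda_2}{\eps}\rceil)$ rounds of projection, matching the theorem.

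\textbf{Main obstacle.}
I expect the crux to be the \emph{robust} power-method analysis: showing that repeatedly (i) projecting onto $v_n^\perp$ every $\Theta(\log n/\lambda_2)$ steps, (ii) truncating to $\Theta(\log n)$ bits every step, and (iii) letting $v_n$-mass regrow in between, still leaves the $v_{n-1}$ direction dominant inside $\mathrm{span}(v_1,\dots,v_{n-1})$ so that $R \to \mu_{n-1}$. One has to control how the truncation noise decomposes along the eigenbasis and verify it cannot erode the (already $1/\poly(n)$-small, after a random start) $v_{n-1}$-component before the next projection -- which is exactly why the projection interval must carry a safe constant in front of $\log n/\lambda_2$. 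A secondary, more bookkeeping-flavored point is making the $O(D)$ terms land inside the stated bound uniformly in $\lambda_2$, in particular in the regime where $\lambda_2$ is tiny and $D$ correspondingly large; this is what the ``all interesting cases have small enough diameter'' observation handles.
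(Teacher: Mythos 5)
Your proposal matches the paper's proof essentially step for step: reflect to $M=2I-L$, deflate against the known top eigenvector $v_n=\sqrt{\deg}$, run the power method with $\Theta(\log n)$-bit truncation, re-project onto $v_n^\perp$ every $\Theta(\log n/\lambda_2)$ iterations to control the rounding-induced drift back toward $v_n$ (giving the $O(\lceil\lambda_2/\eps\rceil)$ inner products), and boost to high probability with $O(\log n)$ pipelined repetitions using the no-overshoot lemma. Two minor caveats: your incidence-matrix start vector $\Deg^{-1/2}By$ has a non-negligible $v_{n-1}$-component only with \emph{constant} probability (a Rademacher sum can vanish), not w.h.p., which the repetitions absorb anyway; and the claim that the $O(\lceil\lambda_2/\eps\rceil)\cdot O(\log n/\lambda_2)$ projection cost is ``absorbed'' into the stated bound fails when $\lambda_2\ll\eps$ unless one first runs the explicit diameter check ($D>\Theta(\log n/\eps)$ forces $\lambda_2\le 2\eps$, so output $\Theta(\eps)$) that the paper places at the very start and that you only gesture at in your closing paragraph.
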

\begin{proof}
     We check if $D>\Theta(\log n/\eps)$ holds in $O(\log n/\eps)$ rounds. If so, then we have $\log n/\eps< \Theta(D) \le \Theta(\log n/\phi)$ by \Cref{lm:diam_k_sparsest_cut}, or equivalently $\phi< \Theta(\eps)$. By Cheeger's Inequality, \Cref{thm:Cheeger}, $\lambda_2\le 2 \phi \le 2\eps$. So we simply output $\tilde \lambda_2=\Theta(\eps)$ and we are done. 

    So for the remainder of the proof, we may assume $D= O(\log n/\eps)$. The algorithm is similar as in \Cref{thm:CONGEST_largest}, however, we now need to find the \emph{smallest} (nonzero) eigenvalue. To flip smallest and largest, we use a standard trick and consider the matrix $M=2I- L$ (instead of $M=L$ as in \Cref{thm:CONGEST_largest}). $M$ is still a PSD matrix, now with largest eigenvalue $2$, for which we also know the largest eigenvector $v_n := \sqrt{\deg}$. We denote the eigenvalues of $M$ by $0\le \mu_1\le \dots \le \mu_n=2$.
    If we were to apply the power method on $M$, we would find this largest eigenvalue $\mu_n=2$. Our goal is, however, to find $\mu_{n-1}$. Thus,  we aim to use it on the subspace orthogonal to $v_n$ by nullifying contributions in direction $v_n$. Formally, we change the matrix $M$ yet again.  
    
     To this end consider $M':=M-2v_n v_n^T$. All eigenvectors and most eigenvalues  stay the same: for $i<n$ we have 
    \begin{equation*}
        (M-2v_n v_n^T)v_i= Mv_i -2v_n\langle v_n,v_i\rangle=Mv_i=\mu_i v_i,
    \end{equation*} 
    by orthogonality of the eigenvectors. Only eigenvector $v_n$ changes its eigenvalues from $2$ in matrix $M$ to  $0$ in matrix $M'$: 
    \begin{equation*}
        (M-2v_n v_n^T)v_n= Mv_n -2v_n\langle v_n,v_n\rangle=Mv_i=2 v_n-2v_n=0.
    \end{equation*} 
    This means that the largest eigenvalue of $M'$ is $\mu_{n-1}$ Suppose now we have found $\tilde \mu_{n-1}$ such that $(1-\eps)\mu_{n-1} \le \tilde \mu_{n-1} \le \mu_{n-1}$ holds. Then also $\mu_{n-1} = 2- \lambda_2$ holds, so we output $\tilde \lambda_2 := 2-\tilde \mu_{n-1}$.  We obtain
    \begin{align*}
        \tilde\lambda_2 &= 2-\tilde \mu_{n-1} \ge  2- \mu_{n-1} = \lambda_2; \text{and} \\
        \tilde\lambda_2 &= 2-\tilde \mu_{n-1} \le 2- (1-\eps) \mu_{n-1} = 2- (1-\eps)(2-\lambda_2) = (1-\eps)\lambda_2 +2\eps \le \lambda_2+2\eps. 
    \end{align*}
    Setting $\eps\leftarrow\eps/2$ gives the result. 

    \textbf{Finding the largest eigenvalue of $M'$.}
    Again, we use the power method, \Cref{lm:powerMethod}, to approximate $\mu_{n-1}$.
 We use the next claim to prove that for most steps, multiplying with $M'$ is actually the same as multiplying with $M$.
\begin{claim} 
\label{claim:MprimeMultiplication}
For all $x$, we have 
\begin{enumerate}
\item $M'x= Mx-2v_n v_n^Tx = Mx-2v_n\langle x,v_n\rangle$;
\item $\langle M'x,v_n \rangle=0$.
\end{enumerate}
\end{claim}
\renewcommand{\qed}{\ensuremath{\hfill\vartriangleleft}}
\begin{proof}
The first part of the claim is immediate. For the second part we compute
\begin{align*}
      \langle M'x,v_n \rangle  & = \langle (M-2v_n v_n^T)x,v_n \rangle = \left\langle (M-2v_n v_n^T)\sum_{i=1}^n\langle x,v_i\rangle v_i,v_n \right\rangle\\
        &= \left\langle \sum_{i=1}^n\mu_i\langle x,v_i\rangle v_i-2v_n\langle x,v_n\rangle,v_n \right\rangle = \sum_{i=1}^n\mu_i\langle x,v_i\rangle \langle v_i,v_n\rangle -2\langle x,v_n\rangle \langle v_n,v_n \rangle \\
        &= 2\langle x,v_n\rangle-2\langle x,v_n\rangle =0. & \qedhere
    \end{align*}
\end{proof}

By \Cref{claim:MprimeMultiplication} part 1, we can compute $x_1=M'x_0$ by a multiplication with $M$ and inner product between $x_0$ and $v_n$. The latter can be computed in $O(D)$ time. Due to \Cref{claim:MprimeMultiplication} part 2, we have that the inner product between $x_i$ and $v_n$ is zero for all $i\geq 1$. Hence, after the first iteration we obtain $M'x_i=Mx_i$ and we can simply multiply with $M$ instead of $M'$.

    \textbf{Bit precision.}
    As in \Cref{thm:CONGEST_largest}, we have by \Cref{lm:num_stab} that $O(\log n)$ bits suffice to store each value. 

    However, rounding might introduce a small error in the direction of $v_n$. As opposed to \Cref{lm:num_stab}, this error could blow up over time: it scales with $\mu_n^l$ after $l$ rounds, while the vector's component we want scales with $\mu_{n-1}^l$. The number of iterations $l$ after which this becomes a problem that depends on the ratio $\mu_n/\mu_{n-1}$. We will let this error blow up for $l$ iterations, and then project again on the subspace orthogonal to $v_n$, using diameter time as above. Before we describe that, we prove how large $l$ can be. 
    
    Formally, we prove the following claim, where $\delta v_n$ denotes the error. For errors perpendicular to $v_n$, we can refer to \Cref{lm:num_stab}, which also shows rounding errors are of size at most $\delta\le n^{-\Theta(1)}||x_i||$. 
    \begin{claim}
        Let $x_i$ be a vector such that $\langle x_i, v_n\rangle=0$, and $0<\delta \le n^{-\Theta(1)}||x_i||$ be a parameter. Then for $l \le O(\tfrac{\log n}{\lambda_2})$ we have
        \begin{equation*}
            \left| \frac{(x_i+\delta v_n))^T M^{2l+1} (x_i+\delta v_n)}{(x_i+\delta v_n)^T M^{2l}(x_i+\delta v_n)} - \frac{x_i^T M^{2l+1} x_i}{x_i^T M^{2l} x_i}\right| \le  n^{-\Theta(1)}\frac{x_i^T M^{2l+1} x_i}{x_i^T M^{2l} x_i}.
        \end{equation*}
    \end{claim}
    \begin{proof}
        We first investigate the first term, i.e., the Rayleigh coefficient including the error after $l$ iterations 
    \begin{align*}
        \frac{(x_i+\delta v_n))^T M^{2l+1} (x_i+\delta v_n)}{(x_i+\delta v_n)^T M^{2l}(x_i+\delta v_n)} &= \frac{x_i^T M^{2l+1} x_i+ \delta^2\mu_n^{2l+1}}{x_i^T M^{2l} x_i+ \delta^2\mu_n^{2l}} \le \frac{x_i^T M^{2l+1} x_i+ n^{-\Theta(1)}||x_i||^2\mu_n^{2l+1}}{x_i^T M^{2l} x_i- n^{-\Theta(1)}||x_i||^2\mu_n^{2l}},
    \end{align*}
    where the equality follows from orthonormality of the eigenvectors, and the inequality follows from our bound on the error. If we now show for the numerator that $n^{-\Theta(1)}||x_i||^2\mu_n^{2l+1} \le n^{-\Theta(1)} x_i^T M^{2l+1} x_i$, then this also holds for the denominator (since this is a factor of $M$ less), and hence we obtain that the Rayleigh coefficient after $l$ iterations changed by a factor at most $(1+n^{-\Theta(1)})$. Hereto, we inspect $x_i^T M^{2l+1} x_i$. We note that $|\langle x_0, \rangle| \ge \tfrac{3}{16}$. We obtain
    \begin{align*}
        x_0^T M^{2(i+l)+1} x_0= x_0^T M^{2(i+l)+1} x_0 \ge \mu_{n-1}^{2(i+l)+1} |\langle x_0, \rangle|^2 \ge \mu_{n-1}^{2(i+l)+1} \left(\frac{3}{16}\right)^2. 
    \end{align*}

    On the other hand, we have that 
      $ ||x_i||^2 = x_0^T M^{2i} x_0 \le  \mu_{n-1}^{2i}\cdot ||x_0||^2 =\mu_{n-1}^{2i}\cdot n. $
    Putting this together, we obtain that 
    \begin{align*}
        n^{-\Theta(1)}||x_i||^2\mu_n^{2l+1} &\le n^{-\Theta(1)}\cdot \mu_{n-1}^{2i} n\cdot \mu_n^{2l+1} = n^{-\Theta(1)}\cdot \mu_{n-1}^{2(i+l)+1}\cdot \left(\frac{\mu_n}{\mu_{n-1}}\right)^{2l+1}\\
        &\le n^{-\Theta(1)}\cdot\left(\frac{\mu_n}{\mu_{n-1}}\right)^{2l+1}\cdot  x_0^T M^{2(i+l)+1} x_0.
    \end{align*}
    If we now show that 
       $ \left(\frac{\mu_n}{\mu_{n-1}}\right)^{2l+1} \le n^{\Theta(1)},$
    then we are done. Here we use that $l$ is bounded: after many iterations, the largest eigenvector will always dominate. By assumption, we have $l = \Theta(\tfrac{\log n}{\lambda_2})$. Using this, together with $\mu_n=2$ and $\mu_{n-1}=2-\lambda_2$, we see
    \begin{align*}
        \left( \frac{\mu_n}{\mu_{n-1}}\right)^l &= \left( \frac{2}{2-\lambda_2}\right)^l \le (1+\lambda_2)^l \le \exp(\lambda_2\cdot l)=  \exp(\Theta(\log n)=n^{\Theta(1)}. & \qedhere
    \end{align*}
    \end{proof}
    \renewcommand{\qed}{\hfill \ensuremath{\blacktriangleleft}}

    This claim shows that for $l=\Theta(\log n /\lambda_2)$ iterations, the rounding error does not propagate too much.     
    After every $\Theta(\log n /\lambda_2)$ iterations, we orthogonalize again. We can simply do this by computing $x-\langle x,v_n\rangle v_n$, since $v_n$ is known: $v_n=\sqrt{\deg}$. However, this leads to $O(\left\lceil \tfrac{\lambda_2}{ \eps}\right\rceil)$ additional inner products.


    \textbf{Success Probability.}
    Again, we note that \Cref{lm:powerMethod} only gives constant probability. We can simply repeat the process $O(\log n)$ times and take the maximum to obtain the result with high probability, since we cannot overshoot by \Cref{lm:no_overshoot}. As before, we can stream the $O(\lceil \lambda_2/\eps \rceil)$ inner products, so the total time becomes: 
    \begin{equation*}
         O\left(\frac{\log^2 n}{\eps}+\log n\left\lceil \frac{\lambda_2}{ \eps}\right\rceil + D\left\lceil \frac{\lambda_2}{ \eps}\right\rceil\right) =   O\left(\frac{\log^2 n}{\eps} + \frac{\log n}{\eps}\cdot \left\lceil \frac{\lambda_2}{ \eps}\right\rceil\right). 
    \end{equation*}

    We can also simplify this to 
        $O\left(\frac{\log^2 n}{\eps} + \frac{\log n}{\eps^2}\right)$~,
    using that $\lambda_2\le 2$.  
\end{proof}

  \subsection{The \texorpdfstring{$k$}{k} Smallest Eigenvalues}\label{sec:kth_ev}
In this section, we generalize the method from \Cref{sec:smallest_ev}. The general idea remains to change the matrix by projecting on the space orthogonal to the found eigenvectors. However, the found eigenvectors are now no longer exact, but approximate. We first show that the power method can work with such approximations as well. 
The following lemma captures the main result of this section.

\begin{lemma}\label{lm:power_method_for_k}
Let $k\ge 0$ be a constant. 
Suppose $\tilde \mu_i,\tilde v_i$ for $i\ge n-k$ are orthonormal vectors that approximate the eigenvalues thus-far: $\tilde \mu_i=\tilde v_i^T M \tilde v_i $ and $(1-\poly(\eps) )\mu_i \le \tilde \mu_i \le \mu_i$. 
    Performing $O(\log n)$ instances of the power method -- each with $O(\log n/\eps)$ iterations -- with the matrix 
    $M_{\rm{approx}}:=M- \sum_{i>n-k}\tilde \mu_i \tilde v_i \tilde v_i^T$ gives w.h.p.\ either a $(1-\eps)$-approximation $\tilde \mu_{n-k}$ of the $k$-th largest eigenvalue $\mu_{n-k}$, together with a corresponding vector $\tilde v_{n-k}$, or we can output that $\mu_{n-k}<\eps$. 
\end{lemma}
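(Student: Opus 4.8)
The plan is to analyze the power method on $M_{\mathrm{approx}}$ by comparing its action to the "ideal" matrix $M_{\mathrm{ideal}} := M - \sum_{i>n-k} \mu_i v_i v_i^T$, which has largest eigenvalue exactly $\mu_{n-k}$ (its top $k$ eigenvalues are zeroed out and the rest are untouched). We want to argue two things. First, $M_{\mathrm{approx}}$ is spectrally close to $M_{\mathrm{ideal}}$: since $\|\tilde\mu_i \tilde v_i\tilde v_i^T - \mu_i v_i v_i^T\| \le \mathrm{poly}(\eps)$ for each of the (constantly many) $i > n-k$, we get $\|M_{\mathrm{approx}} - M_{\mathrm{ideal}}\| \le \mathrm{poly}(\eps)$. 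This needs a small perturbation computation: $\tilde\mu_i\le\mu_i$ contributes an $O(\mathrm{poly}(\eps))$ error in the eigenvalue, and the statement "$\tilde v_i$ approximates the $i$-th eigenvector" needs to be made precise — I would derive from $\tilde\mu_i = \tilde v_i^T M \tilde v_i \ge (1-\mathrm{poly}(\eps))\mu_i$ together with $\tilde v_i \perp v_j$ for $j\neq i$ among the top eigenvectors (orthonormality of the $\tilde v_i$'s and closeness forces near-orthogonality to the true top eigenvectors) that $\tilde v_i$ is within $\mathrm{poly}(\eps)$ of $\pm v_i$ in norm, provided the relevant eigenvalue gap is not too small. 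Here is where the "or we output $\mu_{n-k} < \eps$" escape hatch is used: if the gap between $\mu_{n-k}$ and the cluster of top eigenvalues is too small, the approximate eigenvectors may be badly rotated, but in that case $\mu_{n-k}$ itself is small and we simply report that.

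Second, I would apply Weyl's inequality (eigenvalue perturbation) to conclude that the largest eigenvalue $\mu_n(M_{\mathrm{approx}})$ satisfies $|\mu_n(M_{\mathrm{approx}}) - \mu_{n-k}| \le \mathrm{poly}(\eps)$, and similarly the second-largest eigenvalue of $M_{\mathrm{approx}}$ is within $\mathrm{poly}(\eps)$ of $\mu_{n-k+1}$ — but more importantly, I need a spectral-gap statement for the power method to converge. The power method (Lemma \ref{lm:powerMethod}) applied to the PSD matrix $M_{\mathrm{approx}}$ for $O(\log n/\eps)$ iterations returns, with constant probability, a Rayleigh quotient within a $(1-\eps)$ factor of $\mu_n(M_{\mathrm{approx}})$; boosting over $O(\log n)$ runs and taking the max (using Lemma \ref{lm:no_overshoot}, no overshoot) gives it w.h.p. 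Note that $M_{\mathrm{approx}}$ need not be exactly PSD since the $\tilde\mu_i$ are slight underestimates — but it is within $\mathrm{poly}(\eps)$ of PSD, and I would either add a tiny multiple of $I$ or observe that the negative eigenvalues are $\le \mathrm{poly}(\eps)$ in magnitude and absorb them into the additive error (this requires re-checking that Lemma \ref{lm:powerMethod}'s proof only uses non-negativity up to lower-order terms, or restating it for the near-PSD case). The output vector $\tilde v_{n-k}$ is the normalized final iterate, and $\tilde\mu_{n-k} := \tilde v_{n-k}^T M \tilde v_{n-k}$ — note I evaluate the Rayleigh quotient against the \emph{true} $M$, not $M_{\mathrm{approx}}$, which I need to reconcile: since $\tilde v_{n-k}$ has only $\mathrm{poly}(\eps)$ mass on the top-$k$ directions (by the same gap argument / by the power method filtering), $\tilde v_{n-k}^T M \tilde v_{n-k}$ and $\tilde v_{n-k}^T M_{\mathrm{approx}} \tilde v_{n-k}$ differ by $\mathrm{poly}(\eps)$, which is enough to maintain the invariant "$(1-\mathrm{poly}(\eps))\mu_{n-k} \le \tilde\mu_{n-k} \le \mu_{n-k}$" needed for the next inductive step (the upper bound $\tilde\mu_{n-k}\le\mu_{n-k}$ being a consequence of a Courant–Fischer / min-max argument: any unit vector's Rayleigh quotient with $M$ restricted to be nearly orthogonal to the top $k-1$ directions is at most $\mu_{n-k}$, up to the $\mathrm{poly}(\eps)$ slack).

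The main obstacle I expect is the eigenvector-perturbation step — controlling how far $\tilde v_i$ can be from the true $v_i$ when eigenvalues are clustered, and threading the "$\mu_{n-k} < \eps$" alternative through cleanly so that whenever the gap is too small to guarantee good approximate eigenvectors, we are in the regime where reporting "small" is correct. Concretely, I would want a lemma of the form: if $\mu_{n-k} \ge \eps$ (so the power method is even worth running) and all the invariants hold for $i > n-k$, then $\|M_{\mathrm{approx}} - M_{\mathrm{ideal}}\| \le \mathrm{poly}(\eps) \cdot \mu_{n-k}$ or an absolute $\mathrm{poly}(\eps)$, whichever analysis is cleaner; the subtlety is that a $\mathrm{poly}(\eps)$ absolute perturbation could be large \emph{relative to} $\mu_{n-k}$ when $\mu_{n-k}$ is close to $\eps$, so the polynomial in $\eps$ guaranteed by the hypothesis "$(1-\mathrm{poly}(\eps))\mu_i\le\tilde\mu_i$" must be chosen with enough room (a higher power of $\eps$) to dominate, and one must track this through the induction on $k$ — since $k$ is constant this only costs a constant-depth loss in the exponent, but it must be stated carefully. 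I would handle the PSD-ness caveat by working with $M_{\mathrm{approx}} + \mathrm{poly}(\eps) I$ if needed and noting the additive shift is absorbed.
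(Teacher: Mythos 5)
Your skeleton matches the paper's proof almost exactly: compare $M_{\rm{approx}}$ to $M_{\rm{ideal}}:=M-\sum_{i>n-k}\mu_i v_iv_i^T$, bound $\|M_{\rm{approx}}-M_{\rm{ideal}}\|$ term by term, apply Weyl's inequality, run the power method on $M_{\rm{approx}}+\eps I$ to restore positive semi-definiteness, and boost with $O(\log n)$ independent runs using \Cref{lm:no_overshoot}. The paper also uses the escape hatch exactly where you do, namely to justify $\mu_n/\mu_{n-k}\le 2/\eps$ when the target eigenvalue is already below $\eps$.

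The one genuine gap is the step you yourself flag as the main obstacle: converting ``$\tilde v_i^TM\tilde v_i\ge(1-\poly(\eps))\mu_i$'' into ``$\tilde v_i$ is close to $v_i$.'' Your proposed patch --- if the eigenvalue gap is too small, report $\mu_{n-k}<\eps$ --- does not cover the problematic case. Eigenvalues can cluster (or coincide) near the \emph{top} of the spectrum, e.g.\ $\mu_{n-1}=\mu_n=2$, in which case $\tilde v_{n-1}$ may be an arbitrary rotation within the top eigenspace and far from any fixed $v_{n-1}$, yet $\mu_{n-k}$ is large and the escape hatch is unavailable. The paper resolves this with two moves that your sketch is missing. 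First, \Cref{lm:spectrum_gaps}: one may analyze instead a matrix whose eigenvalues are rounded down to the grid $\{\mu_n,(1-\eps)\mu_n,(1-\eps)^2\mu_n,\dots\}$; this costs only a multiplicative $\eps$ per eigenvalue but manufactures a relative gap of $\eps$ between distinct eigenvalues. Second, in \Cref{lm:high_inner_prod} the conclusion is not that $y$ is close to a predetermined $v_n$, but that $y$ has mass at least $1-\delta/\eps$ inside the top \emph{eigenspace}, after which the orthonormal eigenbasis is re-chosen so that $\langle y,v_n\rangle\ge\sqrt{1-\delta/\eps}$. In other words, closeness to a specific eigenvector is replaced by closeness to the eigenspace plus the freedom to rotate the reference basis, which sidesteps Davis--Kahan-type perturbation arguments entirely. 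With $\delta_i=\Theta(\eps^3/k)$ this feeds into \Cref{lm:projections} and Weyl to give $\|M_{\rm{ideal}}-M_{\rm{approx}}\|\le\eps\mu_{n-k}/2$, which is the quantitative form of the lemma you said you would want. Your remaining remarks (evaluating the Rayleigh quotient against $M$ versus $M_{\rm{approx}}$, and tracking the precision through the constant-depth induction on $k$) are consistent with what the paper does.
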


Suppose we know the exact eigenvalues and eigenvectors $\mu_i,v_i$ for $i>n-k$, consider $M_{\rm{ideal}}:=M- \sum_{i>n-k} \mu_i  v_i v_i^T$. This matrix has the same eigenvectors $v_i$, but now largest eigenvalue $\mu_{n-k}$. This follows from orthonormality of the eigenvalues. Indeed we see for $i>n-k$
\begin{align*}
    (M- \sum_{i>n-k} \mu_i  v_i v_i^T)v_j &= Mv_j -\mu_jv_j =0\\
    \intertext{and for $i\le n-k$ }
    (M- \sum_{i>n-k} \mu_i  v_i v_i^T)v_j &= Mv_j = \mu_j v_j.
\end{align*}
So if we obtain a $(1-\eps)$-approximation of the largest eigenvalue of matrix $M_{\rm{ideal}}$, we are done. We will show that the largest eigenvalue of $M_{\rm{approx}}$ is a $(1-\eps)$ approximation of the largest eigenvalue of $M_{\rm{ideal}}$. The power method gives with constant probability a $(1-\eps)$-approximation of the largest eigenvalue. Repeating this $O(\log n)$ times and taking the maximum over these iterations gives the result with high probability. Note that we cannot overshoot by \Cref{lm:no_overshoot}, hence taking the maximum indeed works. We can stream these iterations such that the $\log n$-factor does not multiply with $D$. 

We note that a big difference to \Cref{thm:CONGEST_smallest} is that we now do not have exact orthogonality \emph{at any point} in time. This also means that rounding does not change this fact. Such errors are corrected by taking inner products out \emph{every step} -- rather than periodically. 

Furthermore, the reader might notice that we state \Cref{lm:powerMethod} for positive semi-definite matrices, and we no longer claim that $M_{\rm{approx}}$ is positive semi-definite. There are two solutions for this: 1) the power method also works on general matrices~\cite{golub2013matrix}. The running time is a constant factor slower, and it finds the \emph{dominant} eigenvalue rather than the largest. 2) We note that $M_{\rm{ideal}}$ \emph{is} positive semi-definite. Since we show that the eigenvalues of $M_{\rm{ideal}}$ and $M_{\rm{approx}}$ are close, this means that the eigenvalues in $M_{\rm{approx}}$ are at least $-\eps$. By running the power method on $M_{\rm{approx}}+\eps I$, we have a PSD matrix again, and can use \Cref{lm:powerMethod} as before. The additive matrix $+\eps I$ has  a linear effect on the eigenvalue approximation, so we can simply remove it at the end.

It remains to show that $M_{\rm{approx}}$ approximates $M_{\rm{ideal}}$, which we do in \Cref{sec:Mapprox_Mideal}. Then we show how this applies to the Laplacian in the \CONGEST model in \Cref{sec:CONGEST_lambda_k}.

\subsubsection{\texorpdfstring{$M_{\rm{approx}}$}{Mapprox} approximates \texorpdfstring{$M_{\rm{ideal}}$}{Mideal}.}\label{sec:Mapprox_Mideal}
First, we note that we can assume the spectrum to be relatively simple. 
\begin{lemma}\label{lm:spectrum_gaps}   
    Without loss of generality, we can assume that the eigenvalues take values in $\{\mu_n, (1-\eps)\mu_n, (1-\eps)^2\mu_n, \dots\}$. 
\end{lemma}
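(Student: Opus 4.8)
The plan is to argue that rounding the eigenvalues down to the nearest power-of-$(1-\eps)$ multiple of $\mu_n$ changes nothing essential in the analysis that follows, so we may as well assume the spectrum already has this form. More precisely, given the actual matrix $M$ with eigenvalues $\mu_1 \le \dots \le \mu_n$ and eigenvectors $v_1,\dots,v_n$, I would define a modified matrix $\bar M := \sum_{i=1}^n \bar\mu_i v_i v_i^T$, where $\bar\mu_i$ is the largest value of the form $(1-\eps)^j \mu_n$ that is at most $\mu_i$ (and $\bar\mu_i := 0$ if $\mu_i$ is smaller than every such value down to the relevant threshold, or we simply keep $0$ eigenvalues as $0$). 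By construction $(1-\eps)\mu_i \le \bar\mu_i \le \mu_i$ for every $i$, so $\bar M$ and $M$ have the same eigenvectors and eigenvalues that agree up to a $(1\pm\eps)$ factor entrywise; hence $\|\bar M - M\|_{\mathrm{op}} \le \eps \mu_n \le 2\eps$.

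The key step is then to observe two things. First, the quantities we ultimately care about — the largest eigenvalue of $M_{\rm{approx}}$ and its closeness to the largest eigenvalue of $M_{\rm{ideal}}$, as well as all the Rayleigh-coefficient and rounding-error estimates in the rest of \Cref{sec:kth_ev} — are stated with $\poly(\eps)$ or $(1\pm\eps)$ slack and are continuous (indeed Lipschitz) in the entries of the matrix, so replacing $M$ by $\bar M$ perturbs every such quantity by at most $O(\eps)$, which can be absorbed into the stated approximation factors by adjusting the hidden constants / polynomial. Second, the distributed algorithm never actually needs to \emph{compute} $\bar M$: the reduction is purely in the analysis. We run the power method on the true $M_{\rm{approx}}$, but when bounding its behaviour we are free to pretend the spectrum is $\{\mu_n,(1-\eps)\mu_n,(1-\eps)^2\mu_n,\dots\}$, because the true spectrum is sandwiched within a $(1\pm\eps)$ factor of such a spectrum and all our inequalities are robust to that perturbation. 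I would spell this out by noting that any eigenvalue $\mu_i$ lies in some interval $[(1-\eps)^{j+1}\mu_n, (1-\eps)^j\mu_n)$, round it to the right endpoint, and check that the worst case of this rounding over all $i$ is dominated by the error terms already present.

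The main obstacle is making sure the rounding is consistent with the \emph{gap} structure that later arguments exploit: some steps (e.g. the bound on how long rounding errors can be allowed to accumulate, which scales with $\mu_n/\mu_{n-k}$) implicitly use that consecutive distinct eigenvalues are either equal or separated by a factor $(1-\eps)$. After the reduction this is automatic — either $\bar\mu_{i} = \bar\mu_{i+1}$ or $\bar\mu_{i} \le (1-\eps)\bar\mu_{i+1}$ — so the payoff of the lemma is precisely that we get a clean multiplicative gap for free; but I would need to double-check that collapsing several nearby eigenvalues onto the same rounded value does not inflate the multiplicity in a way that hurts the $k$-dependence, which it does not since $k$ is a fixed constant and the number of projected directions is unchanged. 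The remainder is a short verification that each subsequent lemma's hypotheses survive the substitution $M \mapsto \bar M$, which is routine given the $\poly(\eps)$ slack built into their statements.
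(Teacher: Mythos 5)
Your proposal is correct and takes essentially the same approach as the paper: you round each eigenvalue down to the nearest value in $\{\mu_n,(1-\eps)\mu_n,(1-\eps)^2\mu_n,\dots\}$ and replace $M$ by the matrix with the same eigenvectors and the rounded eigenvalues (your $\bar M$ is exactly the paper's $M' = M - \sum_i(\mu_i-\hat\mu_i)v_iv_i^T$), then absorb the per-eigenvalue error of at most $\eps\mu_i$ by rescaling $\eps$. Your additional remarks — that the reduction lives purely in the analysis and that the algorithm never computes $\bar M$ — make explicit what the paper's ``without loss of generality'' leaves implicit.
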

\begin{proof}
    Instead of analyzing the matrix $M$, with orthonormal eigensystem $\{(\mu_i,v_i)\}$, we analyze the following system. We round every eigenvalue $\mu_i$ down to $\hat \mu_i$, the next value in $\{\mu_n, (1-\eps)\mu_n, (1-\eps)^2\mu_n, \dots\}$. Then we consider the matrix 
    \begin{equation*}
        M' := M - \sum_{i\in[n]}(\mu_i-\hat \mu_i)v_iv_i^T.
    \end{equation*}
    By orthonormality of the eigenvectors, $M'$ has the same eigenvectors. Its eigenvalues, however, have changed to $\hat \mu_i$. This gives an error of at most $\epsilon \mu_i$ for each eigenvalue $\mu_i$. Setting $\eps\leftarrow\eps/2$, makes that this additional error does not impact the final answer. 
\end{proof}

In particular, this means that the second largest eigenvalue is at most $(1-\eps)\mu_n$. This gives us the following lemma, stating that the previous iterations are well aligned with the eigenvectors. 

\begin{lemma} \label{lm:high_inner_prod}
    Suppose $||y||=1$ and $\frac{y^TMy}{y^Ty}\ge (1-\delta)\mu_{n}$. Then 
    \begin{equation*}
        \langle y,v_n\rangle \ge \sqrt{1-\delta/\epsilon}. 
    \end{equation*}
\end{lemma}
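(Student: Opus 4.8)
The plan is to expand $y$ in the orthonormal eigenbasis $v_1,\dots,v_n$ as $y=\sum_{i=1}^n c_i v_i$, where $c_i=\langle y,v_i\rangle$ and $\sum_i c_i^2=\|y\|^2=1$. Then the hypothesis on the Rayleigh quotient becomes $y^TMy=\sum_i \mu_i c_i^2 \ge (1-\delta)\mu_n$. The idea is to split the sum into the $v_n$-term and the rest, and use that, thanks to \Cref{lm:spectrum_gaps}, every eigenvalue other than $\mu_n$ is at most $(1-\eps)\mu_n$ (here I interpret the hypothesis $\frac{y^TMy}{y^Ty}\ge(1-\delta)\mu_n$ as concerning the matrix after the reduction of \Cref{lm:spectrum_gaps}, so the "second largest" eigenvalue is genuinely bounded away from $\mu_n$ by the multiplicative gap $\eps$).

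Concretely, I would write
\begin{equation*}
(1-\delta)\mu_n \le \sum_{i=1}^n \mu_i c_i^2 = \mu_n c_n^2 + \sum_{i<n}\mu_i c_i^2 \le \mu_n c_n^2 + (1-\eps)\mu_n \sum_{i<n} c_i^2 = \mu_n c_n^2 + (1-\eps)\mu_n(1-c_n^2).
\end{equation*}
Dividing through by $\mu_n>0$ gives $1-\delta \le c_n^2 + (1-\eps)(1-c_n^2) = 1-\eps+\eps c_n^2$, hence $\eps c_n^2 \ge \eps-\delta$, i.e.\ $c_n^2 \ge 1-\delta/\eps$. Taking square roots (and using that, by replacing $v_n$ with $-v_n$ if necessary, or by the power-method context where $\langle y,v_n\rangle$ is nonnegative, we may assume $c_n\ge 0$) yields $\langle y,v_n\rangle = c_n \ge \sqrt{1-\delta/\eps}$, which is exactly the claim. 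I should note that the statement is only meaningful when $\delta<\eps$, so that the right-hand side is a real number; this will be the regime in which the lemma is applied.

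The main (and essentially only) subtle point is the justification that every eigenvalue below the top one is at most $(1-\eps)\mu_n$ — this is not true for an arbitrary matrix, but it is precisely what \Cref{lm:spectrum_gaps} buys us after rounding the spectrum to the geometric grid $\{\mu_n,(1-\eps)\mu_n,(1-\eps)^2\mu_n,\dots\}$. I would make explicit that all of this section's analysis is carried out on the rounded matrix $M'$ from \Cref{lm:spectrum_gaps}, so that "the second largest eigenvalue is at most $(1-\eps)\mu_n$" is available as stated in the paragraph preceding the lemma. Everything else is the elementary one-line computation above; there is no real obstacle once the spectral-gap normalization is in place. A minor housekeeping remark worth including is the sign convention on eigenvectors: $\langle y,v_n\rangle$ could a priori be negative, so either one fixes the sign of $v_n$ to make the inner product nonnegative, or one should really state the conclusion as $|\langle y,v_n\rangle|\ge\sqrt{1-\delta/\eps}$; in the power-method application $y$ is (a normalization of) $M^k x_0$ and one works with $|\langle y,v_n\rangle|$ anyway.
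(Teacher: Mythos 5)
Your computation is the same as the paper's up to one point, and that point is where a genuine gap sits. You split the Rayleigh quotient as $\mu_n c_n^2+\sum_{i<n}\mu_i c_i^2$ and bound the second sum by $(1-\eps)\mu_n\sum_{i<n}c_i^2$, justifying this by saying that \Cref{lm:spectrum_gaps} guarantees every eigenvalue other than $\mu_n$ is at most $(1-\eps)\mu_n$. That is not what the rounding gives you: it guarantees that every eigenvalue \emph{strictly smaller than} $\mu_n$ is at most $(1-\eps)\mu_n$, but the top eigenvalue may have multiplicity, i.e.\ $\mu_{n-1}=\mu_n$ is entirely possible. In that case the term $\mu_{n-1}c_{n-1}^2=\mu_n c_{n-1}^2$ cannot be absorbed into $(1-\eps)\mu_n\sum_{i<n}c_i^2$, and the conclusion as you derive it is in fact false for a fixed eigenbasis: take $M=I$ and $y=v_1$, so that $\delta=0$ but $\langle y,v_n\rangle=0$.

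The paper's proof handles exactly this. It lets $j$ be the largest index with $\mu_j<(1-\eps)\mu_n$, so that $\mu_i=\mu_n$ for all $i>j$, and derives $\sum_{i>j}\langle y,v_i\rangle^2\ge 1-\delta/\eps$, i.e.\ it lower-bounds the mass of $y$ in the whole top \emph{eigenspace} rather than on the single vector $v_n$. It then observes that, since all $v_i$ with $i>j$ share the eigenvalue $\mu_n$, one may rotate the orthonormal basis within that eigenspace so that the projection of $y$ onto it is aligned with $v_n$; only after this re-choice of basis does $\langle y,v_n\rangle\ge\sqrt{1-\delta/\eps}$ hold. So the lemma is implicitly a statement ``for a suitable choice of orthonormal eigenbasis,'' and the freedom to make that choice is what your argument is missing. (Your remark about the sign of $v_n$ is a special case of the same freedom, but it does not cover multiplicity greater than one.) Everything else in your write-up---the eigenbasis expansion, the rearrangement to $\eps c_n^2\ge\eps-\delta$, and the observation that the statement needs $\delta<\eps$---matches the paper.
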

\begin{proof}
    Since $|y||=1$, we have $y^TMy \ge (1-\delta)\mu_n$. We use this to see: 
    \begin{align*}
        \sum_{i=1}^n \langle y,v_i \rangle^2 \mu_i  &= \left(\sum_{i=1}^n \langle y,v_i \rangle v_i\right)^T M \left(\sum_{i=1}^n \langle y,v_i \rangle v_i\right)\\
        &= y^TMy \ge (1-\delta) \mu_n.
    \intertext{Since the eigenvalues are $(1-\eps)$ apart, let $j$ the highest index such that $\mu_j<(1-\eps)\mu_n$, which means that $\mu_i=\mu_n$ for $i> j$. So}
         \sum_{i>j}\langle y,v_n\rangle^2 \mu_n + \sum_{i\le j}\langle y,v_n\rangle^2 (1-\eps)\mu_n &\ge (1-\delta)\mu_n. 
    \intertext{Using that $y$ has norm 1, and that $\mu_n>0$}
 \sum_{i>j}\langle y,v_n\rangle^2  + (1-\sum_{i>j}\langle y,v_n\rangle^2) (1-\eps) &\ge (1-\delta). 
    \intertext{Rearranging gives}
        \sum_{i>j} \langle y,v_i\rangle^2 &\ge 1-\delta/\epsilon.
    \end{align*}
    However, $v_i$ for $i>j$ all have the same eigenvalue $\mu_n$, so we can just rearrange the orthonormal basis such that $\langle y,v_n\rangle \ge \sqrt{1-\delta/\epsilon}$.
\end{proof}

Intuitively, this means that if we  set $\delta=\eps^2$, we get $\langle y,v_n\rangle \ge \sqrt{1-\eps}\ge 1-\eps$. In reality, the analysis is a bit more involved -- so we carry the $\delta$ for now.



Next, we show that if the inner product of two vectors is small, the difference in their projections is also small. Here, we denote $||A||$ for the \emph{matrix norm} or \emph{operator norm} of a matrix $A$, defined by $||A||:=\max_{||x||=1} ||Ax||=\max_{||x||=1} x^T A^TAx$. Note that in our case of symmetric, positive semi-definite matrices, the norm of a matrix is bounded by its largest eigenvalue squared.

\begin{lemma}\label{lm:projections}
    If $||v||=||w||=1$, $\langle v,w\rangle \ge 1-\eta$ and $a,b\ge 0$, $0<\eta<1$ such that $(1-\eta) a\le b\le a$, then we have 
    $|| a vv^T-bww^T|| \le 5\eta b$.
\end{lemma}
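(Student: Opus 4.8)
The plan is to bound the operator norm of the difference $avv^T - bww^T$ by splitting it into a term capturing the mismatch between the scalars $a$ and $b$ and a term capturing the mismatch between the rank-one projectors $vv^T$ and $ww^T$. Concretely, I would write
\begin{equation*}
a vv^T - bww^T = (a-b)vv^T + b\bigl(vv^T - ww^T\bigr),
\end{equation*}
and then bound each summand separately using the triangle inequality for the operator norm: $\|avv^T - bww^T\| \le |a-b|\,\|vv^T\| + b\,\|vv^T - ww^T\|$. Since $\|v\| = 1$ we have $\|vv^T\| = 1$, and from $(1-\eta)a \le b \le a$ we get $0 \le a - b \le \eta a \le \frac{\eta b}{1-\eta} \le 2\eta b$ (using $\eta < 1/2$, which we may assume as the bound is trivial otherwise — or, more carefully, track the constant). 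So the first term contributes at most roughly $2\eta b$.

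The main work is bounding $\|vv^T - ww^T\|$ in terms of $\eta$, given only $\langle v, w\rangle \ge 1-\eta$ and $\|v\| = \|w\| = 1$. The standard estimate here is $\|vv^T - ww^T\| \le \sqrt{1 - \langle v,w\rangle^2} \le \sqrt{2\eta}$, but that would give a $\sqrt{\eta}$ bound, not the linear $\eta$ bound claimed. So I expect the intended argument exploits more structure. One route: write $w = \langle v,w\rangle v + z$ where $z \perp v$, so $\|z\|^2 = 1 - \langle v,w\rangle^2 \le 2\eta$. Then $ww^T = \langle v,w\rangle^2 vv^T + \langle v,w\rangle(vz^T + zv^T) + zz^T$, hence
\begin{equation*}
vv^T - ww^T = (1 - \langle v,w\rangle^2)vv^T - \langle v,w\rangle(vz^T + zv^T) - zz^T.
\end{equation*}
The cross term $vz^T + zv^T$ has operator norm exactly $\|z\| \le \sqrt{2\eta}$ (it acts nontrivially only on $\mathrm{span}\{v,z\}$, with eigenvalues $\pm\|z\|$), which again only gives $\sqrt{\eta}$. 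This suggests the lemma as stated may actually be using the difference $avv^T - bww^T$ where the interplay between $a \approx b$ and $v \approx w$ produces cancellation — i.e. one should \emph{not} bound the two pieces independently but instead observe that the operator $avv^T - bww^T$, when applied to a unit vector $x$, can be analyzed directly in the two-dimensional space $\mathrm{span}\{v,w\}$ (on the orthogonal complement it vanishes).

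Therefore the cleanest plan is: restrict attention to the $2$-dimensional subspace $U = \mathrm{span}\{v,w\}$, since $avv^T - bww^T$ annihilates $U^\perp$; pick an orthonormal basis of $U$, say $\{v, u\}$ with $w = cv + su$, $c = \langle v,w\rangle \ge 1-\eta$, $s = \sqrt{1-c^2}$; write out the $2\times 2$ matrix of $avv^T - bww^T$ in this basis explicitly as $\begin{psmallmatrix} a - bc^2 & -bcs \\ -bcs & -bs^2\end{psmallmatrix}$; and bound its operator norm (largest singular value) by, say, the Frobenius norm or by a direct $2\times 2$ eigenvalue computation. Using $a - b \le 2\eta b$ (from the hypothesis, perhaps with a slightly different constant), $c^2 \ge (1-\eta)^2 \ge 1 - 2\eta$, $s^2 = 1-c^2 \le 2\eta$, and $s \le \sqrt{2\eta}$, one checks $|a - bc^2| \le |a-b| + b(1-c^2) \le 2\eta b + 2\eta b = 4\eta b$, $bs^2 \le 2\eta b$, and $bcs \le b\sqrt{2\eta}$. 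The off-diagonal term is the bottleneck at $\sqrt{2\eta}\,b$, which still exceeds $5\eta b$ for small $\eta$ — so I must be missing that the lemma is applied with $\eta$ of a specific form, or that there is an additional constraint I should use. The resolution is likely that the Frobenius-norm bound $\sqrt{(a-bc^2)^2 + 2(bcs)^2 + (bs^2)^2}$ is not tight and the true largest singular value of this particular matrix is $O(\eta b)$: the matrix is a small perturbation of $(a-b)vv^T$ plus the "error" $b(vv^T - ww^T)$ restricted to $U$, and one should diagonalize the $2\times2$ matrix exactly — its eigenvalues are $\frac{(a - bc^2) - bs^2}{2} \pm \sqrt{\left(\frac{(a-bc^2)+bs^2}{2}\right)^2 + b^2c^2s^2}$, i.e. $\frac{a - b}{2} \pm \sqrt{\frac{(a-b+2bs^2)^2}{4} + b^2c^2s^2}$... and here I would grind through: with $a - b \le 2\eta b$, $s^2 \le 2\eta$, $c \le 1$, the term under the square root is at most $\frac{(6\eta b)^2}{4} + 2\eta b^2 \le 9\eta^2 b^2 + 2\eta b^2$, whose square root is $\le \eta b\sqrt{9 + 2/\eta}$ — again $\Theta(\sqrt\eta\, b)$. \textbf{Main obstacle.} Reconciling the claimed linear-in-$\eta$ bound with the seemingly unavoidable $\sqrt\eta$ contribution from the cross terms is the crux; I expect the paper either (a) has a typo and means $5\sqrt\eta\, b$, or (b) invokes $\langle v,w\rangle \ge 1-\eta$ in a sharper way — e.g. deducing $\|v - w\| \le \sqrt{2\eta}$ and using $\|vv^T - ww^T\| \le \|v-w\|(\|v\| + \|w\|)/\ldots$ — and I would resolve it by carefully re-deriving the exact operator norm of the explicit $2\times 2$ matrix and adjusting the constant $5$ as needed, or by checking whether in all \emph{applications} of this lemma $\eta$ is itself already $O(\eps^2)$ so that $\sqrt\eta = O(\eps)$ suffices downstream.
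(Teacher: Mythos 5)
Your analysis is correct, and the difficulty you ran into is not yours to resolve: the lemma as stated is false for small $\eta$, and the paper's proof fails at exactly the step you predicted. Taking $a=b$, $v=(1,0)$, $w=(c,s)$ with $c=\langle v,w\rangle = 1-\eta$ and $s=\sqrt{1-c^2}$, the restriction of $vv^T-ww^T$ to $\mathrm{span}\{v,w\}$ is $\begin{pmatrix} s^2 & -cs\\ -cs & -s^2\end{pmatrix}$, which is traceless with determinant $-s^2$, so its eigenvalues are $\pm s$ and $\|a vv^T - b ww^T\| = a\sqrt{1-c^2} \ge a\sqrt{\eta}$, exceeding $5\eta a$ whenever $\eta < 1/13$ or so. The paper's proof follows your first decomposition $(a-b)vv^T + b(vv^T-ww^T)$ and then bounds $\|vv^T-ww^T\|\le 2\|v-w\|$, but in the final step it writes $\|v-w\| = \langle v-w,v-w\rangle = 2(1-\langle v,w\rangle)$, conflating the norm with its square; the correct identity $\|v-w\|^2 = 2(1-\langle v,w\rangle)$ gives only $\|v-w\|\le\sqrt{2\eta}$, hence the unavoidable $\Theta(\sqrt{\eta})\,b$ bound you kept arriving at.

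Your proposed repair (b) is also the right one. The corrected statement $\|avv^T - bww^T\| \le \eta b + 2\sqrt{2\eta}\,b = O(\sqrt{\eta})\,b$ is what should be carried into \Cref{sec:Mapprox_Mideal}: there the lemma is invoked with $\eta = \delta/\eps$, so the perturbation bound becomes $O(k\sqrt{\delta/\eps}\,\mu_n)$ instead of $O(k(\delta/\eps)\mu_n)$, and one must choose $\delta = \poly(\eps)/\poly(k)$ with a higher power of $\eps$ (roughly $\eps^5$ rather than $\eps^3$) to keep the total error below $\eps\mu_{n-k}/2$. Since \Cref{lm:power_method_for_k} and \Cref{thm:congest_lambda_k} only claim a $\poly(\eps^{-1})$ dependence in the round complexity, the downstream results survive with a worse polynomial. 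The one thing your write-up should do that it currently does not is commit to the corrected bound and constant rather than leaving both options open.
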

\begin{proof}
First we note that
    \begin{align*}
        || a vv^T-bww^T|| &= ||(a-b)vv^T +b(vv^T-ww^T)|| \le ||(a-b)vv^T||+||b(vv^T-ww^T)||\\
        &=|a-b|\cdot ||vv^T||+b||vv^T-ww^T||\le b(\eta +||vv^T-ww^T||).
    \intertext{Next, we bound $||vv^T-ww^T||$:}
        ||vv^T-ww^T||&= ||(v-w)v^T+w(v-w)^T|| \le ||(v-w)v^T||+||w(v-w)^T|| \\
        &= ||v-w||\cdot||v||+||w||\cdot||v-w||= 2||v-w||, 
    \intertext{where we know that} 
        ||v-w||&=\langle v-w,v-w\rangle= \langle v,v\rangle +\langle w,w\rangle -2\langle v,w\rangle =2(1-\langle v,w\rangle).
    \end{align*}

    Putting this together we obtain 
    \begin{align*}
        || a vv^T-bww^T|| \le b(\eta +||vv^T-ww^T||) \le b(\eta +2||v-w||) \le b(\eta +4(1-\langle v,w\rangle)) \le 5\eta b. 
    \end{align*}
\end{proof}

Finally, we use the following theorem from perturbation theory (see, e.g., \cite{horn2012matrix}).
\begin{theorem}[Weyl's Inequality]\label{thm:weyl}
Let $A,E$ be real, symmetric $n$ by $n$ matrices. Let $\mu_i(A),\mu_i(A+E)$ denote the $i$-th eigenvalue of $A$, $A+E$ respectively, for $i=1, \dots n$. Then we have
    \begin{equation*}
        |\mu_i(A+E)-\mu_i(A)| \le ||E||.
    \end{equation*}
\end{theorem}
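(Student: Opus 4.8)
The plan is to prove Weyl's Inequality via the variational (Courant–Fischer type) characterization of eigenvalues, realized through a dimension-counting argument. Throughout, recall the convention that eigenvalues are listed in increasing order, $\mu_1(\cdot)\le \mu_2(\cdot)\le \dots\le \mu_n(\cdot)$. Since $A$ and $A+E$ are real symmetric, the spectral theorem supplies orthonormal eigenbases: let $v_1,\dots,v_n$ be orthonormal with $Av_j=\mu_j(A)v_j$, and let $w_1,\dots,w_n$ be orthonormal with $(A+E)w_j=\mu_j(A+E)w_j$. I will first establish the one-sided bound $\mu_i(A+E)\le \mu_i(A)+||E||$ for every $i$, and then obtain the full statement by symmetry.

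Fix $i\in\{1,\dots,n\}$. Consider the subspaces $U:=\mathrm{span}(v_1,\dots,v_i)$ and $W:=\mathrm{span}(w_i,w_{i+1},\dots,w_n)$, of dimensions $i$ and $n-i+1$ respectively. Since $\dim U+\dim W=n+1>n$, the intersection $U\cap W$ contains a nonzero vector $x$, which we rescale so that $||x||=1$. Because $x\in U$ expands as $x=\sum_{j\le i}\langle x,v_j\rangle v_j$ in eigenvectors of $A$ with eigenvalue at most $\mu_i(A)$, orthonormality gives
\begin{equation*}
 x^TAx=\sum_{j\le i}\langle x,v_j\rangle^2\mu_j(A)\le \mu_i(A)\sum_{j\le i}\langle x,v_j\rangle^2=\mu_i(A).
\end{equation*}
Symmetrically, because $x\in W$ expands as $x=\sum_{j\ge i}\langle x,w_j\rangle w_j$ in eigenvectors of $A+E$ with eigenvalue at least $\mu_i(A+E)$,
\begin{equation*}
 x^T(A+E)x=\sum_{j\ge i}\langle x,w_j\rangle^2\mu_j(A+E)\ge \mu_i(A+E)\sum_{j\ge i}\langle x,w_j\rangle^2=\mu_i(A+E).
\end{equation*}

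Combining the two displays via $x^T(A+E)x=x^TAx+x^TEx$ yields $\mu_i(A+E)\le \mu_i(A)+x^TEx$. To finish, I bound $x^TEx$ using Cauchy–Schwarz and the definition of the operator norm: $|x^TEx|\le ||x||\cdot ||Ex||\le ||x||^2\,||E||=||E||$, so $\mu_i(A+E)\le \mu_i(A)+||E||$. Applying the same inequality with the roles of $A$ and $A+E$ interchanged — writing $A=(A+E)+(-E)$ and using $||-E||=||E||$ — gives $\mu_i(A)\le \mu_i(A+E)+||E||$, and the two bounds together are exactly $|\mu_i(A+E)-\mu_i(A)|\le ||E||$. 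I do not anticipate a genuine obstacle, as this is a classical fact; the only point requiring care is the index bookkeeping — choosing $U$ and $W$ so that their dimensions sum to $n+1$, and orienting the two Rayleigh-quotient inequalities correctly relative to the increasing ordering of the eigenvalues.
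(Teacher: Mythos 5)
Your proof is correct. Note that the paper does not prove this statement at all: Weyl's Inequality is quoted as a known result from matrix perturbation theory with a citation to the literature (Horn and Johnson), so there is no in-paper argument to compare against. What you supply is the standard textbook proof via the Courant--Fischer variational characterization: the dimension count $\dim U+\dim W=i+(n-i+1)=n+1>n$ guarantees a common unit vector, the two Rayleigh-quotient estimates are oriented correctly for the increasing eigenvalue ordering, the perturbation term is controlled by $|x^TEx|\le \lVert x\rVert\,\lVert Ex\rVert\le \lVert E\rVert$, and the reverse bound follows by writing $A=(A+E)+(-E)$. One small point of bookkeeping if you splice this into the paper: the paper's definition of the matrix norm conflates $\max_{\lVert x\rVert=1}\lVert Ax\rVert$ with $\max_{\lVert x\rVert=1}x^TA^TAx$ (the latter is its square); your argument uses the standard operator norm $\max_{\lVert x\rVert=1}\lVert Ex\rVert$, which is the norm for which Weyl's Inequality as stated holds, so your usage is the right one.
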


We apply \Cref{lm:high_inner_prod}, \Cref{lm:projections} and \Cref{thm:weyl} to our case:
\begin{align*}
    |\mu_i(M_{\rm{ideal}})-\mu_i(M_{\rm{approx}})| &\le ||M_{\rm{ideal}}-M_{\rm{approx}}||\\
    &= \left|\left| \sum_{i>n-k} \tilde \mu_i \tilde v_i \tilde v_i^T-\sum_{i>n-k} \mu_i  v_i v_i^T\right|\right|\\
    &\le  \sum_{i>n-k} \left|\left|\tilde \mu_i \tilde v_i \tilde v_i^T- \mu_i  v_i v_i^T\right|\right|\\
    &\le  \sum_{i>n-k} 5\cdot \delta/\eps \cdot \mu_i\\
    &\le  k\cdot 5\cdot \delta/\eps \cdot \mu_n.
\end{align*}

We need this to be at most $\eps \mu_{n-k}/2$, where the factor two comes from \Cref{lm:spectrum_gaps}. Hereto, we set $\delta_i := \tfrac{\eps^2}{10}\tfrac{\mu_n}{\mu_{n-k}}$. Finally,  we use that $\mu_n\le 2$ and that we can assume $\mu_{n-1}\ge \eps$. The latter holds, because if $\mu_{n-1}< \eps$, we still have that $\tilde \mu_{n-1} \le \mu_{n-1}$. It might be below $(1-\eps)\mu_{n-1}$, but we can recognize this case, as needed for \Cref{lm:power_method_for_k}.
So we can bound $\mu_n/\mu_{n-k}$ by $2/\eps$. To obtain our result, \Cref{lm:power_method_for_k}, it hence suffices to set $\delta_i = \tfrac{\eps^3}{20k}$.

\subsubsection{Approximating the \texorpdfstring{$k$}{k}-th Smallest Eigenvalue in \CONGEST }\label{sec:CONGEST_lambda_k}

\begin{restatable}[Approximating $\lambda_1,\ldots,\lambda_k$]{theorem}{CONGESTkth}\label{thm:congest_lambda_k}
    Let $\eps>0$ be a parameter and $k\ge1$ be a constant. There exists a $\CONGEST$ algorithm that, given an undirected, weighted graph $G=(V,E)$, with high probability finds $k$ approximations $\tilde\lambda_1,\tilde \lambda_2, \tilde \lambda_3,\dots, \tilde \lambda_k$ of the first $k$ eigenvalues of the normalized Laplacian such that 
    \begin{equation*}
       \lambda_i \le \tilde \lambda_i \le \lambda_i+\eps,
    \end{equation*}
    in $O(\log^2 n\poly(\eps^{-1}))$.
\end{restatable}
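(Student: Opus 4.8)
The plan is to prove \Cref{thm:congest_lambda_k} by induction on $j=1,\dots,k$, computing $\tilde\lambda_j$ together with an accompanying approximate eigenvector one stage at a time and feeding the output of stage $j$ into stage $j+1$. As in \Cref{thm:CONGEST_smallest} I would work with $M:=2I-L$, whose eigenvalues satisfy $\mu_{n-i+1}=2-\lambda_i$; in particular $\mu_n=2$ with the exactly known eigenvector $v_n=\sqrt{\deg}/\|\sqrt{\deg}\|$, and recovering $\lambda_1,\dots,\lambda_k$ is the same as recovering the $k$ largest eigenvalues $\mu_n\ge\mu_{n-1}\ge\dots\ge\mu_{n-k+1}$ of $M$. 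The base case $j=1$ is free: $\tilde\lambda_1:=0$ and $(\tilde\mu_n,\tilde v_n):=(2,v_n)$ are exact. For the inductive step at stage $j\ge 2$ we are handed orthonormal pairs $(\tilde\mu_i,\tilde v_i)$ for $i=n,\dots,n-j+2$ with $\tilde\mu_i=\tilde v_i^T M\tilde v_i$ and $(1-\poly(\eps'))\mu_i\le\tilde\mu_i\le\mu_i$ -- precisely the hypothesis of \Cref{lm:power_method_for_k} -- where $\eps'$ is a sufficiently small polynomial in $\eps/k$ fixed once at the start (in the spirit of the choice $\delta_i=\eps^3/(20k)$ from \Cref{sec:Mapprox_Mideal}); since $k=O(1)$ this keeps $\eps'=\poly(\eps)$.

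At stage $j$ I would invoke \Cref{lm:power_method_for_k} with $M_{\mathrm{approx}}:=M-\sum_{i>n-j+1}\tilde\mu_i\tilde v_i\tilde v_i^T$ and parameter $\eps'$, i.e.\ $O(\log n)$ independent power-method instances of $O(\log n/\eps')$ iterations each, taking the maximum Rayleigh quotient (which never overshoots by \Cref{lm:no_overshoot}, so the maximum is safe and boosts the probability to w.h.p.). This either returns a $(1-\eps')$-approximation $\tilde\mu_{n-j+1}$ with a unit vector, or reports $\mu_{n-j+1}<\eps'$. In the latter case $\lambda_j=2-\mu_{n-j+1}>2-\eps'\ge 2-\eps$, and since $\lambda_{j'}\ge\lambda_j$ for all $j'\ge j$ we may output $\tilde\lambda_{j'}:=2$ for every such $j'$ and stop; indeed $\lambda_{j'}\le 2\le\lambda_{j'}+\eps$. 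In the former case I first project the returned vector onto the orthogonal complement of $\tilde v_n,\dots,\tilde v_{n-j+2}$ and renormalize, obtaining $\tilde v_{n-j+1}$, then set $\tilde\mu_{n-j+1}:=\tilde v_{n-j+1}^T M\tilde v_{n-j+1}$ (restoring the invariant $\tilde\mu_i=\tilde v_i^T M\tilde v_i$ for the next stage) and $\tilde\lambda_j:=2-\tilde\mu_{n-j+1}$. To confirm that this still satisfies $(1-\poly(\eps'))\mu_{n-j+1}\le\tilde\mu_{n-j+1}\le\mu_{n-j+1}$ -- hence, after rescaling $\eps'\leftarrow\eps'/2$, that $\lambda_j\le\tilde\lambda_j\le\lambda_j+\eps$ -- I would reuse the perturbation machinery of \Cref{sec:Mapprox_Mideal}: \Cref{lm:high_inner_prod} shows the returned vector has large inner product with the true eigenspace of $\mu_{n-j+1}$, \Cref{lm:projections} together with $\tilde v_i\approx v_i$ bounds how far the Gram--Schmidt step moves it, and \Cref{lm:no_overshoot} gives the upper bound; the spectrum-rounding reduction of \Cref{lm:spectrum_gaps} is applied once at the outset so that eigenvalues are $(1-\eps')$-separated as \Cref{lm:high_inner_prod} requires.

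For the \CONGEST implementation, each matrix--vector multiplication by $M$ costs $O(1)$ rounds once every vertex knows its neighbors' degrees, and the rank-$(j-1)$ deflation in $M_{\mathrm{approx}}$, the Gram--Schmidt step, and the final Rayleigh quotient each need only $O(1)$ global inner products, costable in $O(D)$ rounds over a BFS tree. Growing intermediate values are truncated to $\Theta(\log n)$ bits, with exponents tracked in $O(\log n)$ further bits, without affecting the outcome by \Cref{lm:num_stab}, and the $O(\log n)$ final inner products of the repetitions are pipelined so that the $\log n$ factor adds to rather than multiplies $D$.

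I expect the round-complexity accounting to be the main obstacle, for two linked reasons. First, errors from deflating against approximate rather than exact eigenvectors must be controlled across all $k$ stages; since $M_{\mathrm{approx}}$ is only close to the ``ideal'' deflated matrix (\Cref{sec:Mapprox_Mideal}, via \Cref{thm:weyl}), each stage's accuracy $\eps'$ must be a small enough polynomial in $\eps/k$, and one must check that orthonormality, the invariant $\tilde\mu_i=\tilde v_i^T M\tilde v_i$, and the $(1-\poly(\eps'))$ bound all survive the re-orthogonalization. Second, unlike the exact $v_n$ in \Cref{thm:CONGEST_smallest} -- which one can deflate once and then just multiply by $M$ -- the approximate $\tilde v_i$ must be deflated against essentially every step, each such step costing $\Omega(D)$ rounds, while $D$ can be as large as $\Theta(\log n/\lambda_2)$ (by \Cref{thm:Cheeger} and \Cref{lm:diam_k_sparsest_cut}) and thus unbounded when $\lambda_2$ is tiny. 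As in the proof of \Cref{thm:CONGEST_smallest} I would open with the test ``is $D>\Theta(\log n/\eps')$?'': when it fails, $D=O(\log n/\eps')$ and even the naive $O(D)$-per-step cost yields $O(\log^2 n\,\poly(\eps^{-1}))$; when it succeeds one knows $\lambda_2$ is small and must handle that regime without per-step global communication, e.g.\ by re-orthogonalizing only every $\Theta(\log n/\lambda_2)$ steps and bounding the intervening error growth as in the claim inside the proof of \Cref{thm:CONGEST_smallest} (the ratios $\mu_i/\mu_{n-j+1}$ being controlled once all deflated eigenvalues are known to exceed $\eps'$), and, if needed, by decomposing along the sparse cut and recursing on the pieces.
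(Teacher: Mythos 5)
Your core construction---deflating $M=2I-L$ against the previously computed approximate eigenpairs, invoking \Cref{lm:power_method_for_k} stage by stage, and controlling the deflation error via the perturbation analysis of \Cref{sec:Mapprox_Mideal} (with per-stage accuracy $\poly(\eps/k)$ and inner products taken out \emph{every} iteration because orthogonality to the $\tilde v_i$ is never exact)---is exactly the paper's route. The accounting in the small-diameter case also matches: $O(\log n)$ pipelined instances, $O(\log n/\poly(\eps))$ iterations each, $O(D+\log n)$ per batch of inner products.

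The genuine gap is the large-diameter regime, which you correctly identify as the main obstacle but do not resolve. Your opening test only yields that $\lambda_2$ is small when $D$ is large, which says nothing about $\lambda_3,\dots,\lambda_k$, and neither of your fallbacks works as stated: periodic re-orthogonalization every $\Theta(\log n/\lambda_2)$ steps relied, in \Cref{thm:CONGEST_smallest}, on the deflated direction being the \emph{exactly known} $v_n=\sqrt{\deg}$ with exactly known eigenvalue $2$, so that the drift is a pure rounding artifact with a controllable growth rate; here the $\tilde v_i$ are only approximate, the iterate is never exactly orthogonal to them, and the paper explicitly deflates every step for this reason. Recursing on sparse-cut pieces is a substantial undeveloped detour. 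The paper's fix is much shorter: test whether $D>\Theta(k\log n/\eps)$. If so, \Cref{lm:diam_k_sparsest_cut} (the $k$-way version of the ball-growing bound) gives $\phi_k\le\eps$, the higher-order Cheeger inequality \Cref{thm:HO_Cheeger} gives $\lambda_k\le 2\phi_k\le 2\eps$, and monotonicity of the spectrum gives $\lambda_i\le 2\eps$ for all $i\le k$; one then outputs $\tilde\lambda_i=\Theta(\eps)$ for every $i$ and halts without running the power method at all. Otherwise $D=O(k\log n/\eps)$ and your per-step $O(D)$ deflation cost already yields $O(\log^2 n\cdot\poly(\eps^{-1}))$. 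You should replace your $\lambda_2$-based test with this $k$-way test; the rest of your argument then goes through.
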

\begin{proof}
    First, we show that we can assume the diameter to be small. We check in $O(k\log n/\eps)$ rounds if the diameter is bigger than $\Theta(k\log n/\eps)$.
    As seen in \Cref{lm:diam_k_sparsest_cut}, if $D>\Theta(k\log n/\eps)$, we have that $\phi_k \le \eps$. 
    By \Cref{thm:HO_Cheeger}, we have that $\lambda_k \le 2\phi_k\le 2\eps$. So we set $\eps \leftarrow \eps/2$. And can then output $\tilde \lambda_i = \eps$ for all $i\le k$. For the remainder, we can assume that $D= O( k\log n/\eps)$. 

    By \Cref{lm:power_method_for_k}, we need to run the power method with the matrix $M_{\rm{approx}}=M- \sum_{i>n-k}\tilde \mu_i \tilde v_i \tilde v_i^T$. We need to run $O(\log n)$ instances to obtain our result with high probability, and each instance needs $O(\log n/\delta)$ iterations, with $\delta = \poly(1/\eps)$. Note that each iteration requires the previous iteration to be more precise. In the end reaching a bigger $\poly(\eps^{-1})$. 

    Note that we are not running the power method with a locally defined matrix anymore. We use $M_{\rm{approx}}:=M- \sum_{i>n-k}\tilde \mu_i \tilde v_i \tilde v_i^T$, where $M=2I-L$ \emph{is} locally defined. Compute $M_{\rm{approx}}x$ consists of multiple steps. First we compute $Mx=(2I-L)x$, which can be done in $O(1)$ rounds. Next, we need to compute each of the at most $k$ inner products $\tilde v_i^T x$. We can pipeline these inner products for the $\log n$-instances in $O(D+\log n)$ times. Recall that $\lambda_1=0$, with eigenvector $\sqrt{\deg}$. For $M_{\rm{approx}}$, this means the largest eigenvalue $\mu_n=2$ with eigenvector $v_n = \sqrt{\deg}$. We know this vector exactly, and do not need to use an approximation. 

    So we get total time $O(\log n\cdot \log n/\delta+\log n/\delta\cdot k \cdot (D+\log n)=O(\log^2n/\delta +\log nD)$.
    Filling in $D= O(k\log n/\eps)$ and $\delta = \poly(1/\eps)$ gives the result.
\end{proof}

\subsection{Lower Bounds For Eigenvalue Approximation}\label{sec:ev_LB}
For our lower bounds, we always assume that the graph is connected, as in standard in the \LOCAL and \CONGEST model. We allow for the general case that vertices know the value $n$.





\EigenvalueLB*
\begin{proof}
    We consider two classes of graphs.
    If with probability $>\tfrac{1}{2}$ it is correct on each, then there has to exist output that is correct on both. We consider this case. 
    
    First, consider a cycle on $k$ vertices $C_k$, connect this to a clique on $l=\Omega(n)$ vertices $K_l$. We denote this graph by $G_{k,l}$. 
    
    Second, consider a cycle on $k$ vertices $C_k$, connect this to a clique on $l$ vertices $K_l$ on one side, and a clique on $l'$ vertices. We denote this graph by $G_{k,l,l'}$. 

    \textbf{Claim 1.} We have $\lambda_2(G_{k,l})=\Theta(1/k^2)$. We consider the following eigenvector of the cycle $C_k$: $v_i = \sin(\tfrac{2\pi i}{k})$, which has eigenvalue $1-\cos(\tfrac{2 \pi}{k})$~\cite{chung1997spectral}. We note in particular that $v_0=0$. Since we connect the cycle to the clique at $v_0$, we can extent this vector to a vector that is all $0$s at other places. Now we have 
    \begin{equation}
        \frac{v^T L(G_{k,l})v}{v^Tv}=\frac{v^T L(C_{k})v}{v^Tv} = 1-\cos(\tfrac{2 \pi}{k}) = O(1/k^2).
    \end{equation}
    Further, by orthogonality we have that $v \perp w$, where $w=\sqrt{\deg}$ is the eigenvector corresponding to eigenvalue 0.
    One can show that $\lambda_2 = \min_{v\perp w}\frac{v^T L(G_{k,l})v}{v^Tv}$~\cite{chung1997spectral}. This means that for \emph{any} $v \perp w$ we have 
    \begin{equation*}
        \lambda_2 \le \frac{v^T L(G_{k,l})v}{v^Tv}.
    \end{equation*}
    We have $\phi(G_{k,l})=\Theta(1/k)$, there are two types of cuts to inspect to proof this: cutting through the cycle or cutting through the clique. Cutting through the cycle is most balanced, hence sparsest, if the cut $S$ is closest to the cycle. In this case we have    
     \begin{equation*}
        \phi(S) = \frac{|E(S,V\setminus S)|}{\min\{\Vol(S),\Vol(V\setminus S)\}}= \frac{2}{\min\{\Theta(l^2),2(k-1)\}} = \Theta(1/k).
    \end{equation*}
    
     Cutting through the clique always gives a bigger value: say we have $l'<l$ vertices of the clique in $S$, then 
     \begin{equation*}
        \phi(S) = \frac{|E(S,V\setminus S)|}{\min\{\Vol(S),\Vol(V\setminus S)\}}= \frac{l'(l-l')}{\min\{l'(l-1),(l-l')(l-1)+2k\}} \ge \frac{l-l'}{l-1} \ge \frac{1}{l-1}=\Theta(1/n).
    \end{equation*}    
    Since Cheeger's Inequality, \Cref{thm:Cheeger}, gives us that $\lambda_2 \ge \phi^2/2$, we get $\lambda_2(G_{k,l})=\Theta(1/k^2)$.

    \textbf{Claim 2.} We have $\lambda_2(G_{k,l,l'})= O(1/k^2)$. 
    The upper bound follows from the same argument as in Claim 1, with the additional observation that $v_{k/2}=0$ as well. 

    Suppose a vertex $v$ sees its $r$-hop neighborhood, for some $r< k/2$. This means the vertex sees at most one clique, say of $l$ vertices. The graph can still be $G_{k,l}$ or $G_{k,l,l'}$. For any $k\ge 2r$ (and $k\le n-l$). So we need to output $\tilde \lambda_2 \ge \Theta(1/r^2)$. However, the graph can also be $G_{2D,l,l'}$ with $\lambda_2(G_{2D,l,l'}) \le O(1/D^2)$. So we need $\Theta(1/r^2) \le (1+\eps)O(1/D^2)$, so $r= \Omega(D)$. 
\end{proof}

\subsection{Approximating the Entire Spectrum}
\label{sec:approx_spectrum}
Via similar methods, one can also approximate the entire spectrum. However, the output will not reveal how many eigenvalues there are of a certain approximation, i.e., it gives the spectrum \emph{without multiplicity}. 

\begin{remark}\label{remark:completeSpectrum}
    We can compute a multiplicative approximation of  the entire spectrum (without multiplicity) in $n^{o(1)}(\sqrt n+D)/\eps^2$ rounds.
\end{remark}
\begin{proof}[Sketch.]
Note that this statement does not have a linear dependency on $n$, even though there are $n$ eigenvalues and from a high level point of view we would need to address them sequentially with the power method. The linear dependency does not appear in the runtime as we only compute the spectrum approximately and without multiplicity. Thus, essentially we only need to test $O(\log (Wn)/\eps)$ regimes in $[0,2]$ for the existence of an eigenvalue. 

    The algorithm is now as follows. One can also find an eigenvalue with $\sigma$ by the power method on $(L-\sigma I)^{-1}$, with $\tilde \sigma$ some estimate of $\sigma$. By running this with  $\sigma= 2/(1+\eps)^i$ for $i=1,\cdots, O(\log(Wn)/\eps)$, we cover the entire spectrum. Now we have to perform matrix-vector multiplication with $(L-\sigma I)^{-1}$, which takes $n^{o(1)}(\sqrt n+D)$ rounds in each step~\cite{ForsterGLPSY21}. Multiplying this by $O(\log n/\eps)$ iterations, we obtain in total $n^{o(1)}(\sqrt n+D)/\eps^2$ rounds.
\end{proof}


    \newpage
    \section{Approximating the Sparsest Cut}\label{sec:sparsest_cut}
In this section, we use the approximate eigenvalues from \Cref{sec:approx_spectrum} to approximate the sparsest cut in the \CONGEST model. We also give the complexity of the problem in the \LOCAL model and we provide a lower bound in \emph{unweighted} graphs. 

\paragraph{\texorpdfstring{$k$}{k}-Way Sparsest Cut and Diameter}
We first observe that the diameter of a graph is related to its $k$-way sparsest cut. The result follows from a ball growing argument, that is (at least) known for the case $k=2$. We prove the generalization for completeness. 
\begin{lemma}\label{lm:diam_k_sparsest_cut}
    If $G=(V,E)$ has $k$-way sparsest cut $\phi_k$, then we have $D=O(k \log n/\phi_k )$. 
\end{lemma}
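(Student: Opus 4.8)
The plan is to run a standard ball-growing argument from any vertex and exploit the defining property of the $k$-way sparsest cut: if we had $k$ disjoint vertex sets each of which has small boundary relative to its volume, then $\phi_k$ would be small; contrapositively, if $\phi_k$ is large, we cannot have many "slowly growing" balls, which forces small diameter. Concretely, fix an arbitrary vertex $v$ and let $B_r = B_r(v)$ be the ball of radius $r$ around $v$, with volume $\mathrm{Vol}(B_r)$. The key observation is that $|E(B_r, V\setminus B_r)| \le \mathrm{Vol}(B_{r+1}) - \mathrm{Vol}(B_r)$, since every cut edge leaving $B_r$ has its other endpoint in $B_{r+1}\setminus B_r$ and contributes at least its weight to that volume increment (for unweighted graphs this is immediate; for weighted graphs the weighted degree bounds the total weight of incident edges). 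So as long as $\mathrm{Vol}(B_r) \le \mathrm{Vol}(V)/2$, either the volume grows by a $(1+\phi_k)$-factor when going from $B_r$ to $B_{r+1}$, or $B_r$ is a set witnessing $\phi(B_r) < \phi_k$ — but a single such low-conductance set already contradicts the definition of $\phi_2 = \phi \ge \phi_k$ (recall $\phi_k \le \phi_{k-1} \le \dots \le \phi_2$, so a set sparser than $\phi_k$ is in particular sparser than $\phi$, impossible). Hence the volume must multiply by $(1+\phi_k)$ at every step until it exceeds half the total volume, which happens within $O(\log(\mathrm{Vol}(V))/\log(1+\phi_k)) = O(\log n / \phi_k)$ steps (using $\mathrm{Vol}(V) \le n \cdot n \cdot W = \mathrm{poly}(n)$ in terms of the bit complexity, and $\log(1+\phi_k) = \Theta(\phi_k)$ for $\phi_k \le 1$).

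For the general $k\ge 2$ case, the extra factor $k$ enters as follows. Run the ball-growing simultaneously (or sequentially) from a maximal collection of "spread-out" vertices. Suppose, for contradiction, that $D > c \cdot k \log n/\phi_k$ for a suitable constant $c$. Then pick $k$ vertices $u_1,\dots,u_k$ that are pairwise at distance more than $2R$ apart, where $R = \Theta(\log n/\phi_k)$ — such vertices exist because a path of length $D$ already contains $\Omega(D/R) = \Omega(k)$ points at mutual distance $> 2R$. Grow a ball of radius $R$ around each $u_j$; by the argument of the previous paragraph, each such ball $B_R(u_j)$ either reaches volume $> \mathrm{Vol}(V)/2$ — impossible for more than one disjoint ball — or, if it stayed below half the volume for all $R$ steps, it must have grown by $(1+\phi_k)^R \ge n^{\Omega(1)} > \mathrm{Vol}(V)$ for $R$ large enough, also a contradiction; the remaining possibility, hit along the way, is that some intermediate ball $B_r(u_j)$ with $r < R$ has conductance $< \phi_k$. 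Since the $u_j$ are $>2R$-separated, the balls $B_R(u_j)$ are pairwise disjoint, so we obtain $k$ disjoint nonempty sets $S_j := B_{r_j}(u_j)$ each with $\phi(S_j) < \phi_k$, i.e. $|E(S_j, V\setminus S_j)|/\mathrm{Vol}(S_j) < \phi_k$ (here using $\mathrm{Vol}(S_j) \le \mathrm{Vol}(V)/2 \le \mathrm{Vol}(V\setminus S_j)$, so the conductance is controlled by $\mathrm{Vol}(S_j)$ in the denominator as in the definition of $\phi_k$). This directly contradicts $\phi_k = \min_{V_1,\dots,V_k} \max_j |E(V_j, V\setminus V_j)|/\mathrm{Vol}(V_j)$. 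Therefore $D = O(k\log n/\phi_k)$.

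The main obstacle I expect is bookkeeping the two escape routes of the ball-growing argument cleanly — namely, being careful that "the ball has grown past half the volume" can happen for at most one of the $k$ balls, and that otherwise we genuinely extract a low-conductance cut rather than just a fast-growing ball. A secondary technical point is the weighted case: one must make sure $\log(\mathrm{Vol}(V)) = O(\log n)$, which holds under the usual assumption that edge weights are $\mathrm{poly}(n)$-bounded (or, more precisely, that $\log W = O(\log n)$, consistent with the $O(\log n)$-bit messages of \CONGEST); absent that, the bound reads $D = O(k\log(\mathrm{Vol}(V))/\phi_k)$. Finally, I'd double-check the exact separation radius and the constant $c$ so that $k$ well-separated centers really do exist along a diameter-realizing path — this is where the linear-in-$k$ factor in the statement comes from, and it is the only place $k$ appears.
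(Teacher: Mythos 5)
Your argument follows the same route as the paper's proof: pick $\Theta(k)$ pairwise well-separated centers, grow balls around each, and either extract $k$ disjoint sets with boundary-to-volume ratio below $\phi_k$ (contradicting the definition of $\phi_k$) or force geometric volume growth that contradicts the separation. Two things need repair, though. First, your monotonicity claim is backwards: dropping one set from a feasible $(k+1)$-tuple yields a feasible $k$-tuple whose maximum is no larger, so $\phi_2\le\phi_3\le\dots\le\phi_k$, not the reverse. Hence the assertion in your first paragraph that a \emph{single} set with ratio below $\phi_k$ ``already contradicts $\phi\ge\phi_k$'' is false for $k>2$ --- the optimal sparsest cut itself is such a set --- and the $k$-free bound $D=O(\log n/\phi_k)$ that the first paragraph alone would yield is simply wrong (on a path, $\phi_n=1$ while $D=n-1$). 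The warm-up is valid only for $k=2$; for general $k$ you genuinely need all $k$ disjoint witnesses, which is exactly what your second paragraph (and the paper) supplies. Second, the escape route you flag at the end is a real gap as written: if one of the $k$ balls exceeds half the total volume before exhibiting a sparse cut, you are left with only $k-1$ disjoint sets of ratio $<\phi_k$, which only certifies $\phi_{k-1}<\phi_k$ and is no contradiction. The fix is cheap --- take $k+1$ (or $2k$) pairwise $2R$-separated centers, which still exist under the assumption $D>c\,k\log n/\phi_k$ for a suitable constant $c$, and discard the at most one oversized ball. With these two repairs your proof is complete and coincides in substance with the paper's.
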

\begin{proof}
    Suppose not. This means that from any vertex~$v_1$, there are vertices up to distance $\Theta(k\log n/\phi_k)$. Select vertices $v_2, v_3, \dots, v_k$ such that $d(v_i,v_j)>\Theta(\log n/\phi_k)$ for $i\neq j$. Now, for any $v_i$, suppose for contradiction that there are no cuts $S$ within a distance $\Theta(\log n/\phi_k)$ with $\phi(S)\le \phi_k$. Then for any $j\le \log n/\phi_k$, the $j$-hop neighborhood $N_j(v_i)$ satisfies $|E(N_j(v_i), V\setminus N_j(v_i))|\ge \phi_k\Vol(N_j(v_i))$, so within $\log_{1+\phi_k}n=O(\log n/\phi_k)$ steps, we cover the entire graph, a contradiction. 
\end{proof}



 \subsection{\CONGEST Upper Bound for Approximating the Sparsest Cut}
 We combine the Cheeger's Inequality with our eigenvalue approximation to obtain the following result. 
\CONGESTSparsestCut*
\begin{proof}
    By \Cref{thm:CONGEST_smallest}, we can find $\tilde\lambda_2$ such that 
    \begin{equation*}
        \lambda_2 \le \tilde \lambda_2 \le \lambda_2+\delta.
    \end{equation*}
    This takes $O\left(\tfrac{\log^2 n}{\delta} + \tfrac{\log n}{\delta}\cdot \left\lceil \tfrac{\lambda_2}{ \delta}\right\rceil\right)$ rounds. 

    Now we output $\tilde \phi_2=  \sqrt{2\tilde \lambda_k}$. 
    Using Cheeger's Inequality, \Cref{thm:Cheeger}, we see that 
    \begin{align*}
        \tilde \phi &= \sqrt{2\tilde \lambda_k} \ge \sqrt{2 \lambda_k} \ge \phi
        \intertext{and that}
         \tilde \phi &= \sqrt{2\tilde \lambda_2} \le \sqrt{ 2\lambda_2+2\delta} \le  \sqrt{ 2\phi+2\delta}. 
         \intertext{When we set  $\delta = \eps^2/2$ we see that }
         \tilde \phi &\le  \sqrt{ 2\phi+2\delta} =  \sqrt{ 2\phi+\eps^2} \le \sqrt{2\phi}+\eps.
    \end{align*} 
    Hence we can attain the $\sqrt{n\phi}$-bound of Cheeger's Inequality up to an arbitrarily small additive error~$\eps$ in $O(\log^2 n /\eps^2+\log n/\eps^4)$ rounds.
    When we set $\delta = 0.005\phi$, we get the stated approximation: $\tilde \phi \le \sqrt{2.01\phi}$. This takes 
    \begin{align*}
        O\left(\tfrac{\log^2 n}{\delta} + \tfrac{\log n}{\delta}\cdot \left\lceil \tfrac{\lambda_2}{ \delta}\right\rceil\right) &= O\left(\tfrac{\log^2 n}{\phi} + \tfrac{\log n}{\phi}\cdot \left\lceil \tfrac{\lambda_2}{ \phi}\right\rceil\right)
    \end{align*}
    rounds. Cheeger's Inequality states that $\tfrac{\lambda_2}{2}\le \phi$, so $\tfrac{\lambda_2}{\phi}\le 2$. This simplifies the running time to
    $ O\left(\tfrac{\log^2 n}{\phi} + \tfrac{\log n}{\phi}\right) = O\left(\tfrac{\log^2 n}{\phi} \right) $
    rounds.     
    
    Note that, of course, $\phi$ is not known at the beginning of the algorithm, but we can approximate it without additional asymptotic costs via an exponential search with $\phi=1/2, 1/4, \dots$.  
\end{proof}

 \subsection{\CONGEST Upper Bound for Approximating the \texorpdfstring{$k$}{k}-Way Sparsest Cut}
  We combine the generalized Cheeger's Inequality with our eigenvalue approximation to obtain the following result. 
 \CONGESTkWaySparsestCut*
\begin{proof}


        
    By \Cref{thm:congest_lambda_k}, we can find $\tilde\lambda_k$ such that 
    \begin{equation*}
        \lambda_k \le \tilde \lambda_k \le \lambda_k+\eps.
    \end{equation*}
    This takes $ O(\log^2 n \poly(\eps^{-1}))$ rounds. 

    Now we output $\tilde \phi_k= \Theta(k^2) \sqrt{\tilde \lambda_k}$. 
    Using the higher order Cheeger's Inequality, \Cref{thm:HO_Cheeger}, we see that 
    \begin{align*}
        \tilde \phi_k &= \Theta(k^2) \sqrt{\tilde \lambda_k} \ge \Theta(k^2) \sqrt{ \lambda_k} \ge \phi_k
        \intertext{and that}
         \tilde \phi_k &= \Theta(k^2) \sqrt{\tilde \lambda_k} \le\Theta(k^2) \sqrt{ \lambda_k+\eps} \le \Theta(k^2) \sqrt{ 2\phi_k+\eps}. 
    \end{align*}
    Setting $\eps=\phi_k$ gives the result. 
\end{proof}

 \subsection{Lower Bounds}
As before, we assume that the graph is connected, as in standard in the \LOCAL and \CONGEST model. We allow for the general case that vertices know the value $n$. 
First, we give a diameter lower bound for all possible diameters. 

\SparsestCutMultLB*
\begin{proof}
    We consider two types of graphs with $n$ vertices and diameter $D$. If with probability $>\tfrac{1}{2}$ it is correct on each, then there has to exist output that is correct on both. We consider this case. Let $N$ denote  $(n-D+2)/2$. 
    
    Let $G_1$ be  a path of length $D-2$, connected to a clique on $N$ vertices on both sides. This graph has diameter $D$ and sparsest cut $\Theta(1/n^2)$. For the latter, we consider a cut $S$, cutting the graph anywhere on the path. The volume on both sides is $\Theta(n^2)$, since each clique has $n^2$ edges, so this gives a cut of sparsity 
    \begin{equation*}
        \phi(S) = \frac{|E(S,V\setminus S)|}{\min\{\Vol(S),\Vol(V\setminus S)\}}= \frac{1}{\min\{\Theta(n^2),\Theta(n^2)\}} = \Theta(1/n^2).
    \end{equation*}
    
     Cutting through the clique always gives a bigger value: say we have $M<N$ vertices of the clique in $S$, then 
     \begin{equation*}
        \phi(S) = \frac{|E(S,V\setminus S)|}{\min\{\Vol(S),\Vol(V\setminus S)\}}= \frac{M(N-M)}{\min\{M(N-1),\Theta(n^2)\}} = \frac{N-M}{N-1} \ge \frac{1}{N-1}=\Theta(1/n).
    \end{equation*}
   
    Second, consider $G_2$,  a path of length $D$, connected to a clique on $N$ vertices on one side, and connected to the center of a star graph on $N$ vertices on the other side. This graph also has diameter $D$, but the sparsest cut is $\Theta(1/n)$. As before, cutting through the clique gives $\phi(S)\ge \Theta(1/n)$. When we cut anywhere not on the path, we are cutting a tree. To get the biggest volume (and hence smallest sparsity), we cut the first edge after the clique 
    \begin{equation*}
        \phi(S) = \frac{|E(S,V\setminus S)|}{\min\{\Vol(S),\Vol(V\setminus S)\}}= \frac{1}{\min\{D+N,\Theta(n^2)\}} = \frac{1}{\min\{\Theta(n),\Theta(n^2)\}}= \Theta(1/n). 
    \end{equation*}

    No vertex in $G_1$ can discern between those two graphs in less than $\Theta(D)$ time: a vertex can only see at most one clique. The other side could be the star.     
    So every vertex $v$ needs to output $\tilde \phi_v \ge \Theta(1/n)$, but this is only a $\Theta(n)$ approximation for $\phi(G_1)=\Theta(1/n^2)$.  
%
 %
\end{proof}

Second, we give a diameter lower bound for most values of $\phi$. 
\SparsestCutAddLB*
\begin{proof}
    Fix $\phi > 1/n^{1-\eps}$.    
    Chierichetti et al.~\cite{ChierichettiGLP18} showed that there exist graphs with sparsest cut $\phi$ and with diameter $D=\Theta(\log n/\phi)$. Consider such a graph $G$. Say that after $r=D/3$ rounds every vertex outputs $\tilde \phi_v$ such that $\phi \le \tilde \phi_v \le \Theta(n^{1+\eps}\cdot \phi)$. 
    If every vertex has more than $n/2$ vertices in its $r$-hop neighborhood, then the diameter is at most $2r<D$, a contradiction.     
    Consider a vertex $v\in V$ with most $n/2$ vertices in its $r$-hop neighborhood. By assumption $v$ needs to output $\tilde \phi_v \ge \phi$. 

    Now consider a graph $G'$, consisting of $v$ and its $r$-hop neighborhood as before, but the remaining vertices form 2 cliques on $n/4$ vertices, $C_1$ and $C_2$. $C_1$  connected to $C_2$ by one edge, and $C_2$ is connected to the $r$-hop neighborhood of $v$. This graph has sparest cut at most $\Theta(1/n^2)$. Indeed, consider the cut $S$ between $C_1$ and $C_2$: this has one edge, and volume $\Theta(n^2)$ on each side, so $\phi(S)\le \Theta(1/n^2)$. By assumption, $v$ needs to output $\tilde \phi_v \le  \Theta(n^{1+\eps} \cdot 1/n^2)= \Theta(1/n^{1-\eps})<\phi$, for appropriately chosen constants.

    If with probability $>\tfrac{1}{2}$ the algorithm is correct on each $G$ and $G'$, then there has to exist output that is correct on both. We consider this case. 

    Since $v$ cannot distinguish between these two graphs, it needs to output $\tilde \phi_v \ge \phi$ \emph{and} $\tilde \phi_v < \phi$, a contradiction. 
\end{proof}

\newpage
  \printbibliography[heading=bibintoc]

\appendix
  \newpage

\section{The Power Method}
\label{app:powerMethod}
The power method is a standard technique in numerical linear algebra, see, e.g., \cite{golub2013matrix}. Although it works for any type of matrix, both its statement and analyses are simpler for positive semi-definite instances. Since this suffices fo our purposes, we restrict to this case. In \Cref{ssec:PowerMethodProof}, we prove the known \Cref{lm:powerMethod}, following~\cite{trevisan-expanders}.
In \Cref{ssec:bitPrecision} we prove the bit precision lemma (\Cref{lm:num_stab}) required for an efficient CONGEST implementation. 
\subsection{The Power Method (Proof of Lemma~\ref{lm:powerMethod})}
\label{ssec:PowerMethodProof}

\paragraph{Algorithm.} 
The power method starts with initial vector $x_0\sim \{-1,1\}^n$, chosen uniformly at random vector. 
We then compute $x_k := M^k x_0$. The goal is to show that the Rayleigh coefficient of $x_k$ approximates the largest eigenvalue $\mu_n$ well: 
    \begin{equation*}
        (1-\eps)\mu_n\le\frac{x_k^TMx_k }{x_k^T x_k} \le \mu_n.
    \end{equation*}

\paragraph{Start vector.}
Notably, the start vector $x$ has to have a significant inner product with the largest eigenvector~$v_n$.
Choosing $x$ uniformly at random ensures that this is true with constant probability. For this proof, we use the Paley-Zygmund Inequality.

\begin{lemma}[Paley-Zygmund Inequality]
    \label{lm:paley_zygmund_inequality}
    Let $X \ge 0$ be a random variable with finite variance.
    For all $0 \le \delta \le 1$ holds that
    \begin{equation*}
        \P[X \ge \delta \cdot \E(X)] \ge (1 - \delta)^2 \cdot \frac{\E(X)^2}{\E(X^2)}.
    \end{equation*}
\end{lemma}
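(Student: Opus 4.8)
The plan is to prove this by the standard ``split-and-Cauchy--Schwarz'' argument. Write $t := \delta\,\E(X)$ and decompose the expectation according to whether $X$ falls below or above the threshold:
\begin{equation*}
    \E(X) = \E\big(X\cdot \mathbf{1}_{\{X < t\}}\big) + \E\big(X\cdot \mathbf{1}_{\{X \ge t\}}\big).
\end{equation*}
The first summand is at most $t = \delta\,\E(X)$, simply because on the event $\{X<t\}$ we have $X<t$ and elsewhere the indicator vanishes (here non-negativity of $X$ is used). For the second summand I would apply the Cauchy--Schwarz inequality to the pair $X$ and $\mathbf{1}_{\{X\ge t\}}$, giving
\begin{equation*}
    \E\big(X\cdot \mathbf{1}_{\{X \ge t\}}\big) \le \sqrt{\E(X^2)}\cdot\sqrt{\E\big(\mathbf{1}_{\{X\ge t\}}^2\big)} = \sqrt{\E(X^2)}\cdot\sqrt{\P[X \ge t]}.
\end{equation*}

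Combining the two bounds yields $\E(X) \le \delta\,\E(X) + \sqrt{\E(X^2)}\cdot\sqrt{\P[X\ge t]}$, hence $(1-\delta)\,\E(X) \le \sqrt{\E(X^2)}\cdot\sqrt{\P[X\ge t]}$. Since $0\le\delta\le 1$ and $X\ge 0$, both sides are non-negative, so I may square to obtain $(1-\delta)^2\,\E(X)^2 \le \E(X^2)\cdot\P[X\ge t]$, and dividing by $\E(X^2)$ gives exactly the claimed bound $\P[X\ge \delta\,\E(X)] \ge (1-\delta)^2\,\E(X)^2/\E(X^2)$.

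The only thing to be slightly careful about is the degenerate case $\E(X^2)=0$: then $X=0$ almost surely, so $\E(X)=0$ as well, the right-hand side is interpreted as $0$ (or the statement is vacuous), and the inequality holds trivially; otherwise $\E(X^2)>0$ and the division in the last step is legitimate. I would also note that finiteness of the variance guarantees $\E(X^2)<\infty$, so all expectations above are well defined and Cauchy--Schwarz applies. There is no substantive obstacle here — the argument is short and self-contained; the main point of care is ensuring non-negativity is invoked correctly when discarding the below-threshold part and when squaring.
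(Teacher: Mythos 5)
Your proof is correct: the split at $t=\delta\,\E(X)$, the bound $\E(X\mathbf{1}_{\{X<t\}})\le t$, and Cauchy--Schwarz on $X\cdot\mathbf{1}_{\{X\ge t\}}$ constitute the standard proof of Paley--Zygmund, and your handling of the degenerate case $\E(X^2)=0$ and of the sign condition needed before squaring is appropriate. The paper itself states this lemma as a known inequality without proof, so there is nothing to compare against; your argument is exactly the canonical one and fills that gap correctly.
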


The power method fails to work if the random start vector is orthogonal to the eigenvector of the largest eigenvalue. The following lemma states that it is unlikely to happen. 
\begin{restatable}{lemma}{lemStartVector}\label{lm:start_vector}
    Let $x\sim \set{-1,1}^n$ be a uniformly random vector and $y \in \R^n$ a vector with $\norm{y} = 1$.
    Then we have
    \begin{equation*}
        \P\left[\abs{\innerprod{x}{y}}\ge \tfrac{1}{2}\right] \ge \tfrac{3}{16}.
    \end{equation*}
\end{restatable}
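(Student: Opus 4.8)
The plan is to prove Lemma~\ref{lm:start_vector} via the Paley--Zygmund Inequality (Lemma~\ref{lm:paley_zygmund_inequality}) applied to the nonnegative random variable $X := \langle x,y\rangle^2$, where $x\sim\{-1,1\}^n$ uniformly and $\|y\|=1$. The strategy reduces to computing (or bounding) the first two moments $\E[X]$ and $\E[X^2]$, and then choosing the Paley--Zygmund parameter $\delta$ so that the threshold $\delta\,\E[X] = \delta$ sits at $1/4$, which makes the event $X\ge 1/4$ equivalent to $|\langle x,y\rangle|\ge 1/2$.

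First I would compute $\E[X]$. Writing $\langle x,y\rangle = \sum_i x_i y_i$ and expanding the square, cross terms $\E[x_ix_j] = 0$ vanish for $i\ne j$ by independence and mean zero, while $\E[x_i^2]=1$, so $\E[X] = \sum_i y_i^2 = \|y\|^2 = 1$. Next I would bound $\E[X^2] = \E[\langle x,y\rangle^4]$. Expanding the fourth power, the only surviving terms (again using that odd moments of $x_i$ vanish) are those where every index appears an even number of times: the terms $x_i^4$ contributing $\sum_i y_i^4$, and the terms of the form $x_i^2 x_j^2$ with $i\ne j$, each appearing with multiplicity $\binom{4}{2}=6$, contributing $6\sum_{i<j} y_i^2 y_j^2 = 3\big((\sum_i y_i^2)^2 - \sum_i y_i^4\big)$. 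Hence $\E[X^2] = 3 - 2\sum_i y_i^4 \le 3$, using $\sum_i y_i^4 \ge 0$.

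Now apply Paley--Zygmund with $\delta = 1/4$: since $\E[X]=1$, the event $\{X \ge \delta\,\E[X]\} = \{X \ge 1/4\} = \{|\langle x,y\rangle| \ge 1/2\}$, and the inequality gives
\begin{equation*}
    \P\!\left[|\langle x,y\rangle| \ge \tfrac12\right] \ge (1-\tfrac14)^2 \cdot \frac{\E[X]^2}{\E[X^2]} \ge \frac{9}{16}\cdot\frac{1}{3} = \frac{3}{16}.
\end{equation*}
This is exactly the claimed bound. I do not anticipate a genuine obstacle here; the only mildly delicate point is the bookkeeping in the fourth-moment expansion (tracking multinomial coefficients and discarding odd-moment terms), but this is routine. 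One could alternatively invoke a standard hypercontractivity or Khintchine-type bound to control $\E[X^2]$, but the direct expansion is cleanest and self-contained.
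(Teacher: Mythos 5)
Your proposal is correct and follows essentially the same route as the paper: apply Paley--Zygmund to the square of the inner product with $\delta=1/4$, using $\E[\langle x,y\rangle^2]=1$ and $\E[\langle x,y\rangle^4]=3-2\sum_i y_i^4\le 3$. The moment computations and the final bound $(3/4)^2\cdot\tfrac13=\tfrac{3}{16}$ match the paper's proof exactly.
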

\begin{proof}
    Let $X = \sum_{i \in [n]} x_i y_i$ be the random variable defined by the inner product $\innerprod{x}{y}$.
    As $X$ is not necessarily non-negative, we base our calculation on $X^2$ which is always non-negative.
    First, we consider the second moment of $X$.
    \begin{align*}
        \E(X^2) &= \E \left(\left(\sum_{i \in [n]} x_i y_i\right)^2 \right)
        = \sum_{i,j \in [n]} y_i y_j \E(x_ix_j)
        =\sum_{i \in [n]} y_i^2\E(x_i^2) + \sum_{i, j\in [n], i \neq j} y_iy_j \E(x_ix_j) \\
        &=\sum_{i \in [n]} y_i^2  + \sum_{i, j\in [n], i \neq j} y_iy_j
        = \sum_{i \in [n]} y_i^2 = 1
    \intertext{In the fourth step, we apply $\E(x_ix_j) = 0$ and $\E(x_i^2) = 1$.
    The last equality follows from the fact that $\norm{y} = 1$.
    Next, consider the fourth moment of $X$.}
        \E(X^4) 
        &= 3 \cdot \sum_{i,j \in [n], i \neq j} \E(x_i^2)y_i^2 \cdot \E(x_j^2)y_j^2 + \sum_{i \in [n]} \E(x_i^4)y_i^4 
        =  3 \cdot \sum_{i,j \in [n]} y_i^2 y_j^2 - 3 \cdot \sum_{i \in [n]} y_i^4 + \sum_{i \in [n]} y_i^4  \\
        &= 3\cdot \left(\sum_{i \in [n]} y_i^2 \right)^2 - 2 \cdot \sum_{i \in [n]} y_i^4 
        = 3 - 2 \cdot \sum_{i \in [n]} y_i^4 
        \le 3. 
    \end{align*}
    For the sake of simplicity, the first step leaves out any summations that contain $\E(x_i)$ as all of them are equal to zero.
    In the second step, we use the fact that $\E(x_i^2) = 1 = \E(x_i^4)$.
    
    Now, we are in the position to apply the Paley-Zygmund inequality.
    Let $\delta = 1/4$.
    As $X^2$ is a non-negative random variable with finite variance, it follows from  \cref{lm:paley_zygmund_inequality} that
    \begin{equation*}
        \P[X^2 \ge \delta \cdot \E(X^2)] 
        = \P[X^2 \ge \tfrac{1}{4}]
        \ge (1 - \tfrac{1}{4})^2 \cdot \frac{\E(X^2)^2}{\E(X^4)}
        = (\tfrac{3}{4})^2 \cdot \tfrac{1}{3}
        = \tfrac{3}{16}.\qedhere
    \end{equation*}
\end{proof}

\paragraph{Lower Bound.}
\Cref{lm:powerMethod} states that the Rayleigh coefficient output by the power method is lower bounded by $ (1-\eps)\mu_n$. Next, we prove a slightly weaker lower bound; the original lower bound is obtained in the proof of \Cref{lm:powerMethod} by adjusting $\eps$.
\begin{lemma}\label{lm:largest_eigenvalue_ratio}
    Let $x_0 \in \R^n$ and set $x_k = M^kx$.
    For any $\eps > 0$ and $k \ge \log n / \eps$, we have
    \begin{equation}
        \label{eq:largest_eigenvalue_ratio}
        \frac{x_k^TMx_k }{x_k^T x_k} \ge (1-\eps)\mu_n \cdot \frac{1}{1 + \frac{\norm{x}^2}{\innerprod{x_0}{v_n}^2}(1- \eps)^{2k}}.
    \end{equation}
\end{lemma}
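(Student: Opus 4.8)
The plan is to pass to the orthonormal eigenbasis $v_1,\dots,v_n$ of $M$ and reduce everything to an algebraic inequality about sums. Write $x_0 = \sum_{i=1}^n c_i v_i$ with $c_i = \innerprod{x_0}{v_i}$, so that $x_k = M^k x_0 = \sum_{i=1}^n c_i \mu_i^k v_i$. By orthonormality the Rayleigh coefficient has the clean form
\[
\frac{x_k^T M x_k}{x_k^T x_k} = \frac{\sum_{i=1}^n c_i^2\,\mu_i^{2k+1}}{\sum_{i=1}^n c_i^2\,\mu_i^{2k}}.
\]
Before estimating, I would dispose of two degenerate cases: if $\mu_n = 0$ then $M$ is the zero matrix and there is nothing to prove, and if $\innerprod{x_0}{v_n}=0$ then the right-hand side of the claimed bound is $0$, which is dominated by the Rayleigh coefficient since $M$ is positive semi-definite. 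So from now on $\mu_n>0$ and $c_n\neq 0$.

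The key step is to split the index set according to proximity to $\mu_n$. Let $S := \{\, i : \mu_i \ge (1-\eps)\mu_n \,\}$, and note that $n\in S$. For the numerator I would drop every term with $i\notin S$ and use $\mu_i\ge(1-\eps)\mu_n$ on $S$:
\[
\sum_{i=1}^n c_i^2\,\mu_i^{2k+1} \;\ge\; \sum_{i\in S} c_i^2\,\mu_i^{2k+1} \;\ge\; (1-\eps)\mu_n \sum_{i\in S} c_i^2\,\mu_i^{2k} \;=:\; (1-\eps)\mu_n\, A.
\]
For the denominator, write it as $A + B$ with $B := \sum_{i\notin S} c_i^2\,\mu_i^{2k}$. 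On the complement $\mu_i < (1-\eps)\mu_n$, hence $B \le (1-\eps)^{2k}\mu_n^{2k}\sum_{i\notin S} c_i^2 \le (1-\eps)^{2k}\mu_n^{2k}\norm{x_0}^2$, while keeping only the $i=n$ term gives $A \ge c_n^2\mu_n^{2k} = \innerprod{x_0}{v_n}^2\mu_n^{2k}$. Combining,
\[
\frac{x_k^T M x_k}{x_k^T x_k} \;\ge\; \frac{(1-\eps)\mu_n\,A}{A+B} \;=\; (1-\eps)\mu_n\cdot\frac{1}{1+B/A},
\]
and since $B/A \le \frac{\norm{x_0}^2}{\innerprod{x_0}{v_n}^2}(1-\eps)^{2k}$ (the factors $\mu_n^{2k}$ cancel), this is exactly \eqref{eq:largest_eigenvalue_ratio}.

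\textbf{On the main obstacle.} Honestly, there is no genuinely hard step here: once one works in the eigenbasis the argument is purely algebraic, and the only points to be careful about are that $n\in S$ (so that the ``good'' part $A$ of the denominator retains the $v_n$-contribution) and the two degenerate cases above. One observation worth flagging is that the inequality \eqref{eq:largest_eigenvalue_ratio} in fact holds for \emph{all} $k\ge 0$ and all $\eps>0$; the hypothesis $k\ge \log n/\eps$ is not needed for this statement itself. It is needed only when \Cref{lm:largest_eigenvalue_ratio} is fed into the proof of \Cref{lm:powerMethod}: there $k\ge\log n/\eps$ yields $(1-\eps)^{2k}\le n^{-2}$, which together with the start-vector bound $\norm{x_0}^2/\innerprod{x_0}{v_n}^2 \le 4n$ from \Cref{lm:start_vector} turns the correction factor into $1/(1+4/n)\ge 1-4/n$, after which adjusting $\eps$ (using $\eps\ge 16/n$) gives the clean $(1-\eps)\mu_n$ lower bound.
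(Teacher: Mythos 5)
Your proof is correct and follows essentially the same route as the paper: decompose $x_0$ in the orthonormal eigenbasis, split the indices into those with $\mu_i \ge (1-\eps)\mu_n$ (your set $S$, the paper's $\{j,\dots,n\}$) and the rest, lower-bound the numerator over $S$, and upper-bound the small-eigenvalue part of the denominator by $(1-\eps)^{2k}\mu_n^{2k}\norm{x_0}^2$ while retaining the $v_n$-term in $A$. Your side observation that the hypothesis $k\ge \log n/\eps$ is not actually used in proving \eqref{eq:largest_eigenvalue_ratio} itself (only later, in \Cref{lm:powerMethod}) is accurate.
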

\begin{proof}
    We start by dividing the eigenvalues $ \mu_1\le \mu_2 \le \dotsc \le \mu_n$ of~$M$ in two groups.
    Let $j \in [n]$ be the first index for which $(1-\eps)\mu_n \le \mu_{j}$ holds, then we have $\mu_1, \dotsc ,\mu_{j-1} < (1-\eps)\mu_n$ and $(1-\eps)\mu_n \le \mu_{j}, \dotsc ,\mu_n$.
    As a direct consequence, it holds that 
    \begin{equation*}
        \sum_{i \in [j-1]} \mu_i^{2k} \le n (1 - \eps)^{2k}\mu_n^{2k} \le \frac{\mu_n^{2k}}{n}.
    \end{equation*}
    Intuitively, the contribution of the small eigenvalue group to the spectral decomposition of $M^{2k+1}$ is insignificant for large $k$.
    With this insight, we use the spectral decomposition of~$M^{2k+1}$ to derive a lower bound for the nominator
    \begin{align*}
        x_k^TMx_k 
        = x_0^T M^{2k+1} x_0 
        &= x_0^T \left( \sum_{i \in [n]} \mu_i^{2k+1}v_iv_i^T \right) x_0 \\
        &= \sum_{i \in [n]} \mu_i^{2k+1} \innerprod{x_0}{v_i}^2 \\
        &\ge \sum_{i \in \set{j, \dotsc, n}} \mu_i^{2k+1} \innerprod{x_0}{v_i}^2 \\
        &\ge (1-\eps)\mu_n  \sum_{i \in \set{j, \dotsc, n}} \mu_i^{2k} \innerprod{x_0}{v_i}^2.
    \intertext{Second, we derive an upper bound for the denominator}
        x_k^Tx_k 
        = x_0^T M^{2k} x_0
        &= \sum_{i \in [n]} \mu_i^{2k} \innerprod{x_0}{v_i}^2 \\
        &= \sum_{i \in \set{j, \dotsc, n}} \mu_i^{2k} \innerprod{x_0}{v_i}^2 + \sum_{i \in \set{1, \dotsc, j-1}} \mu_i^{2k} \innerprod{x_0}{v_i}^2  \\
        &\le \sum_{i \in \set{j, \dotsc, n}} \mu_i^{2k} \innerprod{x_0}{v_i}^2 + (1-\eps)^{2k}\mu_n^{2k} \sum_{i \in \set{1, \dotsc, j-1}} \innerprod{x_0}{v_i}^2 \\
        &\le \sum_{i \in \set{j, \dotsc, n}} \mu_i^{2k} \innerprod{x_0}{v_i}^2 + (1-\eps)^{2k}\mu_n^{2k} \norm{x_0}^2,
    \intertext{where the last inequality follows from $\sum_{i \in [j-1]} \innerprod{x_0}{v_i}^2 \le \norm{x_0}^2$. 
    Using both upper and lower bound, we can bound the ratio as follows}
        \frac{x_k^TMx_k }{x_k^T x_k}
        = \frac{x_0^T M^{2k+1} x_0}{x_0^T M^{2k} x_0}
        &\ge \frac{(1-\eps)\mu_n \sum_{i \in \set{j, \dotsc, n}}\mu_i^{2k} \innerprod{x_0}{v_i}^2}{(1-\eps)^{2k}\mu_n^{2k} \norm{x_0}^2 + \sum_{i \in \set{j, \dotsc, n}} \mu_i^{2k} \innerprod{x_0}{v_i}^2} \\
        &= \frac{(1-\eps)\mu_n}{1 + \frac{(1-\eps)^{2k}\mu_n^{2k} \norm{x_0}^2}{\sum_{i \in \set{j, \dotsc, n}} \mu_i^{2k} \innerprod{x_0}{v_i}^2}} \\
        &\ge \frac{(1-\eps)\mu_n}{1 + \frac{(1-\eps)^{2k}\mu_n^{2k} \norm{x_0}^2}{\mu_n^{2k} \innerprod{x_0}{v_n}^2}} \\
        &= \frac{(1-\eps)\mu_n}{1 + (1-\eps)^{2k}\frac{\norm{x_0}^2}{\innerprod{x_0}{v_n}^2}}.
        \qedhere
    \end{align*}
\end{proof}

\paragraph{Upper Bound.}
Next, we show that we cannot overshoot, i.e, that the output of the power method is upper bounded by $\mu_n$. 
\lemNoOvershoot*
\begin{proof}
    We decompose $x$ into the eigenvectors, i.e., let $\alpha_j\in \R$ be $\alpha_j:=\langle x,v_j\rangle$, and hence $x= \sum_{j\in [n]}\alpha_j v_j$. Now we observe 
    \begin{align*}
        x^TMx &= \left( \sum_{j\in [n]}\alpha_j v_j\right)^T\left( M\left( \sum_{j\in [n]}\alpha_j v_j\right)\right) =\sum_{j\in [n]}\alpha_j^2 \mu_j \\
        &\le \sum_{j\in [n]}\left(\max_{k\in [n]}\mu_k\right)  \alpha_j^2  =\mu_n \sum_{j\in [n]}  \alpha_j^2 = \mu_n \left( \sum_{j\in [n]}\alpha_j v_j\right)^T\left( \sum_{j\in [n]}\alpha_j v_j\right)=\mu_n x^Tx.\qedhere
    \end{align*}
\end{proof}

\paragraph{Conclusion.}
We conclude by putting the pieces together.
\PowerMethod*
\begin{proof}
    Due to \cref{lm:start_vector} and $\norm{x}^2 = n$, we have $\norm{x}^2/\innerprod{x}{v_1}^2 \le 4n$.
    Thus, the right denominator in \cref{eq:largest_eigenvalue_ratio} can be bounded by
    \begin{equation*}
        1 + \frac{\norm{x}^2}{\innerprod{x}{v_1}^2}(1- \eps)^{2k} 
        \le 1 + 4n \cdot e^{- 2 \log n}
        \le 1 + \tfrac{4}{n}.
    \end{equation*}
    For any $\eps \ge 8/n$ and $\eps' = 2\eps - \eps^2$, \cref{eq:largest_eigenvalue_ratio} simplifies to
    \begin{equation*}
        \frac{x_k^TMx_k }{x_k^T x_k} 
        \ge (1-\eps)\mu_n \cdot \frac{1}{1 + \tfrac{4}{n}} 
        \ge (1-\eps)\mu_n \cdot (1 - \tfrac{8}{n})
        \ge (1-\eps')\mu_n.
    \end{equation*}
    By \Cref{lm:no_overshoot}, we get the upper bound. 
\end{proof}


    

\subsection{Bit Precision Lemma (Proof of Lemma~\ref{lm:num_stab})}
\label{ssec:bitPrecision}

The following lemma shows that truncating or rounding after the $\Theta(\log n)$-th bit only impact the outcome by an inverse polynomial factor. This means we can round up to $\poly(n)$ times without any issues. In particular, we can round in every step of the algorithm with a negligibly small effect on the outcome. 

\lemNumStab*
\begin{proof}
    The power method starts with $x_0\sim \{+1,-1\}$ uniformly at random, so  $||x_0||=\sqrt n$, and with constant probability, \Cref{lm:start_vector}, we have $|\langle x_0,v_n\rangle| \geq \tfrac{3}{16}$. In which case the power method gives a good approximation. We investigate the numerical stability of this case.

    We need to compare before and after rounding for 
    \begin{equation*}
        \frac{(x_k)^TMx_k }{(x_k)^Tx_k}= \frac{x_0^TM^{2k+1}x_0}{x_0M^{2k}x_0}= \frac{x_i^TM^{2l+1}x_i}{x_iM^{2l}x_i},
    \end{equation*}  
    where $l:=k-i$. Denote the difference occurred by rounding as $y$, i.e., after rounding we have $x_i+y$ and since we round after $\Theta(\log n)$ bits, we have $|(y)_v| \le \delta |(x_i)_v|$ for some $\delta = n^{-\Theta(1)}$. In particular $||y|| \le \delta ||x_i||$. 

    Now, let use make some observations. First of all, 
    \begin{align*}
        |\langle x_i,v_n\rangle| &\ge \mu_n^i |\langle x_0,v_n\rangle| \ge \mu_n^i \left(\frac{3}{16}\right)^2.\\
        ||x_i|| &\le \mu_n^i ||x_0||=\mu_n^i \sqrt n. 
    \end{align*}

    Next, we use this to bound our main inner product:
    \begin{align*}
        x_i^TM^{2l+1}x_i &\ge \mu_n^{2l+1}\langle x_i,v_n\rangle^2 \ge \mu_n^{2k+1}\cdot \frac{3}{16}.
        \intertext{Or equivalently, $\mu_n^{2k+1} \le \Theta(1) x_i^TM^{2l+1}x_i$. Next, we use Cauchy-Schwarz to express the incurred error in terms of this inner product.}
        |y^TM^{2l+1}x_i| &\le ||y||\cdot ||M^{2l+1}x_i|| \le \delta ||x_i||\mu_n^{2l+1} \le \delta \mu^{2k+1}\sqrt n\le \Theta(\sqrt n) x_i^TM^{2l+1}x_i\\
        |x_i^TM^{2l+1}y| &\le  ||M^{2l+1}x_i||\cdot ||y|| \le \delta ||x_i||\mu_n^{2l+1} \le \delta \mu^{2k+1}\sqrt n\le \Theta(\sqrt n) x_i^TM^{2l+1}x_i\\
        |y^TM^{2l+1}y| &\le  ||y||^2\mu_n^{2l+1} \le \delta^2||x_i||\mu_n^{2l+1}\le\delta^2 \mu_n^{2k+1}\sqrt n\le\delta\mu_n^{2k+1}\le \Theta(\sqrt n) x_i^TM^{2l+1}x_i,
    \end{align*}

Now, we are ready to consider $\frac{(x_i+y)^TM^{2l+1}(x_i+y)}{(x_i+y)M^{2l}(x_i+y)}$. 
   We do this in two steps: numerator and denominator. First we note for the numerator that

    \begin{align*}
      (x_i+y)^TM^{2l+1}(x_i+y)  &\ge x_i^TM^{2l+1}x_i
      - |y^TM^{2l+1}x_i|-|x_i^TM^{2l+1}y|-|y^TM^{2l+1}y| \\
      &\ge (1-3\delta\Theta(\sqrt n))    x_i^TM^{2l+1}x_i.
    \end{align*}

    For the denominator, we obtain via similar arguments that 
    \begin{equation*}
       (x_i+y)M^{2l}(x_i+y) \le (1+3\delta\Theta(\sqrt n)) x_iM^{2l}x_i. 
    \end{equation*}

    So in total we have 

    \begin{equation*}
        \frac{(x_i+y)^TM^{2l+1}(x_i+y)}{(x_i+y)M^{2l}(x_i+y)}  \ge \frac{1-3\delta\Theta(\sqrt n)}{1+3\delta\Theta(\sqrt n)}\cdot \frac{x_i^TM^{2l+1}x_i}{x_iM^{2l}x_i}. 
    \end{equation*}
    For $\delta \le n^{-\Theta(1)}$, we obtain 
    \begin{equation*}
        \frac{(x_i+y)^TM^{2l+1}(x_i+y)}{(x_i+y)M^{2l}(x_i+y)}  \ge (1-n^{-\Theta(1)})\cdot \frac{x_i^TM^{2l+1}x_i}{x_iM^{2l}x_i}. 
    \end{equation*}

    Similar arguments give the upper bound with $(1+n^{-\Theta(1)})$. 
\end{proof}





\end{document}